\theoremstyle{plain}
\DeclareMathOperator{\sign}{sign}
\numberwithin{equation}{section}
\theoremstyle{plain}
\newtheorem{theorem}{Theorem}
\newtheorem{lemma}{Lemma}
\newtheorem{remark}{Remark}
\begin{document}
\title{Convex and non-convex regularization methods \\ for spatial point processes intensity estimation}
\author[1]{Achmad Choiruddin}
\author[1, 2]{Jean-Fran\c cois Coeurjolly}
\author[1]{Fr\'ed\'erique Letu\'e}
\affil[1]{Laboratory Jean Kuntzmann, Department of Probability and Statistics, Universit\'e Grenoble Alpes, France}
\affil[2]{Department of Mathematics, Universit\'e du Qu\'ebec \`a Montr\'eal (UQAM), Canada}
\maketitle

\begin{abstract}
This paper deals with feature selection procedures for spatial point processes intensity estimation. We consider regularized versions of estimating equations based on Campbell theorem derived from two classical functions: Poisson likelihood and logistic regression likelihood. We provide general conditions on the spatial point processes and on penalty functions which ensure consistency, sparsity and asymptotic normality. We discuss the numerical implementation and assess finite sample properties in a simulation study. Finally, an application to tropical forestry datasets illustrates the use of the proposed methods.
\end{abstract}

\section{Introduction}
Spatial point pattern data arise in many contexts where interest lies in describing the distribution of an event in space. Some examples include the locations of trees in a forest, gold deposits mapped in a geological survey, stars in a cluster star, animal sightings, locations of some specific cells in retina, or road accidents \citep[see e.g.][]{moller2003statistical, illian2008statistical, baddeley2015spatial}. Interest in methods for analyzing spatial point pattern data is rapidly expanding accross many fields of science, notably in ecology, epidemiology, biology, geosciences, astronomy, and econometrics.

One of the main interests when analyzing spatial point pattern data is to estimate the intensity which characterizes the probability that a point (or an event) occurs in an infinitesimal ball around a given location. In practice, the intensity is often assumed to be a parametric function of some measured covariates \citep[e.g.][]{waagepetersen2007estimating, guan2007thinned, moller2007modern, waagepetersen2008estimating, waagepetersen2009two, guan2010weighted, coeurjolly2014variational}. In this paper, we assume that the intensity function $\rho$ is parameterized by a vector $\boldsymbol \beta$ and has a log-linear specification
\begin{align}
\label{intensity function}
\rho (u;\boldsymbol \beta)=\exp(\boldsymbol \beta^\top \mathbf{z}(u)),
\end{align}
where $\mathbf{z}(u)=\{z_1(u),\ldots,z_p(u)\}^\top$ are the $p$ spatial covariates measured at location $u$ and $\boldsymbol \beta=\{\beta_1,\ldots,\beta_p\}^\top$ is a real $p$-dimensional parameter. When the intensity is a function of many variables, covariates selection becomes inevitable. 

Variable  selection  in  regression  has  a  number  of  purposes:  provide  regularization for good estimation, obtain good prediction, and identify clearly the important variables \citep[e.g.][]{fan2010selective,mazumder2011sparsenet}. Identifying a set of relevant features from a list of many features is in general combinatorially hard and computationally intensive. In this context, convex relaxation techniques such as lasso \citep{tibshirani1996regression} have been effectively used for variable selection and parameter estimation simultaneously. The lasso procedure aims at minimizing:
\begin{align*}
-\log L(\boldsymbol \beta)+\lambda\|\boldsymbol \beta\|_1
\end{align*}
where $L(\boldsymbol \beta)$ is the likelihood function for some model of interest. The $\ell_1$ penalty shrinks coefficients towards zero, and can also set coefficients to be exactly zero. In the context of variable selection, the lasso is often thought of as a convex surrogate for the best-subset selection problem:
\begin{align*}
-\log L(\boldsymbol \beta)+\lambda \| \boldsymbol \beta\|_0.
\end{align*}
The $\ell_0$ penalty $\| \boldsymbol \beta\| _0={\sum_{i=1}^p \mathbb{I}(|\beta_i|>0)}$ penalizes the number of nonzero coefficients in the model.

Since lasso can be suboptimal in model selection for some cases \citep[e.g.][]{fan2001variable, zou2006adaptive, zhang2008sparsity}, many regularization methods then have been developped, motivating to go beyond $\ell_1$ regime to more aggressive non-convex penalties which bridges the gap between $\ell_1$ and $\ell_0$ such as SCAD \citep{fan2001variable} and MC+ \citep{zhang2010nearly}.

More recently, there were several works on implementing variable selection for spatial point processes in order to reduce variance inflation from overfitting and bias from underfitting. \cite{thurman2014variable} focused on using adaptive lasso to select variables for inhomogeneous Poisson point processes. This study then later was extended to the clustered spatial point processes by \cite{thurman2015regularized} who established the asymptotic properties of the estimates in terms of consistency, sparsity, and normality distribution. They also compared their results employing adaptive lasso to SCAD and adaptive elastic net in the simulation study and application, using both regularized weighted and unweighted estimating equations derived from the Poisson likelihood. \cite{yue2015variable} considered modelling spatial point data with Poisson, pairwise interaction point processes, and Neyman-Scott cluster models, incorporated lasso, adaptive lasso, and elastic net regularization methods into generalized linear model framework for fitting these point models. Note that the study by \cite{yue2015variable} also used an estimating equation derived from the Poisson likelihood. However, \cite{yue2015variable} did not provide the theoretical study in detail. Although, in application, many penalty functions have been employed to regularization methods for spatial point processes intensity estimation, the theoretical study is still restricted to some specific penalty functions.

In this paper, we propose regularized versions of estimating equations based on Campbell formula derived from the Poisson and the logistic regression likelihoods to estimate the intensity of the spatial point processes. We consider both convex and non-convex penalty functions. We provide general conditions on the penalty function to ensure an oracle property and a central limit theorem. Thus, we extend the work by \cite{thurman2015regularized} and obtain the theoretical results for more general penalty functions and under less restrictive assumptions on the asymptotic covariance matrix (see Remark~\ref{HL}). The logistic regression method proposed by \cite{baddeley2014logistic} is as easy to implement as the Poisson likelihood method, but is less biased since it does not require deterministic numerical approximation. We prove that the estimates obtained by regularizing the logistic regression likelihood can also satisfy asymptotic properties (see Remark \ref{logistic}). Our procedure is straightforward to implement since we only need to combine the $\texttt{spatstat}$  $\texttt{R}$ package with the two $\texttt{R}$ packages $\texttt{glmnet}$ and $\texttt{ncvreg}$.

The remainder of the paper is organized as follows. Section~\ref{sec2} gives backgrounds on spatial point processes. Section~\ref{sec3} describes standard parameter estimation methods when there is no regularization, while regularization methods are developed in Section~\ref{sec4}. Section~\ref{sec:num} develops numerical details induced by the methods introduced in Sections~\ref{sec3}-\ref{sec4}. Asymptotic properties following the work by \cite{fan2001variable} for generalized linear models are presented in Section~\ref{sec:asy}. Section~\ref{simul} investigates the finite-sample properties of the proposed method in a simulation study, followed by an application to tropical forestry datasets in Section~\ref{sec8}, and finished by conclusion and discussion in Section~\ref{sec9}. Proofs of the main results are postponed to Appendices \ref{sec:auxLemma}-\ref{proof2}.

\section {Spatial point processes}
\label{sec2}

Let $\mathbf{X}$ be a spatial point process on $\mathbb{R}^d$. Let $D \subset \mathbb{R}^d$ be a compact set of Lebesgue measure $|D|$ which will play the role of the observation domain. We view $\mathbf{X}$ as a locally finite random subset of $\mathbb{R}^d$, i.e. the random number of points of $\mathbf{X}$ in $B$, $N(B)$, is almost surely finite whenever $B \subset \mathbb{R}^d$ is a bounded region. Suppose $\mathbf{x}=\{x_1, x_2, \ldots, x_m\}$ denotes a realization of $\mathbf{X}$ observed within a bounded region $D$, where $x_i,i=1, \ldots, m$ represent the locations of the observed points, and $m$ is the number of points. Note that $m$ is random and $0 \leq m < \infty$. If $m=0$ then $\mathbf{x}=\emptyset$ is the empty point pattern in D. For further background material on spatial point processes, see for example \cite{moller2003statistical}.

\subsection{Moments}

The first and second-order properties of a point process are described by intensity measure and second-order factorial moment measure. First-order properties of a point process indicate the spatial distribution events in domain of interest. The intensity measure $\mu$ on $\mathbb{R}^d$  is given by
\begin{align*}
\mu (B)=\mathbb{E} N(B), \mbox{  } B \subseteq \mathbb{R}^d.
\end{align*}
If the intensity measure $\mu$ can be written as
\begin{align*}
\mu (B)=\int_B \rho(u)\mathrm{d}u, \mbox{  } B \subseteq \mathbb{R}^d, 
\end{align*}
where $\rho$ is a nonnegative function, then $\rho$ is called the intensity function. If $\rho$ is constant, then $\mathbf{X}$ is said to be homogeneous or first-order stationary with intensity $\rho$. Otherwise, it is said to be inhomogeneous. We may interpret $\rho(u) \mathrm{d}u$ as the probability of occurence of a point in an infinitesimally small ball with centre $u$ and volume $\mathrm{d}u$.

Second-order properties of a point process indicate the spatial coincidence of events in the domain of interest. The second-order factorial moment measure $\alpha^{(2)}$ on $\mathbb{R}^d \times \mathbb{R}^d$ is given by
\begin{align*}
\alpha^{(2)} (C)=\mathbb{E}{\sum_{u,v \in \mathbf{X}}^{\neq} \mathbb{I}[(u,v) \in C]}, \mbox{  } C \subseteq  \mathbb{R}^d \times \mathbb{R}^d.
\end{align*}
where the $\neq$ over the summation sign means that the sum runs over all pairwise different points $u, v$ in $\mathbf{X}$, and $\mathbb{I}[.]$ is the indicator function. If the second-order factorial moment measure $\alpha^{(2)}$ can be written as
\begin{align*}
\alpha^{(2)} (C)=\int \int \mathbb{I}[(u,v) \in C] \rho^{(2)}(u,v)\mathrm{d}u\mathrm{d}v, \mbox{  } C \subseteq  \mathbb{R}^d \times \mathbb{R}^d,
\end{align*}
where $\rho^{(2)}$ is a nonnegative function, then $\rho^{(2)}$ is called the second-order product density. Intuitively, $\rho^{(2)}(u,v)\mathrm{d}u\mathrm{d}v$ is the probability for observing a pair of points from $\mathbf{X}$ occuring jointly in each of two infinitesimally small balls with centres $u,v$ and volume $\mathrm{d}u, \mathrm{d}v$. Fore more detail description of moment measures of any order, see appendix C in \cite{moller2003statistical}.

Suppose $\mathbf{X}$ has intensity function $\rho$ and second-order product density $\rho^{(2)}$. Campbell theorem \citep[see e.g.][]{moller2003statistical} states that, for any function $k: \mathbb{R}^d \to [0,\infty)$ or $k: \mathbb{R}^d \times \mathbb{R}^d \to [0,\infty)$
\begin{align}
&\mathbb{E} {\sum_{u \in \mathbf{X}} k(u)}={\int k(u) \rho (u)\mathrm{d}u} \label{eq:campbell} \\
&\mathbb{E} {\sum_{u,v \in \mathbf{X}}^{\neq} k(u,v)}=\int{\int k(u,v) \rho^{(2)} (u,v)\mathrm{d}u \mathrm{d}v} \label{eq:campbell2}.
\end{align}

In order to study whether a point process deviates from independence (i.e., Poisson point process), we often consider the pair correlation function given by
\begin{align*}
g(u,v)=\frac{\rho^{(2)}(u,v)}{\rho(u)\rho(v)}
\end{align*}
when both $\rho$ and $\rho^{(2)}$ exist with the convention $0/0=0$. For a Poisson point process (Section \ref{sec:pois}), we have $\rho^{(2)}(u,v)=\rho(u)\rho(v)$ so that $g(u,v)=1$. If, for example, $g(u,v)>1$ (resp. $g(u,v)<1$), this indicates that pair of points are more likely (resp. less likely) to occur at locations $u,v$ than for a Poisson point process with the same intensity function as $\mathbf{X}$. In the same spirit, we can define $\rho^{(k)}$ the $k$-th order intensity function \cite[see][for more details]{moller2003statistical}. If for any $u,v$, $g(u,v)$ depends only on $u-v$, the point process $\mathbf X$ is said to be second-order reweighted stationary.

\subsection{Modelling the intensity function}
\label{sec:intensity}
We discuss spatial point process models specified by deterministic or random intensity function. Particularly, we consider two important model classes, namely Poisson and Cox processes. Poisson point processes serve as a tractable model class for no interaction or complete spatial randomness. Cox processes form major classes for clustering or aggregation. For conciseness, we focus on the two later classes of models. We could also have presented determinantal point processes \citep[e.g.][]{lavancier2015determinantal} which constitute an interesting class of repulsive point patterns with explicit moments. This has not been further investigated for sake of brevity. In this paper, we focus on log-linear models of the intensity function given by (\ref{intensity function}).

\subsubsection{Poisson point process}
\label{sec:pois}
A point process $\mathbf{X}$ on $D$ is a Poisson point process with intensity function $\rho$, assumed to be locally integrable, if the following conditions are satisfied:
\begin{enumerate}
\item for any $B \subseteq D$ with $0 \leq \mu(B)< \infty$, $N(B) \sim Poisson(\mu(B))$,
\item conditionally on $N(B)$, the points in $\mathbf{X} \cap B$ are i.i.d. with joint density proportional to $\rho(u)$, $u \in B$.
\end{enumerate}
A Poisson point process with a log-linear intensity function is also called a modulated Poisson point process \citep[e.g.][]{moller2007modern,waagepetersen2008estimating}. In particular, for Poisson point processes, $\rho^{(2)}(u,v)=\rho(u)\rho(v)$, and $g(u,v)=1, \forall u,v \in D$.

\subsubsection{Cox processes}
\label{sec:cox}
A Cox process is a natural extension of a Poisson point process, obtained by considering the intensity function of the Poisson point process as a realization of a random field. Suppose that $\mathbf{\Lambda}=\{\mathbf{\Lambda}(u):u \in D\}$ is a nonnegative random field. If the conditional distribution of $\mathbf{X}$ given $\mathbf{\Lambda}$ is a Poisson point process on $D$ with intensity function $\mathbf{\Lambda}$, then $\mathbf{X}$ is said to be a Cox process driven by $\mathbf{\Lambda}$ \citep[see e.g.][]{moller2003statistical}. There are several types of Cox processes. Here, we consider two types of Cox processes: a Neyman-Scott point process and a log Gaussian Cox process.\\

\textbf{\textit{Neyman-Scott point processes.}}
Let $\mathbf{C}$ be a stationary Poisson process (mother process) with intensity $\kappa>0$. Given $\mathbf{C}$, let $\mathbf{X}_c, c \in \mathbf{C}$, be independent Poisson processes (offspring processes) with intensity function
\begin{align*}
\rho_c(u;\boldsymbol \beta)=\exp (\boldsymbol \beta^\top \mathbf{z}(u)) k (u-c; \omega)/ \kappa ,
\end{align*}
where $k$ is a probability density function determining the distribution of offspring points around the mother points parameterized by $\omega$. Then $\mathbf{X}= \cup_{c \in \mathbf{C}} \mathbf{X}_c$ is a special case of an {\em inhomogeneous Neyman-Scott point process} with mothers $\mathbf{C}$ and offspring $\mathbf{X}_c, c \in \mathbf{C}$. The point process $\mathbf{X}$ is a Cox process driven by  $\mathbf{\Lambda}(u)=\exp (\boldsymbol \beta^\top \mathbf{z}(u)) {\sum_{c \in \mathbf{C}} k(u-c, \omega})/ \kappa$ \citep[e.g.][]{waagepetersen2007estimating, coeurjolly2014variational} and we can verify that the intensity function of $\mathbf{X}$ is indeed
\begin{align*}
\rho(u;\boldsymbol \beta)=\exp (\boldsymbol \beta^\top \mathbf{z}(u)).
\end{align*}
One example of {\em Neyman-Scott point process} is the {\em Thomas process} where
\begin{align*}
k(u)=(2 \pi \omega^2)^{-d/2} \exp(-\|u\|^2/(2 \omega^2))
\end{align*}
is the density for $N_d(0,\omega^2 \mathbf{I}_d)$. Conditionally on a parent event at location $c$, children events are normally distributed around $c$. Smaller values of $\omega$ correspond to tighter clusters, and smaller values of $\kappa$ correspond to fewer number of parents. The parameter vector $\boldsymbol \psi=(\kappa,\omega)^\top$ is referred to as the interaction parameter as it modulates the spatial interaction (or, dependence) among events.\\

\textbf{\textit{Log Gaussian Cox process.}}
Suppose that $\log \mathbf{\Lambda}$ is a Gaussian random field. Given $\mathbf{\Lambda}$, the point process $\mathbf{X}$ follows Poisson process. Then $\mathbf{X}$ is said to be a log Gaussian Cox process driven by $\mathbf{\Lambda}$ \citep{moller2003statistical}. If the random intensity function can be written as
\begin{align*}
\log \mathbf{\Lambda}(u)=\boldsymbol \beta^\top \mathbf{z}(u)+\boldsymbol \phi(u)- \sigma^2/2,
\end{align*}
where $\boldsymbol \phi$ is a zero-mean stationary Gaussian random field with covariance function $c(u,v;\boldsymbol \psi)=\sigma^2R(v-u;\zeta)$ which depends on parameter $\boldsymbol \psi=(\sigma^2,\zeta)^\top$ \citep{moller2007modern,  coeurjolly2014variational}. The intensity function of this log Gaussian Cox process is indeed given by
\begin{align*}
\rho(u;\boldsymbol \beta)=\exp(\boldsymbol \beta^\top \mathbf{z}(u)).
\end{align*}
One example of correlation function is the exponential form \citep[e.g.][]{waagepetersen2009two}
\begin{align*}
R(v-u;\zeta)=\exp (-\|u-v\| / \zeta), \mbox{ for } \zeta>0.
\end{align*}
Here, $\boldsymbol \psi=(\sigma^2,\zeta)^\top$ constitutes the interaction parameter vector, where $\sigma^2$ is the variance and $\zeta$ is the correlation scale parameter. 

\section{Parametric intensity estimation}
\label{sec3}
One of the standard ways to fit models to data is by maximizing the likelihood of the model for the data. While maximum likelihood method is feasible for parametric Poisson point process models (Section \ref{sec:MLE}), computationally intensive Markov chain Monte Carlo (MCMC) methods are needed otherwise \citep{moller2003statistical}. As MCMC methods are not yet straightforward to implement, estimating equations based on Campbell theorem have been developed \citep[see e.g.][]{waagepetersen2007estimating, moller2007modern, waagepetersen2008estimating, guan2010weighted, baddeley2014logistic}. We review the estimating equations derived from the Poisson likelihood in Section \ref{sec:ee}-\ref{sec:wee} and from the logistic regression likelihood in Section \ref{logilike}.

\subsection{Maximum likelihood estimation}
\label{sec:MLE}
For an inhomogeneous Poisson point process with intensity function $\rho$ parameterized by $\boldsymbol \beta$, the likelihood function is
\begin{align*}
L(\boldsymbol \beta)={\prod_{u \in \mathbf{X} \cap D} \rho(u;\boldsymbol \beta)}\exp\left({\int_D \big(1- \rho(u;\boldsymbol \beta)\big) \mathrm{d}u}\right),
\end{align*}
and the log-likelihood function of $\boldsymbol \beta$ is
\begin{align}
\label{eq:1}
\ell(\boldsymbol \beta)={\sum_{u \in \mathbf{X}  \cap D} \log\rho(u;\boldsymbol \beta)} - {\int_D \rho(u;\boldsymbol \beta)\mathrm{d}u},
\end{align}
where we have omitted the constant term ${\int_D 1\mathrm{d}u}=|D|$. As the intensity function has log-linear form (\ref{intensity function}), (\ref{eq:1}) reduces to
\begin{align*}
\ell(\boldsymbol \beta)={\sum_{u \in \mathbf{X} \cap D} \boldsymbol \beta^\top \mathbf{z}(u)} - {\int_D \exp(\boldsymbol \beta^\top \mathbf{z}(u))\mathrm{d}u}.
\end{align*}
\cite{rathbun1994asymptotic} showed that the maximum likelihood estimator is consistent, asymptotically normal and asymptotically efficient as the sample region goes to $\mathbb{R}^d$.

\subsection{Poisson likelihood}
\label{sec:ee}
Let $\boldsymbol \beta_0$ be the true parameter vector. By applying Campbell theorem (\ref{eq:campbell}) to the score function, i.e. the gradient vector of $\ell(\boldsymbol \beta)$ denoted by $\ell^{(1)}(\boldsymbol \beta)$, we have
\begin{align*}
\mathbb{E} \ell^{(1)}(\boldsymbol \beta)&=\mathbb{E}{\sum_{u \in \mathbf{X} \cap D} \mathbf{z}(u)} - {\int_D \mathbf{z}(u) \exp( \boldsymbol \beta^\top \mathbf{z}(u))\mathrm{d}u}\\
&={\int_D \mathbf{z}(u) \exp( \boldsymbol \beta_0^\top \mathbf{z}(u))\mathrm{d}u} - {\int_D \mathbf{z}(u) \exp( \boldsymbol \beta^\top \mathbf{z}(u))\mathrm{d}u}\\
&={\int_D \mathbf{z}(u) (\exp( \boldsymbol \beta_0^\top \mathbf{z}(u))- \exp( \boldsymbol \beta^\top \mathbf{z}(u)))\mathrm{d}u}=0
\end{align*}
when  $\boldsymbol \beta=\boldsymbol \beta_0$. So, the score function of the Poisson log-likelihood appears to be an unbiased estimating equation, even though $\mathbf{X}$ is not a Poisson point process. The estimator maximizing (\ref{eq:1}) is referred to as the Poisson estimator. The properties of the Poisson estimator have been carefully studied. \cite{schoenberg2005consistent} showed that the Poisson estimator is still consistent for a class of spatio-temporal point process models. The asymptotic normality for a fixed observation domain was obtained by \cite{waagepetersen2007estimating} while \cite{guan2007thinned} established asymptotic normality under an increasing domain assumption and for suitable mixing point processes.

Regarding the parameter $\boldsymbol \psi$ (see Section \ref{sec:cox}), \cite{waagepetersen2009two} studied a two-step procedure to estimate both $\boldsymbol \beta$ and $\boldsymbol \psi$, and they proved that, under certain mixing conditions, the parameter estimates $(\boldsymbol {\hat \beta}, \boldsymbol {\hat \psi})$ enjoy the properties of consistency and asymptotic normality.

\subsection{Weighted Poisson likelihood}
\label{sec:wee}
Although the estimating equation approach derived from the Poisson likelihood is simpler and faster to implement than maximum likelihood estimation, it potentially produces a less efficient estimate than that of maximum likelihood \citep{waagepetersen2007estimating,guan2010weighted} because information about interaction of events is ignored. To regain some lack of efficiency, \cite{guan2010weighted} proposed a weighted Poisson log-likelihood function given by
\begin{equation}
\label{eq:wee}	
	\ell(w;\boldsymbol \beta)={\sum_{u \in \mathbf{X} \cap D} w(u)\log\rho(u;\boldsymbol \beta)} - {\int_D w(u)\rho(u; \boldsymbol\beta)\mathrm{d}u},
\end{equation}
where $w(\cdot)$ is a weight surface. By regarding (\ref{eq:wee}), we see that a larger weight $w(u)$ makes the observations in the infinitesimal region $\mathrm{d}u$ more influent. By Campbell theorem, $\ell^{(1)}(w;\boldsymbol \beta)$ is still an unbiased estimating equation. In addition, \cite{guan2010weighted} proved that, under some conditions, the parameter estimates are consistent and asymptotically normal.

\cite{guan2010weighted} showed that a weight surface $w(\cdot)$ that minimizes the trace of the asymptotic variance-covariance matrix of the estimates maximizing (\ref{eq:wee}) can result in more efficient estimates than Poisson estimator. In particular, the proposed weight surface is
\begin{align*}
w(u)=\{1+\rho(u)f(u)\}^{-1},
\end{align*}
where $f(u)={\int_D \{g(\|v-u\|;\boldsymbol \psi)-1\}\mathrm{d}u}$ and $g(\cdot)$ is the pair correlation function. For a Poisson point process, note that $f(u)=0$ and hence $w(u)=1$, which reduces to maximum likelihood estimation. For general point processes, the weight surface depends on both the intensity function and the pair correlation function, thus incorporates information on both inhomogeneity and dependence of the spatial point processes. When clustering is present so that $g(v-u)>1$, then $f(u)>0$ and hence the weight decreases with $\rho(u)$. The weight surface can be achieved by setting $\hat w(u)=\{1+\hat \rho(u) \hat f(u)\}^{-1}$. To get the estimate $\hat \rho(u)$, $\boldsymbol \beta$ is substituted by $\boldsymbol {\tilde \beta}$ given by Poisson estimates, that is, $\hat \rho(u)=\rho(u;\boldsymbol {\tilde \beta})$. Alternatively, $\hat \rho(u)$ can also be computed nonparametrically by kernel method. Furthermore, \cite{guan2010weighted} suggessted to approximate $f(u)$ by $K(r)- \pi r^2$, where $K(\cdot)$ is the Ripley's $K-$function estimated by
\begin{align*}
\hat K(r)={\sum_{u,v \in \mathbf{X} \cap D}^{\neq} \frac{\mathbb{I}[\|u-v\|\leq r]}{\hat \rho(u)\hat \rho(v)|D \cap D_{u-v}|}}.
\end{align*}

\cite{guan2015quasi} extended the study by \cite{guan2010weighted} and considered more complex estimating equations. Specifically, $w(u) \mathbf{z}(u)$ is replaced by a function $h(u;\boldsymbol \beta)$ in the derivative of (\ref{eq:wee}) with respect to $\boldsymbol \beta$. The procedure results in a slightly more efficient estimate than the one obtained from (\ref{eq:wee}). However, the computational cost is more important and since we combine estimating equations and penalization methods (see Section \ref{sec:pspp}), we have not considered this extension.

\subsection{Logistic regression likelihood}
\label{logilike}
Although the estimating equations discussed in Section \ref{sec:ee} and \ref{sec:wee} are unbiased, these methods do not, in general, produce unbiased estimator in practical implementations. \cite{waagepetersen2008estimating} and \cite{baddeley2014logistic} proposed another estimating function which is indeed close to the score of the Poisson log-likelihood but is able to obtain less biased estimator than Poisson estimates. In addition, their proposed estimating equation is in fact the derivative of the logistic regression likelihood.

Following \cite{baddeley2014logistic}, we define the weighted logistic regression log-likelihood function by
\begin{align}
\label{eq:logilike2}
\ell(w; \boldsymbol \beta) = & {\sum_{u \in \mathbf{X}  \cap D} w(u) \log \left ( \frac {\rho(u;\boldsymbol \beta)} {\delta(u)+\rho(u;\boldsymbol \beta)} \right )} \nonumber \\
& - {\int_D w(u) \delta(u) \log \left (\frac {\rho(u;\boldsymbol \beta) + \delta(u)} {\delta(u)} \right ) \mathrm{d}u},
\end{align}
where $\delta(u)$ is a nonnegative real-valued function. Its role as well as an explanation of the name 'logistic method' will be explained further in Section \ref{sec:logistic}. Note that the score of (\ref{eq:logilike2}) is an unbiased estimating equation. \cite{waagepetersen2008estimating} showed asymptotic normality for Poisson and certain clustered point processes for the estimator obtained from a similar procedure. Furthermore, the methodology and results were studied by \cite{baddeley2014logistic} considering spatial Gibbs point processes.

To determine the optimal weight surface $w(\cdot)$ for logistic method, we follow \cite{guan2010weighted} who minimized the trace of the asymptotic covariance matrix of the estimates. We obtain the weight surface defined by
\begin{align*}
w(u)=\frac{\rho(u) + \delta(u)}{\delta(u)\{1+\rho(u)f(u)\}},
\end{align*}
where $\rho(u)$ and $f(u)$ can be estimated as in Section \ref{sec:wee}.

\section{Regularization techniques}
\label{sec4}
This section discusses convex and non-convex regularization methods for spatial point process intensity estimation.

\subsection{Methodology}
\label{sec:pspp}
Regularization techniques were introduced as alternatives to stepwise selection for variable selection and parameter estimation. In general, a regularization method attempts to maximize the penalized log-likelihood function $\ell(\boldsymbol \theta)-\eta{\sum_{j=1}^p p_{\lambda_j}(|\theta_j|)}$, where $\ell(\boldsymbol \theta)$ is the log-likelihood function of $\boldsymbol \theta$, $\eta$ is the number of observations, and $p_{\lambda}(\theta)$ is a nonnegative penalty function parameterized by a real number $\lambda \geq 0.$

Let $\ell(w;\boldsymbol \beta)$ be either the weighted Poisson log-likelihood function (\ref{eq:wee}) or the weighted logistic regression log-likelihood function (\ref{eq:logilike2}). In a similar way, we define the penalized weighted log-likelihood function given by
\begin{align}
\label{eq:qwee}
	Q(w;\boldsymbol \beta)=\ell(w;\boldsymbol \beta) - |D|{\sum_{j=1}^p p_{\lambda_j}(|\beta_{j}|)},
\end{align}
where $|D|$ is the volume of the observation domain, which plays the same role as the number of observations $\eta$ in our setting, $\lambda_j$ is a nonnegative tuning parameter corresponding to $\beta_j$ for $j=1,\ldots,p$, and $p_\lambda$ is a penalty function described in details in the next section.

\subsection{Penalty functions and regularization methods}
For any $\lambda \geq 0$, we say that $p_\lambda(\cdot): \mathbb{R}^+ \to \mathbb{R}$ is a penalty function if $p_\lambda$ is a nonnegative function with $p_\lambda(0)=0$. Examples of penalty function are the
\begin{itemize}
\item $\ell_2 \mbox{ norm:	} p_\lambda(\theta)=\frac{1}{2} \lambda \theta^2$,
\item $\ell_1 \mbox{ norm:	} p_\lambda(\theta)=\lambda \theta$,
\item $\mbox{Elastic net:	} \mbox{for }0<\gamma<1, p_\lambda(\theta)= \lambda \{\gamma \theta + \frac{1}{2} (1-\gamma)\theta^2\}$,
\item $\mbox{SCAD:	}\mbox{for any }\gamma>2, p_{\lambda}(\theta)=
   \begin{cases}
   \lambda \theta & \mbox{if } \theta \leq \lambda \\
   \frac{\gamma \lambda \theta- \frac {1}{2} (\theta^2 + \lambda^2)}{\gamma-1} & \mbox{if } \lambda \leq \theta \leq \gamma \lambda \\
   \frac{\lambda^2(\gamma^2-1)}{2(\gamma-1)} & \text{if } \theta \geq \gamma \lambda,
   \end{cases}$
\item $\mbox{MC+:	}\mbox{for any }\gamma>1, p_{\lambda}(\theta)=
   \begin{cases}
   \lambda \theta - \frac{\theta^2}{2 \gamma} & \mbox{if } \theta \leq \gamma \lambda\\
   \frac{1}{2}\gamma \lambda^2 & \text{if } \theta \geq \gamma \lambda.\\
   \end{cases}$
\end{itemize}
The first and second derivatives of the above functions are given by Table \ref{table:1}. It is to be noticed that $p'_\lambda$ is not differentiable at $\theta=\lambda, \gamma \lambda$ (resp. $\theta=\gamma \lambda$) for SCAD (resp. for MC+) penalty.

\setlength{\tabcolsep}{3pt}
\renewcommand{\arraystretch}{1.5}
\begin{table}[!ht]
\caption{The first and the second derivatives of several penalty functions.}
\label{table:1}
\begin{center}
\scalebox{0.79}{
\begin{tabular}{ l l l }
\hline
\hline
Penalty & $p'_\lambda(\theta)$ & $p''_\lambda(\theta)$ \\
\hline
\hline
$\ell_2$ & $ \lambda \theta$ &  $\lambda$\\
$\ell_1$ & $\lambda $ & $0$ \\
Elastic net & $\lambda\{(1-\gamma)\theta+\gamma\}$ & $\lambda (1-\gamma)$\\ 
SCAD & 
   $ \begin{cases}
    \lambda & \mbox{if } \theta \leq \lambda \\
   \frac{\gamma \lambda - \theta}{\gamma-1} & \mbox{if } \lambda \leq \theta \leq \gamma \lambda \\
   0 & \text{if } \theta \geq \gamma \lambda \end{cases}$ &
   $ \begin{cases}
   0 & \mbox{if } \theta < \lambda \\
   \frac{-1}{\gamma-1} & \mbox{if } \lambda <\theta < \gamma \lambda\\
   0 & \text{if } \theta > \gamma \lambda
   \end{cases}$ \\
MC+ & 
   $ \begin{cases}
   \lambda - \frac{\theta}{\gamma} & \mbox{if } \theta \leq \gamma \lambda \\
   0 & \text{if } \theta \geq \gamma \lambda
   \end{cases}$ & 
   $ \begin{cases}
   \frac{-1}{\gamma} & \mbox{if } \theta < \gamma \lambda \\
   0 & \text{if } \theta > \gamma \lambda
   \end{cases}$ \\
\hline
\end{tabular}
}
\end{center}
\end{table}

As a first penalization technique to improve ordinary least squares, ridge regression \citep[e.g.][]{hoerl1988ridge} works by minimizing the residual sum of squares subject to a bound on the $\ell_2$ norm of the coefficients. As a continuous shrinkage method, ridge regression achieves its better prediction through a bias-variance trade-off. Ridge can also be extended to fit generalized linear models. However, the ridge cannot reduce model complexity since it always keeps all the predictors in the model. Then, it was introduced a method called lasso \citep{tibshirani1996regression}, where it employs $\ell_1$ penalty to obtain variable selection and parameter estimation simultaneously. Despite lasso enjoys some attractive statistical properties, it has some limitations in some senses \citep{fan2001variable, zou2005regularization, zou2006adaptive, zhang2008sparsity, zhang2010nearly}, making huge possibilities to develop other methods. In the scenario where there are high correlations among predictors, \cite{zou2005regularization} proposed an elastic net technique which is a convex combination between $\ell_1$ and $\ell_2$ penalties. This method is particularly useful when the number of predictors is much larger than the number of observations since it can select or eliminate the strongly correlated predictors together.

The lasso procedure suffers from nonnegligible bias and does not satisfy an oracle property asymptotically \citep{fan2001variable}. \cite{fan2001variable} and  \cite{zhang2010nearly}, among others, introduced non-convex penalties to get around these drawbacks. The idea is to bridge the gap between $\ell_0$ and $\ell_1$, by trying to keep unbiased the estimates of nonzero coefficients and by shrinking the less important variables to be exactly zero. The rationale behind the non-convex penalties such as SCAD and MC+ can also be understood by considering its first derivative (see Table \ref{table:1}). They start by applying the similar rate of penalization as the lasso, and then continuously relax that penalization until the rate of penalization drops to zero. However, employing non-convex penalties in regression analysis, the main challenge is often in the minimization of the possible non-convex objective function when the non-convexity of the penalty is no longer dominated by the convexity of the likelihood function. This issue has been carefully studied. \cite{fan2001variable} proposed the local quadratic approximation (LQA). \cite{zou2008one} proposed a local linear approximation (LLA) which yields an objective function that can be optimized using least angle regression (LARS) algorithm \citep{efron2004least}. Finally, \cite{breheny2011coordinate} and \cite{mazumder2011sparsenet} investigated the application of coordinate descent algorithm to non-convex penalties.

\setlength{\tabcolsep}{3pt}
\renewcommand{\arraystretch}{1.5}
\begin{table}[htb]
\caption{Details of some regularization methods.}
\label{table:rm}
\centering
\begin{threeparttable}
\scalebox{0.79}{
\begin{tabular}{ll}
\hline
\hline
Method & ${\sum_{j=1}^p p_{\lambda_j}(|\beta_j|)}$ \\
\hline
\hline
Ridge &  ${\sum_{j=1}^p \frac{1}{2} \lambda \beta_j^2}$ \\
Lasso & ${\sum_{j=1}^p \lambda |\beta_j|} $ \\
Enet\tnote{*} & $ {\sum_{j=1}^p \lambda \{\gamma |\beta_j|+\frac{1}{2} (1-\gamma)\beta_j^2\}}$\\ 
AL\tnote{*} & ${\sum_{j=1}^p \lambda_j |\beta_j|} $ \\
Aenet\tnote{*} & ${\sum_{j=1}^p \lambda_j \{\gamma |\beta_j|+\frac{1}{2} (1-\gamma)\beta_j^2\}}$\\ 
SCAD & ${\sum_{j=1}^p p_{\lambda}(|\beta_j|)}, \mbox{ with } p_{\lambda}(\theta)=
\begin{cases}
\lambda \theta & \mbox{if } (\theta \leq \lambda)\\
\frac{\gamma \lambda \theta - \frac {1}{2} (\theta^2 + \lambda^2)}{\gamma-1} & \mbox{if } (\lambda \leq \theta \leq \gamma \lambda)\\
\frac{\lambda^2(\gamma^2-1)}{2(\gamma-1)} & \mbox{if } (\theta \geq \gamma \lambda)\\
\end{cases}$\\
MC+ &  $ {\sum_{j=1}^p \Big \{\Big( \lambda |\beta_j| - \frac{\beta_j^2}{2 \gamma}\Big) \mathbb{I}(|\beta_j| \leq  \gamma \lambda)+\frac{1}{2}\gamma \lambda^2 \mathbb{I}(|\beta_j| \geq \gamma \lambda)\Big \} }$\\
\hline
\end{tabular}
}
\begin{tablenotes}
\item[*] Enet, AL and Aenet, respectively, stand for elastic net, adaptive lasso and adaptive elastic net
\end{tablenotes}
\end{threeparttable}
\end{table}

In (\ref{eq:qwee}), it is worth emphasizing that we allow each direction to have a different regularization parameter. By doing this, the $\ell_1$ and elastic net penalty functions are extended to the adaptive lasso \citep[e.g.][]{zou2006adaptive} and adaptive elastic net \citep[e.g.][]{zou2009adaptive}. Table \ref{table:rm} details the regularization methods considered in this study.

\section{Numerical methods}
\label{sec:num}
We present numerical aspects in this section. For nonregularized estimation, there are two approaches that we consider. Weighted Poisson regression is explained in Section \ref{sec:ue}, while logistic regression is reviewed in Section \ref{sec:logistic}. Penalized estimation procedure is done by employing coordinate descent algorithm (Section \ref{sec:cda}). We separate the use of the convex and non-convex penalties in Section \ref{sec:conv} and \ref{sec:nconv}.

\subsection{Weighted Poisson regression}
\label{sec:ue}
\cite{berman1992approximating} developed a numerical quadrature method to approximate maximum likelihood estimation for an inhomogeneous Poisson point process. They approximated the likelihood by a finite sum that had the same analytical form as the weighted likelihood of generalized linear model with Poisson response. This method was then extended to Gibbs point processes by \cite{baddeley2000practical}. Suppose we approximate the integral term in (\ref{eq:1}) by Riemann sum approximation
\begin{align*}
{\int_D \rho(u;\boldsymbol \beta) \mathrm{d}u} \approx {\sum_i^{M} v_i \rho (u_i;\boldsymbol \beta)}
\end{align*}
where $u_i, i=1,\ldots,M$ are points in $D$ consisting of the $m$ data points and $M-m$ dummy points. The quadrature weights $v_i>0$ are such that ${\sum_i v_i}=|D|$. To implement this method, the domain is firstly partitioned into $M$ rectangular pixels of equal area, denoted by $a$. Then one dummy point is placed in the center of the pixel. Let $\Delta_i$ be an indicator whether the point is an event of point process ($\Delta_i=1$) or a dummy point ($\Delta_i=0$). Without loss of generality, let $u_i,\ldots,u_m$ be the observed events and $u_{m+1},\ldots,u_M$ be the dummy points. Thus, the Poisson log-likelihood function (\ref{eq:1}) can be approximated and rewritten as

\begin{align}
\label{eq:qua}
\ell(\boldsymbol \beta) \approx {\sum_i^{M} v_i \{y_i \log\rho(u_i;\boldsymbol \beta) - \rho (u_i;\boldsymbol \beta)\}},
\mbox{ where } y_i=v_i^{-1} \Delta_i.
\end{align}
Equation (\ref{eq:qua}) corresponds to a quasi Poisson log-likelihood function. Maximizing (\ref{eq:qua}) is equivalent to fitting a weighted Poisson generalized linear model, which can be performed using standard statistical software. Similarly, we can approximate the weighted Poisson log-likelihood function (\ref{eq:wee}) using numerical quadrature method by
\begin{align}
\label{eq:wqua}
\ell(w; \boldsymbol \beta) \approx {\sum_i^{M} w_i v_i \{y_i \log\rho(u_i;\boldsymbol \beta) - \rho (u_i;\boldsymbol \beta)\}}.
\end{align}
where $w_i$ is the value of the weight surface at point $i$. The estimate $\hat w_i$ is obtained as suggested by \cite{guan2010weighted}. The similarity beetween $(\ref{eq:qua})$ and $(\ref{eq:wqua})$ allows us to compute the estimates using software for generalized linear model as well. This fact is in particular exploited in the $\texttt{ppm}$ function in the $\mathtt{spatstat}$ $\texttt{R}$ package \citep{baddeley2004spatstat,baddeley2015spatial} with option $\texttt{method="mpl"}$. To make the presentation becomes more general, the number of dummy points is denoted by $\texttt{nd}^2$ for the next sections.

\subsection{Logistic regression}
\label{sec:logistic}
To perform well, the Berman-Turner approximation often requires a quite large number of dummy points. Hence, fitting such generalized linear models can be computationally intensive, especially when dealing with a quite large number of points. When the unbiased estimating equations are approximated using deterministic numerical approximation as in Section \ref{sec:ue}, it does not always produce unbiased estimator. To achieve unbiased estimator, we estimate (\ref{eq:logilike2}) by
\begin{align}
\label{eq:logilike}
\ell(w; \boldsymbol \beta) \approx & {\sum_{u \in \mathbf{X}  \cap D} w(u) \log \left ( \frac {\rho(u;\boldsymbol \beta)} {\delta(u)+\rho(u;\boldsymbol \beta)} \right )} + {\sum_{u \in \mathcal{D}  \cap D} w(u) \log \left (\frac {\delta(u)} {\rho(u;\boldsymbol \beta) + \delta(u)} \right )},
\end{align}
where $\mathcal{D}$ is dummy point process independent of $\mathbf{X}$ and with intensity function $\delta$. The form (\ref{eq:logilike}) is related to the estimating equation defined by \citet[][eq. 7]{baddeley2014logistic}. Besides that, we consider this form since if we apply Campbell theorem to the last term of (\ref{eq:logilike}), we obtain
\begin{align*}
\mathbb{E} {\sum_{u \in \mathcal{D}  \cap D} w(u) \log \left (\frac {\delta(u)} {\rho(u;\boldsymbol \beta) + \delta(u)} \right )}= {\int_D w(u) \delta(u) \log \left (\frac {\rho(u;\boldsymbol \beta) + \delta(u)} {\delta(u)} \right ) \mathrm{d}u},
\end{align*}
which is exactly what we have in the last term of (\ref{eq:logilike2}). In addition, conditional on $\mathbf{X} \cup \mathcal{D}$, (\ref{eq:logilike}) is the weighted likelihood function for Bernoulli trials, $y(u)=1 \{u \in \mathbf{X} \}$ for $u \in \mathbf{X} \cup \mathcal{D}$, with
\begin{align*}
\mathrm{P}\{ y(u)=1 \}=\frac {\rho(u;\boldsymbol \beta)} {\delta(u)+\rho(u;\boldsymbol \beta)}= \frac {\exp \big(- \log \delta(u) + \boldsymbol \beta^\top \mathbf{z}(u) \big)} {1 + \exp \big ( - \log \delta(u) +\boldsymbol \beta^\top \mathbf{z}(u) \big)}.
\end{align*}
Precisely, (\ref{eq:logilike}) is a weighted logistic regression with offset term $- \log \delta$. Thus, parameter estimates can be straightforwardly obtained using standard software for generalized linear models. This approach is in fact provided  in the $\mathtt{spatstat}$ package in $\texttt{R}$ by calling the $\texttt{ppm}$ function with option $\texttt{method="logi"}$ \citep{baddeley2014logistic,baddeley2015spatial}.

In $\texttt{spatstat}$, the dummy point process $\mathcal{D}$ generates $\texttt{nd}^2$ points in average in $D$ from a Poisson, binomial, or stratified binomial point process. \cite{baddeley2014logistic} suggested to choose $\delta(u)=4m/|D|$, where $m$ is the number of points (so, $\texttt{nd}^2=4m$). Furthermore, to determine $\delta$, this option can be considered as a starting point for a data-driven approach \citep[see][for further details]{baddeley2014logistic}.

\subsection{Coordinate descent algorithm}
\label{sec:cda}
LARS algorithm \citep{efron2004least} is a remarkably efficient method for computing an entire path of lasso solutions. For linear models, the computational cost is of order $O(Mp^2)$, which is the same order as a least squares fit. Coordinate descent algorithm \citep{friedman2007pathwise, friedman2010regularization} appears to be a more competitive algorithm for computing the regularization paths by costs $O(Mp)$ operations. Therefore we adopt cyclical coordinate descent methods, which can work really fast on large datasets and can take advantage of sparsity. Coordinate descent algorithms optimize a target function with respect to a single parameter at a time, iteratively cycling through all parameters until convergence criterion is reached. We detail this for some convex and non-convex penalty functions in the next two sections. Here, we only present the coordinate descent algorithm for fitting generalized weighted Poisson regression. A similar approach is used to fit penalized weighted logistic regression.

\subsubsection{Convex penalty functions}
\label{sec:conv}
Since $\ell(w;\boldsymbol \beta)$ given by $(\ref{eq:wqua})$ is a concave function of the parameters, the Newton-Raphson algorithm used to maximize the penalized log-likelihood function can be done using the iteratively reweighted least squares (IRLS) method. If the current estimate of the parameters is $\boldsymbol {\tilde \beta}$, we construct a quadratic approximation of the weighted Poisson log-likelihood function using Taylor's expansion:
\begin{align}
\ell(w;\boldsymbol {\beta}) \approx \ell_Q(w;\boldsymbol {\beta})=- \frac{1}{2M} {\sum_i^M \nu_i (y_i^*-\mathbf{z}_i^\top \boldsymbol{\beta})^2+C(\boldsymbol{\tilde \beta})},
\end{align}
where $C(\boldsymbol{\tilde \beta})$ is a constant, $y_i^*$ are the working response values and $ \nu_i$ are the weights,
\begin{align*}
 \nu_i&=w_i v_i \exp(\mathbf{z}_i^\top \boldsymbol{\tilde \beta})\\
y_i^*&=\mathbf{z}_i^\top \boldsymbol{\tilde \beta}+\frac{y_i - \exp(\mathbf{z}_i^\top \boldsymbol {\tilde \beta})}{\exp(\mathbf{z}_i^\top \boldsymbol {\tilde \beta})}.
\end{align*}

Regularized Poisson linear model works by firstly identifying a decreasing sequence of $\lambda \in [\lambda_{\min},\lambda_{\max}]$, for which starting with minimum value of $\lambda_{\max}$ such that the entire vector $\boldsymbol {\hat \beta} =0$. For each value of $\lambda$, an outer loop is created to compute $\ell_Q(w;\boldsymbol \beta)$ at $\boldsymbol {\tilde \beta}$. Secondly, a coordinate descent method is applied to solve a penalized weighted least squares problem
\begin{align}
\label{eq:glmnet}	
{\displaystyle \min_{\boldsymbol \beta \in \mathbb{R}^{p}} \Omega(\boldsymbol \beta)}={\displaystyle \min_{\boldsymbol \beta \in \mathbb{R}^{p}} \{-\ell_Q(w;\boldsymbol \beta)+\sum_{j=1}^p p_{\lambda_{j}}(|\beta_{j}|)}\}.
\end{align}

The coordinate descent method is explained as follows. Suppose we have the estimate $\tilde \beta_l$ for $l \neq j$, $l,j=1,\ldots,p$. The method consists in partially optimizing ($\ref{eq:glmnet}$) with respect to $\beta_j$, that is
\begin{align*}	
{\displaystyle \min_{\beta_j} \Omega (\tilde \beta_1, \ldots, \tilde \beta_{j-1}, \beta_j, \tilde \beta_{j+1}, \ldots, \tilde \beta_{p})}.
\end{align*}

\cite{friedman2007pathwise} have provided the form of the coordinate-wise update for penalized regression using several penalties such as nonnegative garrote \citep{breiman1995better}, lasso, elastic net, fused lasso \citep{tibshirani2005sparsity}, group lasso \citep{yuan2006model}, Berhu penalty \citep{owen2007robust}, and LAD-lasso \citep{wang2007robust}. For instance, the coordinate-wise update for the elastic net, which embraces the ridge and lasso regularization by setting respectively $\gamma$ to 0 or 1,  is
\begin{align}
\label{eq:upd}
{\tilde \beta_j} \leftarrow \frac {S\left({\displaystyle{\sum_{i=1}^M \nu_jz_{ij}(y_i-{\tilde y_i}^{(j)})}},\lambda \gamma\right)}{{\displaystyle{\sum_{i=1}^M \nu_jz_{ij}^2}+\lambda (1-\gamma)}},
\end{align}
where ${\tilde y_i}^{(j)}={\tilde \beta_0}+ {\sum_{l \neq j}z_{il}{\tilde \beta_l}}$ is the fitted value excluding the contribution from covariate $z_{ij}$, and $S(z,\lambda)$ is the soft-thresholding operator with value

\begin{align}
\label{eq:sz}
S(z,\lambda)=\sign(z)(|z|-\lambda)_+ = 
  \begin{cases}
   z-\lambda & \mbox{if } z>0 \mbox{  and  } \lambda<|z| \\
   z+\lambda & \mbox{if } z<0 \mbox{  and  } \lambda<|z| \\
   0 & \text{if } \lambda \geq |z|.
   \end{cases}
\end{align}

The update (\ref{eq:upd}) is repeated for $j=1,\ldots,p$ until convergence. Coordinate descent algorithm for several convex penalties is implemented in the $\texttt{R}$ package $\texttt{glmnet}$ \citep{friedman2010regularization}. For (\ref{eq:upd}), we can set $\gamma=0$ to implement ridge and $\gamma=1$ to lasso, while we set $0<\gamma<1$ to apply elastic net regularization. For adaptive lasso, we follow \cite{zou2006adaptive}, take $\gamma=1$ and replace $\lambda$ by $\lambda_j=\lambda / |{\tilde \beta_j}|^{\tau}$, where $\boldsymbol {\tilde \beta}$ is an initial estimate, say $\boldsymbol {\tilde \beta}(ols)$ or $\boldsymbol {\tilde \beta}(ridge)$, and $\tau$ is a positive tuning parameter. To avoid the computational evaluation for choosing $\tau$, we follow \citet[][Section 3.4]{zou2006adaptive} and \cite{wasserman2009high} who also considered $\tau=1$, so we choose $\lambda_j=\lambda / |{\tilde \beta_j}(ridge)|$, where $\boldsymbol {\tilde \beta}(ridge)$ is the estimates obtained from ridge regression. Implementing adaptive elastic net follows along similar lines.

\subsubsection{Non-convex penalty functions}
\label{sec:nconv}
\cite{breheny2011coordinate} have investigated the application of coordinate descent algorithm to fit  penalized generalized linear model using SCAD and MC+, for which the penalty is non-convex. \cite{mazumder2011sparsenet} also studied the coordinate-wise optimization algorithm in linear models considering more general non-convex penalties.

\cite{mazumder2011sparsenet} concluded that, for a known current estimate $\tilde{\theta}$, the univariate penalized least squares function $Q_u(\theta)=\frac{1}{2} (\theta-\tilde{\theta})^2+p_{\lambda}(|\theta|)$ should be convex to ensure that the coordinate-wise procedure converges to a stationary point. \cite{mazumder2011sparsenet} found that this turns out to be the case for SCAD and MC+ penalty, but it cannot be satisfied by bridge (or power) penalty and some cases of log-penalty.

\cite{breheny2011coordinate} derived the solution of coordinate descent algorithm for SCAD and MC+ in generalized linear models cases, and it is implemented in the $\mathtt{ncvreg}$ package of $\mathtt{R}$. Let $\tilde{\boldsymbol \beta_l}$ be a vector containing estimates $\tilde \beta_l$ for $l \neq j$, $l,j=1,\ldots,p$, and we wish to partially optimize ($\ref{eq:glmnet}$) with respect to $\beta_j$. If we define $\tilde g_j={\sum_{i=1}^M \nu_jz_{ij}(y_i-{\tilde y_i}^{(j)})}$ and $\tilde \eta_j={\sum_{i=1}^M \nu_jz_{ij}^2}$, the coordinate-wise update for SCAD is 
\[
 {\tilde \beta_j} \leftarrow 
  \begin{cases} 
  \frac {S(\tilde g_j,\lambda)}{\tilde \eta_j} & \text{if }|\tilde g_j| \leq \lambda(\tilde \eta_j+1) \\
   \frac {S(\tilde g_j,\gamma \lambda/ (\gamma-1))}{\tilde \eta_j-1/(\gamma-1)} & \text{if } \lambda(\tilde \eta_j+1) \leq |\tilde g_j| \leq \tilde \eta_j \lambda \gamma \\
   \frac{\tilde g_j}{\tilde \eta_j} & \text{if } |\tilde g_j| \geq \tilde \eta_j \lambda \gamma,
  \end{cases}
\]
for any $\gamma>\max_j(1+1/ \tilde \eta_j)$. Then, for $\gamma>\max_j(1/ \tilde \eta_j)$ and the same definition of $\tilde g_j$ and $\tilde \eta_j$, the coordinate-wise update for MC+ is
\[
 {\tilde \beta_j} \leftarrow 
  \begin{cases} 
     \frac {S(\tilde g_j,\lambda)}{\tilde \eta_j-1/\gamma} & \text{if } |\tilde g_j| \leq \tilde \eta_j \lambda \gamma \\
   \frac{\tilde g_j}{\tilde \eta_j} & \text{if } |\tilde g_j| \geq \tilde \eta_j \lambda \gamma,
  \end{cases}
\]
where $S(z,\lambda)$ is the soft-thresholding operator given by (\ref{eq:sz}).

\subsection{Selection of regularization or tuning parameter}
It is worth noticing that coordinate descent procedures (and other computation procedures computing the penalized likelihood estimates) rely on the tuning parameter $\lambda$ so that the choice of $\lambda$ is also becoming an important task. The estimation using a large value of $\lambda$ tends to have smaller variance but larger biases, whereas the estimation using a small value of $\lambda$ leads to have zero biases but larger variance. The trade-off between the biases and the variances yields an optimal choice of $\lambda$ \citep{fan2010selective}.

To select $\lambda$, it is reasonable to identify a range of $\lambda$ values extending from a maximum value of $\lambda$ for which all penalized coefficients are zero to $\lambda=0$ \cite[e.g.][]{friedman2010regularization, breheny2011coordinate}.  After that, we select a $\lambda$ value which optimizes some criterion. By fixing a path of $\lambda \geq 0$, we select the tuning parameter $\lambda$ which minimizes $\mathrm{WQBIC}(\lambda)$, a weighted version of the BIC criterion, defined by
\begin{align*}
\mathrm{WQBIC}(\lambda)=-2 \ell(w;\boldsymbol{\hat \beta} (\lambda))+ s(\lambda)\log|D|,
\end{align*}
where $s(\lambda)={\sum_{j=1}^p \mathbb{I}\{{\hat \beta_j}(\lambda) \neq 0\}}$ is the number of selected covariates with nonzero regression coefficients and $|D|$ is the observation volume which represents the sample size. For linear regression models, $\mathbf{Y}=\mathbf{X}^\top \boldsymbol{\hat \beta}+\boldsymbol{\epsilon}$, \cite{wang2007tuning} proposed BIC-type criterion for choosing $\lambda$  by 
\begin{align*}
\mathrm{BIC}(\lambda)=\log \frac{\| \mathbf{Y}-\mathbf{X}^\top \boldsymbol{\hat \beta}(\lambda)\|^2}{\eta}+ \frac{1}{\eta}\log (\eta) \mathrm{DF}(\lambda),
\end{align*}
where $\eta$ is the number of observations and $\mathrm{DF}(\lambda)$ is the degree of freedom. This criterion is consistent, meaning that, it selects the correct model with probability approaching 1 in large samples when a set of candidate models contains the true model. Their findings is in line with the study of \cite{zhang2010regularization} for which the criterion was presented in more general way, called generalized information criterion (GIC). The criterion WQBIC is the specific form of GIC proposed by \cite{zhang2010regularization}.

The selection of $\gamma$ for SCAD and MC+ is another task, but we fix $\gamma=3.7$ for SCAD and $\gamma=3$ for MC+, following \cite{fan2001variable} and \cite{breheny2011coordinate} respectively, to avoid more complexities.

\section{Asymptotic theory}
\label{sec:asy}

In this section, we present the asymptotic results for the regularized weighted Poisson likelihood estimator when considering $\mathbf{X}$ as a $d$-dimensional point process observed over a sequence of observation domain $D=D_n, n=1,2, \ldots$ which expands to $\mathbb{R}^d$ as $n \to \infty$. The regularization parameters $\lambda_j=\lambda_{n,j}$ for $j=1,\dots,p$ are now indexed by $n$.
These asymptotic results also hold for the regularized unweighted Poisson likelihood estimator. For sake of conciseness, we do not present the asymptotic results for the regularized logistic regression estimate. The results are very similar. The main difference is lying in the conditions ($\mathcal C$.\ref{C:BnCn}) and ($\mathcal C$.\ref{C:An}) for which the matrices  $\mathbf{A}_n, \mathbf{B}_n,$ and $\mathbf{C}_n$ have a different expression (see Remark \ref{logistic}).

\subsection{Notation and conditions} \label{sec:not}
We recall the classical definition of strong mixing coefficients adapted to spatial point processes \citep[e.g.][]{politis1998large}: for $k,l \in \mathbb{N} \cup \{ \infty \}$ and $q \geq 1$, define
\begin{align}
\alpha_{k,l}(q)=\sup\{
&|\mathrm{P}(A \cap B)-\mathrm{P}(A)\mathrm{P}(B)|: A\in \mathscr{F} (\Lambda_1), B \in \mathscr{F} (\Lambda_2), \nonumber\\
&\Lambda_1 \in \mathscr{B} (\mathbb{R}^d), \Lambda_2 \in \mathscr{B} (\mathbb{R}^d), |\Lambda_1| \leq k, |\Lambda_2| \leq l, d(\Lambda_1, \Lambda_2) \geq q \}, \label{eq:5}
\end{align}
where $\mathscr{F}$ is the $\sigma$-algebra generated by $\mathbf{X} \cap \Lambda_i, i=1,2, d(\Lambda_1, \Lambda_2)$ is the minimal distance between sets $\Lambda_1$ and $\Lambda_2$, and $\mathscr{B}(\mathbb{R}^d)$ denotes the class of Borel sets in $\mathbb{R}^d$.

Let $\boldsymbol \beta_0 = \{\beta_{01},\ldots,\beta_{0s}, \beta_{0(s+1)}, \ldots, \beta_{0p}\}^\top=\{\boldsymbol \beta^{\top}_{01},\boldsymbol \beta^{\top}_{02}\}^\top = (\boldsymbol \beta_{01}^\top, \mathbf 0^\top)^\top$ denote the $p$-dimensional vector of true coefficient values, where ${\boldsymbol \beta_{01}}$ is the $s$-dimensional vector of nonzero coefficients and $\boldsymbol \beta_{02}$ is the ({\em p-s})-dimensional vector of zero coefficients. 

We  define the $p \times p$ matrices $\mathbf{A}_n(w;\boldsymbol{\beta}_{0}), \mathbf{B}_n(w;\boldsymbol{\beta}_{0}),$ and $\mathbf{C}_n(w;\boldsymbol{\beta}_{0})$ by
\begin{align*}
\mathbf{A}_n(w;\boldsymbol{\beta}_{0})&={\int_{D_n} w(u)\mathbf{z}(u)\mathbf{z}(u)^\top \rho(u;\boldsymbol{\beta}_{0})\mathrm{d}u}, \\
\mathbf{B}_n(w;\boldsymbol{\beta}_{0})&={\int_{D_n} w(u)^2 \mathbf{z}(u)\mathbf{z}(u)^\top \rho(u;\boldsymbol{\beta}_{0})\mathrm{d}u}, \mbox{ and} \\
\mathbf{C}_n(w;\boldsymbol{\beta}_{0})&={\int_{D_n} \int_{D_n} w(u) w(v)\mathbf{z}(u)\mathbf{z}(v)^\top \{g(u,v)-1\} \rho(u;\boldsymbol{\beta}_{0}) \rho(v;\boldsymbol{\beta}_{0}) \mathrm{d}v \mathrm{d}u}.
\end{align*}

Consider the following conditions ($\mathcal C$.\ref{C:Dn})-($\mathcal C$.\ref{C:plambda}) which are required to derive our asymptotic results, where $o$ denotes the origin of $\mathbb{R}^d$:

\begin{enumerate}[($\mathcal C$.1)]
\item  For every $n \geq 1, D_n=nE=\{ne: e \in E\}$, where $E \subset \mathbb{R}^d$ is convex, compact, and contains $o$ in its interior. \label{C:Dn}
\item We assume that the intensity function has the log-linear specification given by~\eqref{intensity function} where $\beta \in \Theta$ and $\Theta$ is an open convex bounded set of $\mathbb R^p$. \label{C:Theta}
\item  The covariates $\mathbf z$ and the weight function $w$ satisfy
\[
 	\sup_{u \in \mathbb{R}^d} ||\mathbf{z}(u)||<\infty 
 	\quad \mbox{ and } \quad 
 	\sup_{u \in \mathbb{R}^d} |w(u)|<\infty.
 \] \label{C:cov}
\item  There exists an integer $t \geq 1$ such that for $k=2, \ldots, 2+t$, the product density $\rho^{(k)}$ exists and satisfies $\rho^{(k)}< \infty$. \label{C:rhok}
\item  For the strong mixing coefficients (\ref{eq:5}), we assume that there exists some $\tilde t > d(2+t)/ t$ such that $\alpha_{2,\infty}(q)=O(q^{-\tilde t})$. \label{C:mixing}
\item There exists a $p \times p$ positive definite matrix $\mathbf{I}_0$ such that for all sufficiently large $n$, $|D_n|^{-1}\{\mathbf{B}_n(w;\boldsymbol \beta_0)+\mathbf{C}_n(w;\boldsymbol \beta_0)\}\geq \mathbf{I}_0$. \label{C:BnCn}
\item  There exists a $p \times p$ positive definite matrix $\mathbf{I}_0^\prime$ such that for all sufficiently large $n$, we have $|D_n|^{-1} \mathbf{A}_n(w;\boldsymbol \beta_0)\geq \mathbf{I}_0^\prime$. \label{C:An}
\item   The penalty function $p_{\lambda}(\cdot)$ is nonnegative on $\mathbb R+$, continuously differentiable on $\mathbb R^+ \setminus\{0\}$ with derivative ${p}_\lambda'$ assumed to be a Lipschitz function on $\mathbb R^+\setminus\{0\}$.
Furthermore, given $(\lambda_{n,j})_{n \geq 1}, \mbox{ for } j=1, \ldots, s,$ we assume that there exists $(\tilde r_{n,j})_{n \geq 1}$, where $|D_n|^{1/2}\tilde r_{n,j} \to \infty$ as $n \to \infty$, such that, for $n$ sufficiently large, $p_{\lambda_{n,j}}$ is thrice continuously differentiable in the ball centered at $|\beta_{0j}|$ with radius $\tilde r_{n,j}$ and we assume that the third derivative is uniformly bounded. \label{C:plambda}
% \item   The penalty function $p_{\lambda}(\cdot)$ is nonnegative on $\mathbb R+$, continuously differentiable for $\theta \neq 0$. Furthermore, given $(\lambda_{n,j})_{n \geq 1}, \mbox{ for } j=1, \ldots, s,$ we assume that there exists $(\tilde r_{n,j})_{n \geq 1}$, where $|D_n|^{1/2}\tilde r_{n,j} \to \infty$ as $n \to \infty$, such that, for $n$ sufficiently large, $p_{\lambda_{n,j}}$ is twice continuously differentiable in the ball centered at $|\beta_{0j}|$ with radius $\tilde r_{n,j}$. \label{C:plambda}
\end{enumerate}

Under the condition ($\mathcal C$.\ref{C:plambda}), we define the sequences $a_n$, $b_n$ and $c_n$ by
\begin{align}	
a_n &=\max_{j=1,...s} | p'_{\lambda_{n,j}}(|\beta_{0j}|)| , \label{eq:an} \\
b_n &=\inf_{j=s+1,\ldots,p} \inf_{\substack{|\theta| \leq \epsilon_n \\ \theta \neq 0}} p'_{\lambda_{n,j}}(\theta) \label{eq:bn}, \mbox{ for } \epsilon_n=K_1|D_n|^{-1/2},	\\
c_n &=  \max_{j=1,...s} |p^{\prime\prime}_{\lambda_{n,j}}(|\beta_{0j}|) |.  \label{eq:cn}
\end{align}
These sequences $a_n$, $b_n$ and $c_n$, detailed in Table~\ref{table:2} for the different methods considered in this paper, play a central role in our results. Even if this will be discussed later in Section~\ref{sec:discussion}, we specify right now that we require that $a_n |D_n|^{1/2}\to 0$, $b_n |D_n|^{1/2}\to \infty$ and $c_n\to 0$. 

\setlength{\tabcolsep}{3pt}
\renewcommand{\arraystretch}{1.5}
\begin{table}[h]
\caption{Details of the sequences $a_n$, $b_n$ and $c_n$ for a given regularization method.}
\label{table:2}
\begin{center}
\begin{threeparttable}
\scalebox{0.79}{
\begin{tabular}{ l l l l}
\hline
\hline
Method &$a_n$ & $b_n$ & $c_n$\\
\hline
\hline
Ridge & $\lambda_n {\displaystyle \max_{j=1,...s} \{|\beta_{0j}|\}}$  & $0$ & $\lambda_n$ \\
Lasso & $\lambda_n$ & $\lambda_n$ & 0  \\
Enet & $\lambda_n \left[(1-\gamma){\displaystyle \max_{j=1,...s} \{|\beta_{0j}|\}}+\gamma \right]$  &$\gamma \lambda_n$& $(1-\gamma) \lambda_n$\\
AL & ${\displaystyle \max_{j=1,...s} \{\lambda_{n,j}\}}$ & ${\displaystyle \min_{j=s+1,...p} \{\lambda_{n,j}\}}$ & 0\\ 
Aenet & $ {\displaystyle \max_{j=1,...s}\{ \lambda_{n,j}\big((1-\gamma) |\beta_{0j}|+\gamma\big) \}}$  & $\gamma {\displaystyle \min_{j=s+1,...p}\{\lambda_{n,j} \}}$ & $(1-\gamma){\displaystyle\max_{j=1,\dots,s} \{\lambda_{n,j}\}}$\\ 
SCAD & $0 \tnote{*}$ & $\lambda_n  \tnote{**}$  & $0\tnote{*}$\\ 
MC+ & $0  \tnote{*}$ & $\lambda_n - \frac {K_1} {\gamma |D_n|^{1/2}}  \tnote{**} $ &$0\tnote{*}$\\ 
\hline
\end{tabular}
}
\begin{tablenotes}
\item[*] if $\lambda_n \to 0$ as $n \to \infty$ 
\item[**] if $|D_n|^{1/2}\lambda_n \to \infty$ as $n \to \infty$
\end{tablenotes}
\end{threeparttable}
\end{center}
\end{table}

\subsection{Main results} \label{sec:result}
We state our main results here. Proofs are relegated to Appendices \ref{sec:auxLemma}-\ref{proof2}.

We first show in Theorem~\ref{the1} that the penalized weighted Poisson likelihood estimator converges in probability and exhibits its rate of convergence.

\begin{theorem}
\label{the1}
Assume the conditions ($\mathcal C$.\ref{C:Dn})-($\mathcal C$.\ref{C:plambda}) hold and let $a_n$ and $c_n$ be given by (\ref{eq:an}) and~\eqref{eq:cn}. If $a_n=O(|D_n|^{-1/2})$ and $c_n=o(1)$, then there exists a local maximizer ${\boldsymbol {\hat \beta}}$ of $Q(w;\boldsymbol \beta)$  such that ${\bf \| \boldsymbol {\hat \beta} -\boldsymbol \beta_0\|}=O_\mathrm{P}(|D_n|^{-1/2}+a_n)$.
\end{theorem}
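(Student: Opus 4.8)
The plan is to follow the classical device of \citet{fan2001variable}: writing $r_n=|D_n|^{-1/2}+a_n$, it suffices to show that for every $\varepsilon>0$ there is a constant $C>0$ with
\[
\liminf_{n\to\infty}\mathrm{P}\Big\{\sup_{\|\mathbf{u}\|=C}Q(w;\boldsymbol\beta_0+r_n\mathbf{u})<Q(w;\boldsymbol\beta_0)\Big\}\ge 1-\varepsilon .
\]
Since $Q(w;\cdot)$ is continuous, this forces, with probability at least $1-\varepsilon$, a local maximizer $\boldsymbol{\hat\beta}$ inside the ball $\{\boldsymbol\beta_0+r_n\mathbf{u}:\|\mathbf{u}\|\le C\}$, i.e. $\|\boldsymbol{\hat\beta}-\boldsymbol\beta_0\|=O_\mathrm{P}(r_n)$, which is the claim. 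So I would fix $\mathbf{u}$ with $\|\mathbf{u}\|=C$ and study $\mathcal D_n(\mathbf{u}):=Q(w;\boldsymbol\beta_0+r_n\mathbf{u})-Q(w;\boldsymbol\beta_0)$, split into the increment of $\ell(w;\cdot)$ and the increment of the penalty term $-|D_n|\sum_j p_{\lambda_{n,j}}(|\beta_j|)$.

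For the likelihood increment I would Taylor-expand $\ell(w;\cdot)$ about $\boldsymbol\beta_0$. Because of the log-linear form (\ref{intensity function}), $\ell^{(2)}(w;\boldsymbol\beta)=-\mathbf{A}_n(w;\boldsymbol\beta)$ is \emph{deterministic} and $\ell^{(3)}(w;\boldsymbol\beta)$ is deterministic with $\sup_{\boldsymbol\beta\in\Theta}\|\ell^{(3)}(w;\boldsymbol\beta)\|=O(|D_n|)$ by (\ensuremath{\mathcal C}.\ref{C:Theta})--(\ensuremath{\mathcal C}.\ref{C:cov}); this is exactly where the Poisson case is simpler than the logistic one (cf.\ Remark~\ref{logistic}). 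Hence
\[
\ell(w;\boldsymbol\beta_0+r_n\mathbf{u})-\ell(w;\boldsymbol\beta_0)=r_n\mathbf{u}^\top\ell^{(1)}(w;\boldsymbol\beta_0)-\tfrac12 r_n^2\,\mathbf{u}^\top\mathbf{A}_n(w;\boldsymbol\beta_0)\mathbf{u}+R_n ,
\]
with $\sup_{\|\mathbf{u}\|=C}|R_n|=O(r_n^3|D_n|)\,C^3=o(1)$, since $r_n\to0$ while $r_n^2|D_n|=(1+a_n|D_n|^{1/2})^2$ stays bounded (and $\ge1$) under $a_n=O(|D_n|^{-1/2})$. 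For the linear term I would invoke the auxiliary results of Appendix~\ref{sec:auxLemma}: Campbell's formulas (\ref{eq:campbell})--(\ref{eq:campbell2}) give $\mathbb E\,\ell^{(1)}(w;\boldsymbol\beta_0)=\mathbf 0$ and $\mathrm{Var}\,\ell^{(1)}(w;\boldsymbol\beta_0)=\mathbf{B}_n(w;\boldsymbol\beta_0)+\mathbf{C}_n(w;\boldsymbol\beta_0)=O(|D_n|)$, so Chebyshev yields $\ell^{(1)}(w;\boldsymbol\beta_0)=O_\mathrm{P}(|D_n|^{1/2})$ and the linear term is $O_\mathrm{P}(r_n|D_n|^{1/2}C)=O_\mathrm{P}(C)$. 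By (\ensuremath{\mathcal C}.\ref{C:An}), $\mathbf{u}^\top\mathbf{A}_n(w;\boldsymbol\beta_0)\mathbf{u}\ge|D_n|\,\lambda_{\min}(\mathbf{I}_0^\prime)\,\|\mathbf{u}\|^2$, so the quadratic term is $\le-\tfrac12\lambda_{\min}(\mathbf{I}_0^\prime)\,(r_n^2|D_n|)\,C^2\le-\tfrac12\lambda_{\min}(\mathbf{I}_0^\prime)\,C^2$: this negative term, of exact order $C^2$, is the one that will dominate.

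For the penalty increment, the zero-coefficient part $-|D_n|\sum_{j=s+1}^p\{p_{\lambda_{n,j}}(r_n|u_j|)-p_{\lambda_{n,j}}(0)\}$ is $\le0$ by $p_\lambda\ge0,\ p_\lambda(0)=0$, so it can simply be discarded (it only helps). For $j=1,\dots,s$, since $r_n|D_n|^{1/2}$ is bounded while $|D_n|^{1/2}\tilde r_{n,j}\to\infty$, eventually $r_nC<\tilde r_{n,j}$, so (\ensuremath{\mathcal C}.\ref{C:plambda}) allows a third-order Taylor expansion of $p_{\lambda_{n,j}}$ about $|\beta_{0j}|$; using $\big||\beta_{0j}+r_nu_j|-|\beta_{0j}|\big|\le r_n|u_j|$, the three terms are bounded in modulus by $|D_n|a_nr_n|u_j|=O(C)$, $\tfrac12|D_n|c_nr_n^2u_j^2=\tfrac12c_n(r_n^2|D_n|)C^2=o(1)C^2$, and $O(|D_n|r_n^3)C^3=o(1)$ respectively --- precisely where $a_n=O(|D_n|^{-1/2})$ and $c_n=o(1)$ are used. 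Collecting everything, uniformly over $\|\mathbf{u}\|=C$,
\[
\mathcal D_n(\mathbf{u})\le-\tfrac12\lambda_{\min}(\mathbf{I}_0^\prime)\,C^2+O_\mathrm{P}(C)+o(1)\,(C^2+C^3+1),
\]
and the proof concludes by first choosing $C$ large enough that the $-C^2$ term beats the $O_\mathrm{P}(C)$ term with probability at least $1-\varepsilon$, then letting $n\to\infty$ to kill the $o(1)$ remainders. The main obstacle is the variance bound $\mathrm{Var}\,\ell^{(1)}(w;\boldsymbol\beta_0)=O(|D_n|)$: controlling the cross term $\mathbf{C}_n(w;\boldsymbol\beta_0)$ needs the higher-order product densities of (\ensuremath{\mathcal C}.\ref{C:rhok}) together with the strong-mixing decay (\ensuremath{\mathcal C}.\ref{C:mixing}), so that $\int_{D_n}\!\int_{D_n}|g(u,v)-1|\,\mathrm du\,\mathrm dv=O(|D_n|)$; this is where the bulk of the work, relegated to Appendix~\ref{sec:auxLemma}, lies.
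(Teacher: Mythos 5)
Your proposal is correct and follows essentially the same route as the paper's proof: the Fan--Li device on the sphere $\|\mathbf{u}\|=C$ with radius $r_n=|D_n|^{-1/2}+a_n$, the same split into likelihood and penalty increments, the lower bound on the quadratic term via ($\mathcal C$.\ref{C:An}), discarding the zero-coefficient penalty terms, and the third-order expansion of $p_{\lambda_{n,j}}$ justified by $r_n|D_n|^{1/2}=O(1)$ versus $|D_n|^{1/2}\tilde r_{n,j}\to\infty$. The only (harmless) cosmetic differences are that you control the likelihood remainder by a deterministic third derivative rather than by the difference $\mathbf{A}_n(w;\boldsymbol\beta_0)-\mathbf{A}_n(w;\boldsymbol\beta_0+t r_n\mathbf{u})$, and you obtain $\ell^{(1)}(w;\boldsymbol\beta_0)=O_\mathrm{P}(|D_n|^{1/2})$ directly by Chebyshev from $\mathrm{Var}\,\ell^{(1)}=\mathbf{B}_n+\mathbf{C}_n=O(|D_n|)$ instead of via the central limit theorem of Lemma~\ref{lem:clt}.
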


This implies that, if $a_n=O(|D_n|^{-1/2})$ and $c_n=o(1)$, the penalized weighted Poisson likelihood estimator is root-$|D_n|$ consistent. Furthermore, we demonstrate in Theorem~\ref{the2} that such a root-$|D_n|$ consistent estimator ensures the sparsity of $\boldsymbol{\hat \beta}$; that is, the estimate will correctly set $\boldsymbol \beta_2$ to zero with probability tending to 1 as $n \to \infty$, and $\boldsymbol{\hat \beta}_1$ is asymptotically normal.
\begin{theorem}
\label{the2}
Assume the conditions ($\mathcal C$.\ref{C:Dn})-($\mathcal C$.\ref{C:plambda}) hold. If $a_n|D_n|^{1/2}\to 0$, \linebreak$b_n|D_n|^{1/2} \to \infty$ and $c_n\to 0$ as $n\to\infty$, the root-$|D_n|$ consistent local maximizers ${ \boldsymbol {\hat { \beta}}}=(\boldsymbol{\hat \beta}_1^\top, \boldsymbol{\hat \beta}_2^\top)^\top $ in Theorem 1 satisfy:
\begin{enumerate}[(i)]
\item Sparsity: $\mathrm{P}(\boldsymbol{\hat \beta}_2=0) \to 1$ as $n \to \infty$,
\item Asymptotic Normality: $|D_n|^{1/2} \boldsymbol \Sigma_n(w;\boldsymbol{\beta}_{0})^{-1/2}(\boldsymbol{\hat \beta}_1- \boldsymbol{\beta}_{01})\xrightarrow{d} \mathcal{N}(0, \mathbf{I}_{s})$,
\end{enumerate}
where
\begin{align}
\boldsymbol \Sigma_n(w;\boldsymbol{\beta}_{0})= & |D_n|\{\mathbf{A}_{n,11}(w;\boldsymbol{\beta}_{0})+|D_n| \boldsymbol \Pi_n \}^{-1}\{\mathbf{B}_{n,11}(w;\boldsymbol{\beta}_{0})+\mathbf{C}_{n,11}(w;\boldsymbol{\beta}_{0})\}\nonumber \\
& \{\mathbf{A}_{n,11}(w;\boldsymbol{\beta}_{0})+|D_n| \boldsymbol \Pi_n \}^{-1}, \label{eq:Sigman} \\
\boldsymbol \Pi_n = & \mathrm{diag}\{p''_{\lambda_{n,1}}(|\beta_{01}|),\ldots,p''_{\lambda_{n,s}}(|\beta_{0s}|)\}, \label{eq:pi}
\end{align}
and where $\mathbf{A}_{n,11}(w;\boldsymbol{\beta}_{0})$ $(\mbox{resp. } \mathbf{B}_{n,11}(w;\boldsymbol{\beta}_{0}), \mathbf{C}_{n,11}(w;\boldsymbol{\beta}_{0}))$ is the $s \times s$ top-left corner of $\mathbf{A}_{n}(w;\boldsymbol{\beta}_{0})$ $(\mbox{resp. } \mathbf{B}_{n}(w;\boldsymbol{\beta}_{0}), \mathbf{C}_{n}(w;\boldsymbol{\beta}_{0}))$.
\end{theorem}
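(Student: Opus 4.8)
The plan is to adapt the two-step argument of \cite{fan2001variable} to the spatial point process setting, exploiting the fact that the score of the weighted Poisson log-likelihood, $\ell^{(1)}(w;\boldsymbol\beta)=\sum_{u\in\mathbf X\cap D_n}w(u)\mathbf z(u)-\int_{D_n}w(u)\mathbf z(u)\rho(u;\boldsymbol\beta)\,\mathrm d u$, has a \emph{deterministic} Hessian $\ell^{(2)}(w;\boldsymbol\beta)=-\mathbf A_n(w;\boldsymbol\beta)$. Three facts will be used repeatedly. (a) By the Campbell formulas (\ref{eq:campbell})--(\ref{eq:campbell2}), $\mathbb E\,\ell^{(1)}(w;\boldsymbol\beta_0)=\mathbf 0$ and $\mathrm{Var}\,\ell^{(1)}(w;\boldsymbol\beta_0)=\mathbf B_n(w;\boldsymbol\beta_0)+\mathbf C_n(w;\boldsymbol\beta_0)$; under ($\mathcal C$.\ref{C:cov})--($\mathcal C$.\ref{C:mixing}) these matrices are $O(|D_n|)$ entrywise, so $\ell^{(1)}(w;\boldsymbol\beta_0)=O_{\mathrm P}(|D_n|^{1/2})$, and — this is the probabilistic heart of the proof, supplied by an auxiliary central limit theorem for strongly mixing point processes — the sub-vector $\mathbf S_n$ made of the first $s$ components of $\ell^{(1)}(w;\boldsymbol\beta_0)$ satisfies $\mathbf V_n^{-1/2}\mathbf S_n\xrightarrow{d}\mathcal N(\mathbf 0,\mathbf I_s)$, where $\mathbf V_n:=\mathbf B_{n,11}(w;\boldsymbol\beta_0)+\mathbf C_{n,11}(w;\boldsymbol\beta_0)$. (b) By the mean value theorem applied to $\exp(\cdot)$ together with the boundedness of $\mathbf z$, $w$ and $\Theta$ (conditions ($\mathcal C$.\ref{C:Theta})--($\mathcal C$.\ref{C:cov})), $\mathbf A_n(w;\boldsymbol\beta)-\mathbf A_n(w;\boldsymbol\beta_0)=O(|D_n|\,\|\boldsymbol\beta-\boldsymbol\beta_0\|)$ entrywise. (c) By ($\mathcal C$.\ref{C:An}) and ($\mathcal C$.\ref{C:BnCn}), the $s\times s$ blocks $\mathbf A_{n,11}(w;\boldsymbol\beta_0)$ and $\mathbf V_n$ have smallest eigenvalue of exact order $|D_n|$.

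For the sparsity claim~(i), fix a root-$|D_n|$ consistent maximizer $\hat{\boldsymbol\beta}$ provided by Theorem~\ref{the1} and work on the event $\{\|\hat{\boldsymbol\beta}-\boldsymbol\beta_0\|\le K_1|D_n|^{-1/2}\}$. For a fixed $j\in\{s+1,\dots,p\}$ with $\hat\beta_j\neq 0$, the $j$th partial derivative of $Q(w;\cdot)$ in (\ref{eq:qwee}) at $\hat{\boldsymbol\beta}$ equals $\ell^{(1)}_j(w;\hat{\boldsymbol\beta})-|D_n|\,p'_{\lambda_{n,j}}(|\hat\beta_j|)\sign(\hat\beta_j)$, and a first-order expansion of $\ell^{(1)}_j(w;\cdot)$ about $\boldsymbol\beta_0$ together with facts (a)--(b) gives $\ell^{(1)}_j(w;\hat{\boldsymbol\beta})=\ell^{(1)}_j(w;\boldsymbol\beta_0)-[\mathbf A_n(w;\boldsymbol\beta^\ast)(\hat{\boldsymbol\beta}-\boldsymbol\beta_0)]_j=O_{\mathrm P}(|D_n|^{1/2})$. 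Hence this partial derivative equals $|D_n|\{O_{\mathrm P}(|D_n|^{-1/2})-p'_{\lambda_{n,j}}(|\hat\beta_j|)\sign(\hat\beta_j)\}$; since $0<|\hat\beta_j|\le\epsilon_n$ one has $p'_{\lambda_{n,j}}(|\hat\beta_j|)\ge b_n$ by (\ref{eq:bn}), and $b_n|D_n|^{1/2}\to\infty$ forces its sign to be opposite to $\sign(\hat\beta_j)$ with probability tending to one, contradicting that $\hat{\boldsymbol\beta}$ is a local maximizer. Thus $\mathrm P(\hat\beta_j=0)\to1$ for each $j>s$, and a union bound over the finite set $\{s+1,\dots,p\}$ gives $\mathrm P(\hat{\boldsymbol\beta}_2=\mathbf 0)\to1$.

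For the asymptotic normality claim~(ii), restrict attention to the event $\{\hat{\boldsymbol\beta}_2=\mathbf 0\}$, which has probability tending to one by~(i) and on which $\hat\beta_j\to\beta_{0j}\neq 0$ for $j\le s$, so $Q(w;\cdot)$ is differentiable at $\hat{\boldsymbol\beta}$ in the $\boldsymbol\beta_1$-directions and $\hat{\boldsymbol\beta}_1$ solves the penalized score equations. Expanding the first $s$ coordinates of $\ell^{(1)}(w;\cdot)$ about $\boldsymbol\beta_0$ (with Hessian $-\mathbf A_{n,11}$), expanding $\beta_j\mapsto p'_{\lambda_{n,j}}(|\beta_j|)\sign(\beta_j)$ to second order about $\beta_{0j}$ (the remainder being $O_{\mathrm P}(|D_n|^{-1})$ per coordinate by the bounded third derivative in ($\mathcal C$.\ref{C:plambda})), and using $\sign(\hat\beta_j)=\sign(\beta_{0j})$ whp together with fact (b), the equations become
\[
\{\mathbf A_{n,11}(w;\boldsymbol\beta_0)+|D_n|\boldsymbol\Pi_n+o_{\mathrm P}(|D_n|)\}(\hat{\boldsymbol\beta}_1-\boldsymbol\beta_{01})=\mathbf S_n-|D_n|\mathbf b_n+O_{\mathrm P}(1),
\]
where $\boldsymbol\Pi_n$ is the matrix (\ref{eq:pi}) and $\mathbf b_n=\big(p'_{\lambda_{n,j}}(|\beta_{0j}|)\sign(\beta_{0j})\big)_{j\le s}$ satisfies $\|\mathbf b_n\|\le\sqrt s\,a_n$. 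Since $c_n\to0$, fact (c) shows $|D_n|^{-1}\mathbf A_{n,11}(w;\boldsymbol\beta_0)+\boldsymbol\Pi_n$ is eventually positive definite with eigenvalues bounded away from $0$ and $\infty$; multiplying the displayed equation by $|D_n|^{1/2}\{\mathbf A_{n,11}(w;\boldsymbol\beta_0)+|D_n|\boldsymbol\Pi_n\}^{-1}$, dropping $|D_n|^{1/2}\mathbf b_n$ (it is $o(1)$ since $a_n|D_n|^{1/2}\to0$) and absorbing the rest into $o_{\mathrm P}(1)$ via root-consistency, I get $|D_n|^{1/2}(\hat{\boldsymbol\beta}_1-\boldsymbol\beta_{01})=|D_n|^{1/2}\{\mathbf A_{n,11}(w;\boldsymbol\beta_0)+|D_n|\boldsymbol\Pi_n\}^{-1}\mathbf S_n+o_{\mathrm P}(1)$. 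Finally, a direct computation from (\ref{eq:Sigman}) shows that $\boldsymbol\Sigma_n(w;\boldsymbol\beta_0)^{-1/2}|D_n|^{1/2}\{\mathbf A_{n,11}(w;\boldsymbol\beta_0)+|D_n|\boldsymbol\Pi_n\}^{-1}\mathbf V_n^{1/2}$ is an orthogonal matrix (all blocks involved are symmetric, and $\boldsymbol\Sigma_n$ has eigenvalues of exact order $1$ by (c)); pre-multiplying the last identity by $\boldsymbol\Sigma_n(w;\boldsymbol\beta_0)^{-1/2}$ and combining the central limit theorem $\mathbf V_n^{-1/2}\mathbf S_n\xrightarrow{d}\mathcal N(\mathbf 0,\mathbf I_s)$ of fact (a) with Slutsky's lemma yields $|D_n|^{1/2}\boldsymbol\Sigma_n(w;\boldsymbol\beta_0)^{-1/2}(\hat{\boldsymbol\beta}_1-\boldsymbol\beta_{01})\xrightarrow{d}\mathcal N(\mathbf 0,\mathbf I_s)$.

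The hard part is the central limit theorem for the weighted score $\ell^{(1)}(w;\boldsymbol\beta_0)$: it is not the likelihood score of an i.i.d. sample but a sum over a dependent point pattern, so it must be obtained from a central limit theorem for strongly mixing point processes under the moment and mixing conditions ($\mathcal C$.\ref{C:rhok})--($\mathcal C$.\ref{C:mixing}), which I would establish (together with the order statements for $\mathbf B_n$ and $\mathbf C_n$) separately among the auxiliary lemmas. Everything else is careful bookkeeping: keeping the three normalization conditions $a_n|D_n|^{1/2}\to0$, $b_n|D_n|^{1/2}\to\infty$, $c_n\to0$ in exact correspondence with, respectively, the asymptotic bias $|D_n|^{1/2}\mathbf b_n$, the sparsity threshold $b_n$, and the penalty-curvature term $\boldsymbol\Pi_n$; and, because $\boldsymbol\Pi_n$ need not be positive semidefinite for the SCAD and MC+ penalties, checking that $c_n\to0$ together with ($\mathcal C$.\ref{C:An}) still guarantees the invertibility and the correct order of $\mathbf A_{n,11}(w;\boldsymbol\beta_0)+|D_n|\boldsymbol\Pi_n$.
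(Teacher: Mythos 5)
Your proposal is correct and follows essentially the same route as the paper: sparsity is obtained from the penalty derivative term $|D_n|b_n$ dominating the $O_{\mathrm P}(|D_n|^{1/2})$ score in the zero directions, and normality from a Taylor expansion of the penalized score equations combined with the mixing central limit theorem for $\ell^{(1)}(w;\boldsymbol\beta_0)$ and Slutsky's lemma, your normalization by $\{\mathbf A_{n,11}+|D_n|\boldsymbol\Pi_n\}^{-1}$ followed by the orthogonality observation being algebraically equivalent to the paper's pre-multiplication by $\{\mathbf B_{n,11}+\mathbf C_{n,11}\}^{-1/2}$. The only detail the paper treats more carefully is the second-order expansion of $p'_{\lambda_{n,j}}(|\hat\beta_j|)\sign(\hat\beta_j)$: since condition ($\mathcal C$.\ref{C:plambda}) guarantees thrice differentiability only in a ball of radius $\tilde r_{n,j}$ around $|\beta_{0j}|$, the paper splits on the event $\{|\hat\beta_j-\beta_{0j}|\le\tilde r_{n,j}\}$, whose probability tends to one because $|D_n|^{1/2}\tilde r_{n,j}\to\infty$, and uses the Lipschitz property of $p'_\lambda$ on the complement --- a one-line refinement your argument should include.
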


As a consequence, $\boldsymbol \Sigma_n(w;\boldsymbol{\beta}_{0})$ is the asymptotic covariance matrix of $\boldsymbol{\hat \beta}_1$. Note that $\boldsymbol \Sigma_n(w;\boldsymbol{\beta}_{0})^{-1/2}$ is the inverse of $\boldsymbol \Sigma_n(w;\boldsymbol{\beta}_{0})^{1/2}$, where $\boldsymbol \Sigma_n(w;\boldsymbol{\beta}_{0})^{1/2}$ is any square matrix with $\boldsymbol \Sigma_n(w;\boldsymbol{\beta}_{0})^{1/2}\big(\boldsymbol \Sigma_n(w;\boldsymbol{\beta}_{0})^{1/2}\big)^\top=\boldsymbol \Sigma_n(w;\boldsymbol{\beta}_{0})$.

\begin{remark}
\label{remark 1}
For lasso and adaptive lasso, $ \boldsymbol \Pi_n= \mathbf 0$. For other penalties, since $c_n=o(1)$, then $\boldsymbol \| \Pi_n\|=o(1)$. Since $\|\mathbf{A}_{n,11}(w;\boldsymbol{\beta}_{0})\|=O(|D_n|)$ from conditions ($\mathcal C$.\ref{C:Theta}) and ($\mathcal C$.\ref{C:cov}), $|D_n| \, \|  \boldsymbol \Pi_n\|$ is asymptotically negligible with respect to $\|\mathbf{A}_{n,11}(w;\boldsymbol{\beta}_{0})\|$.
\end{remark}

\begin{remark}
\label{logistic}
Theorems \ref{the1} and \ref{the2} remain true for the regularized weighted logistic regression likelihood estimates if we extend the condition ($\mathcal C$.\ref{C:cov}) by replacing in the expression of the matrices $\mathbf{A}_n, \mathbf{B}_n,$ and $\mathbf{C}_n$, $w(u)$ by ${w(u) \delta(u)}/ ({\rho(u; \boldsymbol \beta) + \delta(u)}), u \in D_n$ and by adding $\sup_{u \in \mathbb{R}^d} \delta(u)<\infty$.
\end{remark}

\begin{remark}
\label{HL}
We want to highlight here the main theoretical differences with the work by \cite{thurman2015regularized}. First, the methodology and results are available for the logistic regression likelihood. Second, we consider very general penalty function while \cite{thurman2015regularized} only considered the adaptive lasso method. Third, we do not assume, as in \cite{thurman2015regularized}, that $|D_n|^{-1} \mathbf{M}_n \to \mathbf{M}$ as $ n \to \infty$ (where $\mathbf{M}_n$ is $\mathbf{A}_n, \mathbf{B}_n,$ or $\mathbf{C}_n$), when $\mathbf{M}$ is a positive definite matrix. Instead we assume sharper condition assuming $\lim_{n \to \infty}\nu_{\min}(|D_n|^{-1} \mathbf{M}_n)> 0$, where $\mathbf{M}_n$ is either $\mathbf{A}_n$ or $ \mathbf{B}_n+\mathbf{C}_n$ and $\nu_{\min}(\mathbf{M'})$ is the smallest eigenvalue of a positive definite matrix $\mathbf{M'}$. This makes the proofs a little bit more technical.
\end{remark}

\subsection{Discussion of the conditions} \label{sec:discussion}
We adopt the conditions ($\mathcal C$.\ref{C:Dn})-($\mathcal C$.\ref{C:BnCn}) based on the paper from \cite{coeurjolly2014variational}. In condition ($\mathcal C$.\ref{C:Dn}), the assumption that $E$ contains $o$ in its interior can be made without loss of generality.  If instead $u$ is an interior point of $E$, then condition ($\mathcal C$.\ref{C:Dn}) could be modified to that any ball with centre $u$ and radius $r>0$ is contained in $D_n=nE$ for all sufficiently large $n$. Condition ($\mathcal C$.\ref{C:cov}) is quite standard. From conditions ($\mathcal C$.\ref{C:Theta})-($\mathcal C$.\ref{C:mixing}), the matrices $\mathbf{A}_n(w;\boldsymbol \beta_0)$, $\mathbf{B}_n(w;\boldsymbol \beta_0)$ and $\mathbf{C}_n(w;\boldsymbol \beta_0)$ are bounded by $|D_n|$ \citep[see e.g.][]{coeurjolly2014variational}.

Combination of conditions ($\mathcal C$.\ref{C:Dn})-($\mathcal C$.\ref{C:BnCn}) are used to establish a central limit theorem for $|D_n|^{-1/2}\ell_n^{(1)}(w;\boldsymbol \beta_0)$ using a general central limit theorem for triangular arrays of nonstationary random fields obtained by \cite{karaczony2006central}, which is an extension from \cite{bolthausen1982central}, then later extended to nonstationary random fields by \cite{guyon1995random}. As pointed out by \cite{coeurjolly2014variational}, condition ($\mathcal C$.\ref{C:BnCn}) is a spatial average assumption like when establishing asymptotic normality of ordinary least square estimators for linear models. This condition is also useful to make sure that the matrix $|D_n|^{-1}\{\mathbf{B}_n(w;\boldsymbol{\beta}_{0})+\mathbf{C}_n(w;\boldsymbol{\beta}_{0})\}$ is invertible. 
% Together with conditions ($\mathcal C$.\ref{C:cov})-($\mathcal C$.\ref{C:rhok}), condition ($\mathcal C$.\ref{C:BnCn}) ensure that $|D_n|^{-1/2}\ell_n^{(1)}(w;\boldsymbol \beta_0)$  is bounded in probability \big(see (\ref{eq:ln})\big). 
Conditions ($\mathcal C$.\ref{C:BnCn})-($\mathcal C$.\ref{C:An}) ensure that the matrix $\boldsymbol \Sigma_n(w;\boldsymbol{\beta}_{0})$ is invertible for sufficiently large $n$. Conditions ($\mathcal C$.\ref{C:Dn})-($\mathcal C$.\ref{C:BnCn}) are discussed in details for several models by \cite{coeurjolly2014variational}. They are satisfied for a large class of intensity functions and a large class of models including Poisson and Cox processes discussed in Section \ref{sec:intensity}.

Condition ($\mathcal C$.\ref{C:plambda}) controls the higher order terms in Taylor expansion of the penalty function. Roughly speaking, we ask the penalty function to be at least Lipschitz and thrice differentiable in a neighborhood of the true parameter vector. As it is, the condition looks technical, however, 
% In particular, condition ($\mathcal C$.\ref{C:plambda}) is also made to construct the variance covariance matrix of $\boldsymbol{\hat \beta}_{1}$ in Theorem~\ref{the2} (ii). 
it is obviously satisfied for ridge, lasso, elastic net (and the adaptive versions). According to the choice of $\lambda_n$, it is satisfied for SCAD and MC+ when $|\beta_{0j}|$, for $j=1,\ldots,s$, is not equal to $\gamma \lambda_n$ and/or $\lambda_n$.

Theorem \ref{the2} requires the conditions $a_n|D_n|^{1/2} \to 0$, $b_n|D_n|^{1/2} \to \infty$ and $c_n\to 0$ as $n \to \infty$ simultaneously. By requiring these assumptions, the corresponding penalized weighted Poisson likelihood estimators possess the oracle property and perform as well as weighted Poisson likelihood estimator which estimates $\boldsymbol{\beta}_{1}$ knowing the fact that $\boldsymbol{\beta}_{2}=\mathbf{0}$.

For the ridge regularization method, $b_n=0$, preventing from applying Theorem~\ref{the2} for this penalty. For lasso and elastic net, $a_n=K_2 b_n$ for some constant $K_2>0$ ($K_2$=1 for lasso). The two conditions $a_n|D_n|^{1/2} \to 0$ and $b_n|D_n|^{1/2} \to \infty$ as $n \to \infty$ cannot be satisfied simultaneously. This is different for the adaptive versions where a compromise can be found by adjusting the $\lambda_{n,j}$'s, as well as the two non-convex penalties SCAD and MC+, for which $\lambda_n$ can be adjusted. For the regularization methods considered in this paper, the condition $c_n\to 0$ is implied by the condition $a_n|D_n|^{1/2}\to 0$ as $n\to \infty$.

\section{Simulation study}
\label{simul}
We conduct a simulation study with three different scenarios, described in Section \ref{setup}, to compare the estimates of the regularized Poisson likelihood (PL) and that of the regularized weighted Poisson likelihood (WPL). We also want to explore the behaviour of the estimates using different regularization methods. Empirical findings are presented in Section \ref{result}. Furthermore, we compare, in Section \ref{sec:logi}, the estimates of the regularized (un)weighted logistic likelihood and the ones of the regularized (un)weighted Poisson likelihood.

\subsection{Simulation set-up}
\label{setup}
The setting is quite similar to that of \cite{waagepetersen2007estimating} and \cite{thurman2015regularized}. The spatial domain is $D=[0,1000] \times [0,500]$. We center and scale the $201 \times 101$ pixel images of elevation ($x_1$) and gradient of elevation ($x_2$) contained in the $\texttt{bei}$ datasets of $\texttt{spatstat}$ library in $\texttt{R}$ \citep{team2014r}, and use them as two true covariates. In addition, we create three different scenarios to define extra covariates:

\begin{enumerate}[{Scenario}~1.]
\item We generate eighteen $201 \times 101$ pixel images of covariates as standard Gaussian white noise and denote them by $x_3, \ldots, x_{20}$. We define $\mathbf{z}(u)=\mathbf{x}(u)=\{x_1(u),\ldots,x_{20}(u)\}^\top$ as the covariates vector. The regression coefficients for $z_3, \ldots, z_{20}$ are set to zero. \label{sce1}
\item First, we generate eighteen $201 \times 101$ pixel images of covariates as in the scenario 1. Second, we transform them, together with $x_1$ and $x_2$, to have multicollinearity. Third, we define $\mathbf{z}(u)=\mathbf{V} ^\top \mathbf{x}(u)$, where $\mathbf{x}(u)=\{x_1(u),\ldots,x_{20}(u)\}^\top$. More precisely, $\mathbf{V}$ is such that $\boldsymbol \Omega=\mathbf{V} ^\top \mathbf{V},$ and $(\Omega)_{ij}=(\Omega)_{ji}=0.7^{|i-j|}$ for $i,j=1,\ldots,20$, except $(\Omega)_{12}=(\Omega)_{21}=0$, to preserve the correlation between $x_1$ and $x_2$. The regression coefficients for $z_3, \ldots, z_{20}$ are set to zero. \label{sce2}
\item We consider a more complex situation. We center and scale the 13 soil nutrients covariates obtained from the study in tropical forest of Barro Colorado Island (BCI) in central Panama \citep[see][]{condit1998tropical, hubbell1999light, hubbell2005barro}, and use them as the extra covariates. Together with $x_1$ and $x_2$, we keep the structure of the covariance matrix to preserve the complexity of the situation. In this setting, we have $\mathbf{z}(u)=\mathbf{x}(u)=\{x_1(u),\ldots,x_{15}(u)\}^\top$.  The regression coefficients for $z_3, \ldots, z_{15}$ are set to zero. \label{sce3}
\end{enumerate}

The different maps of the covariates obtained from scenarios~\ref{sce2} and \ref{sce3} are depicted in Appendix \ref{mapcov}. Except for $z_3$ which has high correlation with $z_2$, the extra covariates obtained from scenario~\ref{sce2} tend to have a constant value (Figure~\ref{sim2}). This is completely different from the ones obtained from scenario~\ref{sce3} (Figure~\ref{sim3}).

The mean number of points over the domain $D$, $\mu$, is chosen to be 1600. We set the true intensity function to be $\rho(u; \boldsymbol \beta_0)=\{\beta_0+\beta_1 z_1(u)+\beta_2 z_2(u)\}$, where $\beta_1=2$ represents a relatively large effect of elevation, $\beta_2=0.75$ reflects a relatively small effect of gradient, and $\beta_0$ is selected such that each realization has 1600 points in average. Furthermore, we erode regularly the domain $D$ such that, with the same intensity function, the mean number of points over the new domain $D \ominus R$ becomes 400. The erosion is used to observe the convergence of the procedure as the observation domain expands. We consider the default number of dummy points for the Poisson likelihood, denoted by $\texttt{nd}^2$, as suggested in the $\texttt{spatstat}$ $\texttt{R}$ package, i.e. $\texttt{nd}^2 \approx 4m$, where $m$ is the number of points. With these scenarios, we simulate 2000 spatial point patterns from a Thomas point process using the $\mathtt{rThomas}$ function in the $\mathtt{spatstat}$ package. We also consider two different $\kappa$ parameters $(\kappa=5 \times 10^{-4}, \kappa=5 \times 10^{-5})$ as different levels of spatial interaction and let $\omega=20$. For each of the four combinations of $\kappa$ and $\mu$, we fit the intensity to the simulated point pattern realizations. We also fit the oracle model which only uses the two true covariates.

All models are fitted using modified internal function in $\mathtt{spatstat}$ \citep{baddeley2015spatial}, $\mathtt{glmnet}$ \citep{friedman2010regularization}, and $\mathtt{ncvreg}$ \citep{breheny2011coordinate}. A modification of the $\mathtt{ncvreg}$ $\mathtt{R}$ package is required to include the penalized weighted Poisson and logistic likelihood methods.

\subsection{Simulation results}
\label{result}
To better understand the behaviour of Thomas processes designed in this study, Figure \ref{fig:plot} shows the plot of the four realizations using different $\kappa$ and $\mu$. The smaller value of $\kappa$, the tighter the clusters since there are fewer parents. When $\mu=400$, i.e. by considering the realizations observed on $D \ominus R$, the mean number of points over the 2000 replications and standard deviation are 396 and 47 (resp. 400 and 137) when $\kappa=5 \times 10^{-4}$ (resp. $\kappa=5 \times 10^{-5}$). When $\mu=1600$, the mean number of points and standard deviation are 1604 and 174 (resp. 1589 and 529) when $\kappa=5 \times 10^{-4}$ (resp. $\kappa=5 \times 10^{-5}$).

\begin{figure}[!ht]
\begin{center}
\graphicspath{{d:/Figure/}}
\setlength{\tabcolsep}{0pt}
\renewcommand{\arraystretch}{0}
\begin{tabular}{c c}
\includegraphics[width=0.477\textwidth, height=0.11\textheight]{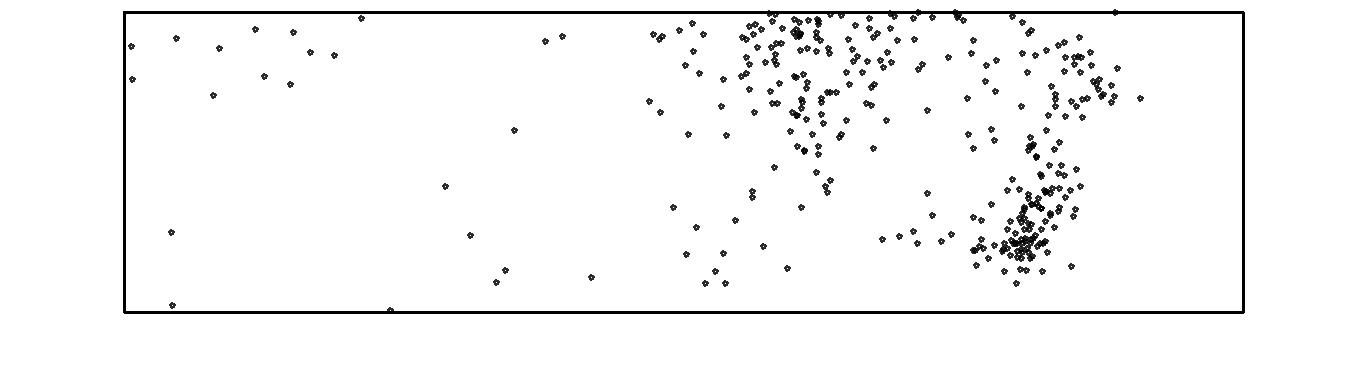} & \includegraphics[width=0.477\textwidth, height=0.11\textheight]{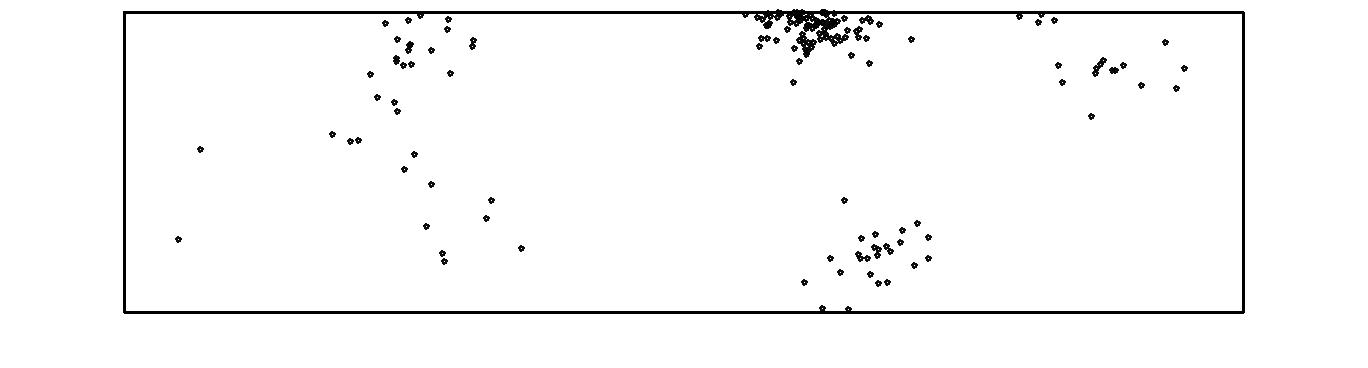}\\
\includegraphics[width=0.45\textwidth, height=0.1\textheight]{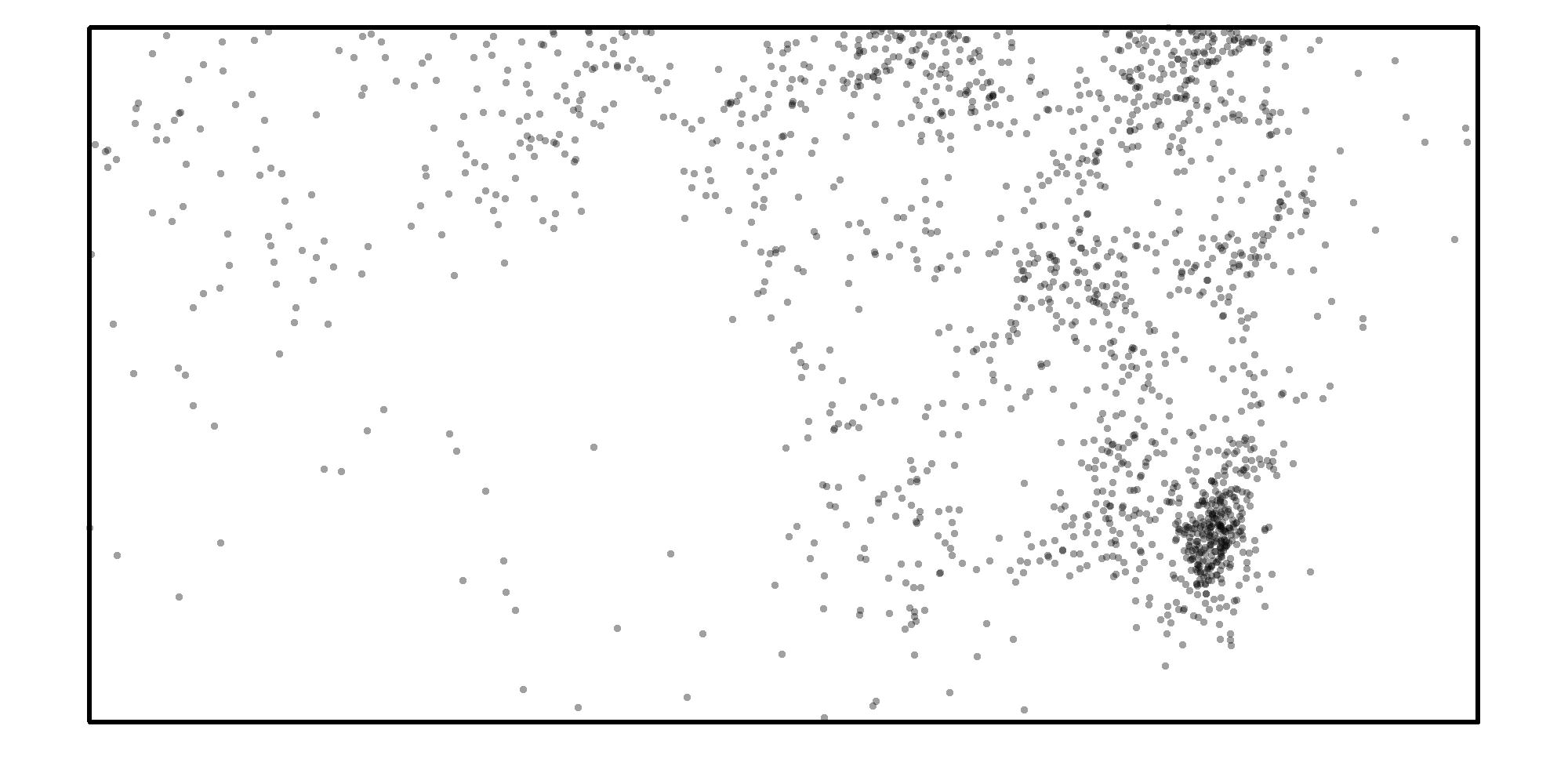} & \includegraphics[width=0.45\textwidth, height=0.1\textheight]{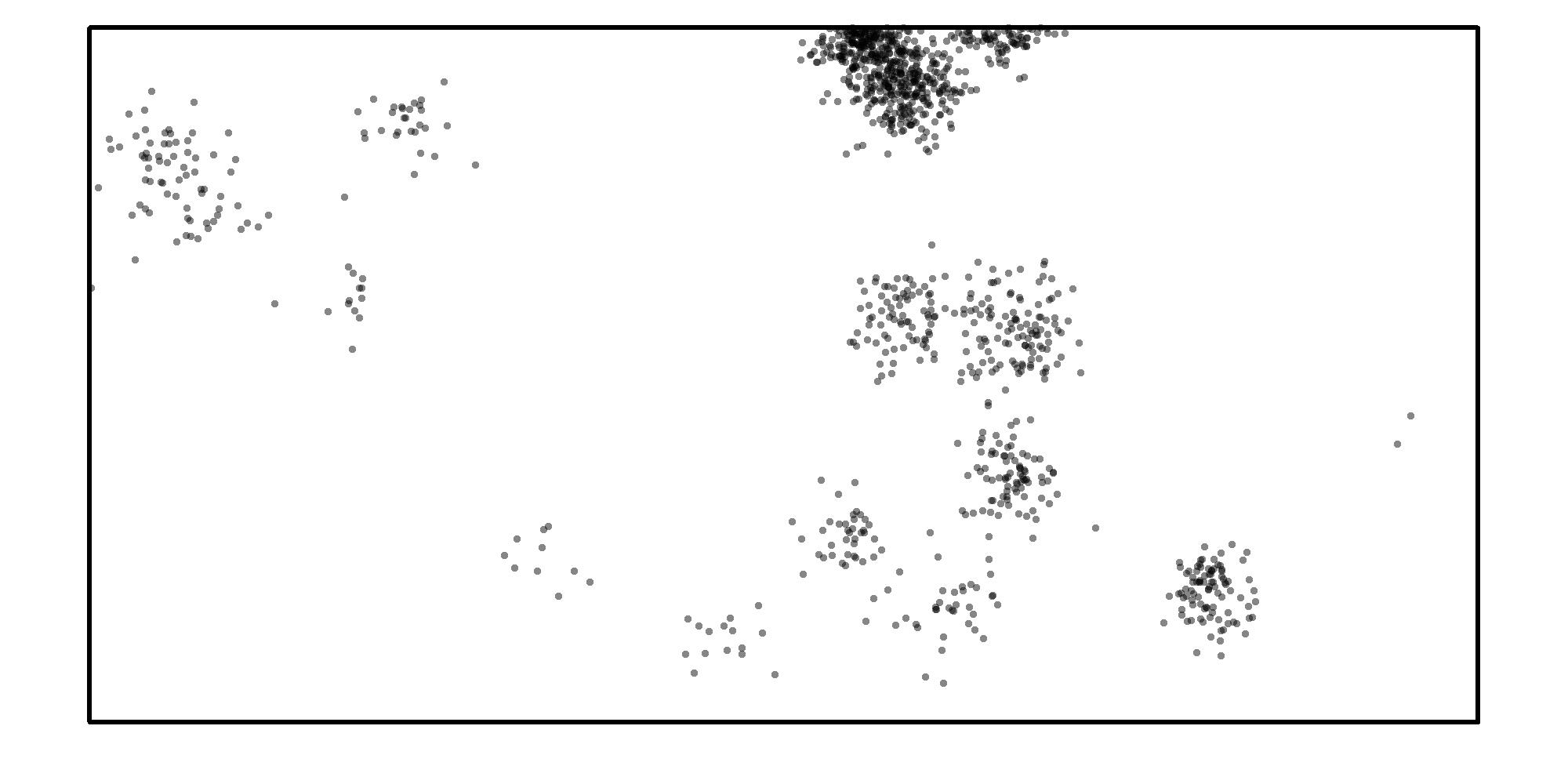}\\
\end{tabular}
\caption{Realizations of a Thomas process for $\mu=400$ (row 1), $\mu=1600$ (row 2), $\kappa=5 \times 10^{-4}$ (column 1), and $\kappa=5 \times 10^{-5}$ (column 2). }
\label{fig:plot}
\end{center}
\end{figure}

\setlength{\tabcolsep}{3pt}
\renewcommand{\arraystretch}{1.5}
\begin{table}[!ht]
\caption{Empirical selection properties (TPR, FPR, and PPV in $\%$) based on 2000 replications of Thomas processes on the domain $D \ominus R$ ($\mu=400$) for different values of $\kappa$ and for the three different scenarios. Different penalty functions are considered as well as two estimating equations, the regularized Poisson likelihood (PL) and the regularized weighted Poisson likelihood (WPL).}
\label{table:rate}
\centering
\begin{threeparttable}
\scalebox{0.79}{
\begin{tabular}{ l ccc | ccc | ccc | ccc}
\hline
\hline
\multirow{3}{*}{Method} & \multicolumn{6}{c}{$\kappa=5 \times 10^{-4}$}  & \multicolumn{6}{c}{$\kappa=5 \times 10^{-5}$}\\
\cline{2-13}
 & \multicolumn{3}{c}{Regularized PL} & \multicolumn{3}{c}{Regularized WPL} & \multicolumn{3}{c}{Regularized PL} & \multicolumn{3}{c}{Regularized WPL} \\
\cline{2-13}
 & TPR & FPR & PPV & TPR & FPR & PPV & TPR & FPR & PPV & TPR & FPR & PPV \\ 
  \hline
\hline
& \multicolumn{12}{c}{Scenario~\ref{sce1}}\\
\hline
  Ridge & 100 & 100 & 10 & 100 & 100 & 10 & 100 & 100 & 10 & 100 & 100 & 10 \\ 
  Lasso & 100\tnote{*} & 27 & 35 & 56 & 0\tnote{*} & 98 &  89 & 35 & 34 & 33 & 0\tnote{*} & 62 \\ 
  Enet & 100\tnote{*} & 59 & 18 & 39 & 4 & 36 & 91 & 60 & 21 & 31 & 0\tnote{*} & 57 \\ 
  AL & 100\tnote{*} & 1 & 93 & 58 & 0\tnote{*} & 100\tnote{*} &  88 & 7 & 72 & 35 & 0\tnote{*} & 67 \\ 
  Aenet & 100\tnote{*} & 6 & 72 & 59 & 0\tnote{*} & 99 &  89 & 12 & 61 & 34 & 0\tnote{*} & 64 \\ 
  SCAD & 100\tnote{*} & 18 & 41 & 66 & 0\tnote{*} & 98 & 90 & 17 & 46 & 31 & 0\tnote{*} & 56 \\ 
  MC+ & 100\tnote{*} & 21 & 36 & 68 & 0\tnote{*} & 96 & 90 & 21 & 42 & 30 & 0\tnote{*} & 54\\
   \hline
& \multicolumn{12}{c}{Scenario~\ref{sce2}}\\
\hline
  Ridge & 100 & 100 & 10 & 100 & 100 & 10 & 100 & 100 & 10 & 100 & 100 & 10 \\ 
    Lasso & 100\tnote{*} & 25 & 35 & 52 & 1 & 88 & 90 & 38 & 29 & 31 & 0\tnote{*} & 55\\ 
  Enet & 100\tnote{*} & 52 & 19 & 49 & 4 & 62 & 90 & 60 & 20 & 24 & 1 & 38 \\ 
  AL & 99 & 4 & 80 & 52 & 0\tnote{*} & 100\tnote{*} & 87 & 9 & 67 & 36 & 0\tnote{*} & 67 \\ 
  Aenet & 99 & 8 & 65 & 53 & 0\tnote{*} & 99 & 88 & 14 & 54 & 35 & 0\tnote{*} & 65 \\ 
  SCAD & 100\tnote{*} & 17 & 43 & 64 & 0\tnote{*} & 92 & 88 & 17 & 45 & 28 & 0\tnote{*} & 50  \\ 
  MC+ & 100\tnote{*} & 18 & 41 & 59 & 1 & 87 & 88 & 21 & 41 & 27 & 0\tnote{*} & 50\\ 
   \hline
& \multicolumn{12}{c}{Scenario~\ref{sce3}}\\
\hline
  Ridge & 100 & 100 & 13 & 100 & 100 & 13 & 100 & 100 & 13 & 100 & 100 & 13 \\
  Lasso & 100\tnote{*} & 56 & 24 & 52 & 2 & 87  & 98 & 89 & 15 & 13 & 2 & 20 \\ 
  Enet & 100\tnote{*} & 76 & 18 & 47 & 4 & 63 & 99 & 94 & 14 & 8 & 2 & 11 \\ 
  AL & 100\tnote{*} & 29 & 42 & 52 & 0\tnote{*} & 100\tnote{*} & 95 & 77 & 17 & 18 & 2 & 30 \\ 
  Aenet & 100\tnote{*} & 38 & 33 & 54 & 0\tnote{*} & 99 & 96 & 82 & 16 & 15 & 1 & 25 \\ 
  SCAD & 100\tnote{*} & 34 & 33 & 58 & 0\tnote{*} & 85 & 95 & 71 & 18 & 13 & 1 & 22 \\ 
  MC+ & 100\tnote{*} & 35 & 32 & 56 & 0\tnote{*} & 84 & 95 & 71 & 18 & 13 & 1 & 23 \\ 
   \hline

\end{tabular}
}
\begin{tablenotes}
\item[*] Approximate value
\end{tablenotes}
\end{threeparttable}
\end{table}

\setlength{\tabcolsep}{3pt}
\renewcommand{\arraystretch}{1.5}
\begin{table}[!ht]
\caption{Empirical selection properties (TPR, FPR, and PPV in $\%$) based on 2000 replications of Thomas processes on the domain $D$ ($\mu=1600$) for different values of $\kappa$ and for the three different scenarios. Different penalty functions are considered as well as two estimating equations, the regularized Poisson likelihood (PL) and the regularized weighted Poisson likelihood (WPL).}
\label{table:rate2}
\centering
\begin{threeparttable}
\scalebox{0.79}{
\begin{tabular}{ l ccc | ccc | ccc | ccc}
\hline
\hline
\multirow{3}{*}{Method} & \multicolumn{6}{c}{$\kappa=5 \times 10^{-4}$}  & \multicolumn{6}{c}{$\kappa=5 \times 10^{-5}$}\\
\cline{2-13}
 & \multicolumn{3}{c}{Regularized PL} & \multicolumn{3}{c}{Regularized WPL} & \multicolumn{3}{c}{Regularized PL} & \multicolumn{3}{c}{Regularized WPL} \\
\cline{2-13}
 & TPR & FPR & PPV & TPR & FPR & PPV & TPR & FPR & PPV & TPR & FPR & PPV \\ 
  \hline
\hline
& \multicolumn{12}{c}{Scenario~\ref{sce1}}\\
\hline
Ridge & 100 & 100 & 10 & 100 & 100 & 10 & 100 & 100 & 10 & 100 & 100 & 10  \\ 
  Lasso & 100 & 26 & 35 & 52 & 0\tnote{*} & 100\tnote{*}  & 98 & 48 & 22 & 56 & 0\tnote{*} & 96\\ 
  Enet & 100 & 64 & 16 & 55 & 6 & 50 & 99 & 76 & 14 & 50 & 5 & 45 \\ 
  AL & 100 & 0\tnote{*} & 98 & 50 & 0 & 100 & 96 & 6 & 77 & 55 & 0\tnote{*} & 98 \\ 
  Aenet & 100 & 4 & 79 & 54 & 0\tnote{*} & 100\tnote{*} & 97 & 11 & 60 & 57 & 0\tnote{*} & 96 \\ 
  SCAD & 100 & 17 & 50 & 60 & 0\tnote{*} & 100\tnote{*} & 98 & 18 & 47 & 52 & 0\tnote{*} & 90 \\ 
  MC+ & 100 & 22 & 47 & 60 & 0\tnote{*} & 97 & 98 & 23 & 42 & 44 & 0\tnote{*} & 79 \\ 
\hline
& \multicolumn{12}{c}{Scenario~\ref{sce2}}\\
\hline
Ridge & 100 & 100 & 10 & 100 & 100 & 10  & 100 & 100 & 10 & 100 & 100 & 10 \\ 
  Lasso & 100 & 26 & 33 & 51 & 0\tnote{*} & 97  &  98 & 43 & 24 & 52 & 1 & 91 \\ 
  Enet & 100 & 56 & 18 & 51 & 5 & 55  &  99 & 69 & 15 & 49 & 4 & 62 \\ 
  AL & 100 & 1 & 92 & 51 & 0 & 100  & 96 & 10 & 67 & 53 & 0\tnote{*} & 99\\ 
  Aenet  & 100 & 4 & 78 & 51 & 0\tnote{*} & 100\tnote{*}  &  97 & 15 & 52 & 53 & 0\tnote{*} & 98\\ 
  SCAD & 100 & 21 & 37 & 53 & 1 & 85  & 96 & 16 & 50 & 45 & 1 & 77\\ 
  MC+ & 100 & 24 & 35 & 47 & 2 & 76  & 97 & 19 & 47 & 42 & 2 & 72  \\ 
   \hline
& \multicolumn{12}{c}{Scenario~\ref{sce3}}\\
\hline
  Ridge & 100 & 100 & 13 & 100 & 100 & 13 & 100 & 100 & 13 & 100 & 100 & 13 \\ 
  Lasso & 100 & 69 & 19 & 52 & 1 & 96 & 100 & 95 & 14 & 48 & 4 & 75 \\ 
 Enet & 100 & 85 & 16 & 52 & 5 & 71 & 100 & 97 & 14 & 43 & 5 & 62 \\ 
  AL & 100 & 43 & 32 & 51 & 0\tnote{*} & 100\tnote{*} & 99 & 86 & 15 & 51 & 2 & 86 \\ 
  Aenet & 100 & 49 & 27 & 52 & 0\tnote{*} & 99 & 99 & 89 & 15 & 50 & 3 & 82 \\ 
  SCAD & 100 & 47 & 27 & 43 & 2 & 72  & 99 & 78 & 17 & 40 & 2 & 63 \\ 
  MC+ & 100 & 48 & 26 & 44 & 2 & 75& 99 & 79 & 17 & 37 & 2 & 61 \\ 
   \hline

\end{tabular}
}
\begin{tablenotes}
\item[*] Approximate value
\end{tablenotes}
\end{threeparttable}
\end{table}

Tables \ref{table:rate} and \ref{table:rate2} present the selection properties of the estimates using the penalized PL and the penalized WPL methods. Similarly to \cite{buhlmann2011statistics}, the indices we consider are the true positive rate (TPR), the false positive rate (FPR), and the positive predictive value (PPV). TPR corresponds to the ratio of the selected true covariates over the number of true covariates, while FPR corresponds to the ratio of the selected noisy covariates over the number of noisy covariates. TPR explains how the model can correctly select both $z_1$ and $z_2$. Finally, FPR investigates how the model uncorrectly select among $z_3$ to $z_p$ ($p=20$ for scenarios~\ref{sce1} and \ref{sce2} and $p=15$ for scenario~\ref{sce3}). PPV corresponds to the ratio of the selected true covariates over the total number of selected covariates in the model. PPV describes how the model can approximate the oracle model in terms of selection. Therefore, we want to find the methods which have a TPR and a PPV close to 100$\%$, and a FPR close to 0.

Generally, for both the penalized PL and the penalized WPL methods, the best selection properties are obtained for a larger value of $\kappa$ which shows weaker spatial dependence. For a more clustered one, indicated by a smaller value of $\kappa$, it seems more difficult to select the true covariates. As $\mu$ increases from 400 (Table \ref{table:rate}) to 1600 (Table \ref{table:rate2}), the TPR tends to improve, so the model can select both $z_1$ and $z_2$ more frequently.

Ridge, lasso, and elastic net are the regularization methods that cannot satisfy our theorems. It is firstly emphasized that all covariates are always selected by the ridge so that the rates are never changed whatever method used. For the penalized PL with lasso and elastic net regularization, it is shown that they tend to have quite large value of FPR, meaning that they wrongly keep the noisy covariates more frequently. When the penalized WPL is applied, we gain smaller FPR, but we suffer from smaller TPR at the same time. This smaller TPR actually comes from the unselection of $z_2$ which has smaller coefficient than that of $z_1$.

When we apply adaptive lasso, adaptive elastic net, SCAD, and MC+, we achieve better performance, especially for FPR which is closer to zero which automatically improves the PPV. Adaptive elastic net (resp. elastic net) has slightly larger FPR than adaptive lasso (resp. lasso). Among all regularization methods considered in this paper, adaptive lasso seems to outperform the other ones.

Considering scenarios~\ref{sce1} and \ref{sce2}, we observe best selection properties for the penalized PL combined with adaptive lasso. As the design is getting more complex for scenario~\ref{sce3}, applying the penalized PL suffers from much larger FPR, indicating that this method may not be able to overcome the complicated situation. However, when we use the penalized WPL, the properties seem to be more stable for the different designs of simulation study. One more advantage when considering the penalized WPL is that we can remove almost all extra covariates. It is worth noticing that we may suffer from smaller TPR when we apply the penalized WPL, but we lose the only less informative covariates. From Tables \ref{table:rate} and \ref{table:rate2}, when we are faced with complex situation, we would recommend the use of the penalized WPL method with adaptive lasso penalty if the focus is on selection properties. Otherwise, the use of the penalized PL combined with adaptive lasso penalty is more preferable.

\setlength{\tabcolsep}{3pt}
\renewcommand{\arraystretch}{1.5}
\begin{table}[!ht]
\caption{Empirical prediction properties (Bias, SD, and RMSE) based on 2000 replications of Thomas processes on the domain $D \ominus R$ ($\mu=400$) for different values of $\kappa$ and for the three different scenarios. Different penalty functions are considered as well as two estimating equations, the regularized Poisson likelihood (PL) and the regularized weighted Poisson likelihood (WPL).}
\label{table:mean}
\begin{center}
\scalebox{0.79}{
\begin{tabular}{l ccc | ccc |ccc | ccc}
  \hline
\hline
\multirow{3}{*}{Method} & \multicolumn{6}{c}{$\kappa=5 \times 10^{-4}$}  & \multicolumn{6}{c}{$\kappa=5 \times 10^{-5}$}\\
\cline{2-13}
 & \multicolumn{3}{c}{Regularized PL} & \multicolumn{3}{c}{Regularized WPL} & \multicolumn{3}{c}{Regularized PL} & \multicolumn{3}{c}{Regularized WPL} \\
\cline{2-13}
& Bias & SD & RMSE & Bias & SD & RMSE & Bias & SD & RMSE & Bias & SD & RMSE \\ 
  \hline
\hline
& \multicolumn{12}{c}{Scenario~\ref{sce1}}\\
\hline
Oracle & 0.11 & 0.18 & 0.21 & 0.64 & 0.20 & 0.67 & 0.29 & 0.81 & 0.86 & 0.57 & 0.54 & 0.78 \\ 
  Ridge & 0.11 & 0.38 & 0.40 & 0.72 & 0.69 & 1.00 & 0.28 & 1.26 & 1.29 & 0.98 & 1.03 & 1.42 \\ 
  Lasso & 0.28 & 0.32 & 0.42 & 1.06 & 0.32 & 1.11 & 0.47 & 0.99 & 1.10 & 1.40 & 0.73 & 1.58 \\ 
  Enet & 0.24 & 0.38 & 0.44 & 1.28 & 0.28 & 1.31 & 0.45 & 1.04 & 1.13 & 1.59 & 0.58 & 1.70 \\ 
  AL & 0.10 & 0.29 & 0.31 & 0.87 & 0.32 & 0.92 & 0.38 & 0.96 & 1.03 & 1.18 & 0.93 & 1.50 \\ 
  Aenet & 0.14 & 0.30 & 0.33 & 0.93 & 0.39 & 1.01 & 0.40 & 0.96 & 1.04 & 1.29 & 0.82 & 1.53 \\ 
  SCAD & 0.26 & 0.27 & 0.38 & 1.06 & 0.37 & 1.12 & 0.46 & 0.79 & 0.91 & 1.49 & 0.67 & 1.64 \\ 
  MC+ & 0.28 & 0.28 & 0.39 & 1.04 & 0.38 & 1.11 & 0.47 & 0.78 & 0.92 & 1.48 & 0.70 & 1.64 \\ 
\hline
& \multicolumn{12}{c}{Scenario~\ref{sce2}}\\
  \hline
Oracle & 0.12 & 0.23 & 0.26 & 0.71 & 0.26 & 0.76 & 0.30 & 0.78  & 0.84 & 0.59 & 0.62 & 0.84 \\ 
  Ridge & 0.14 & 0.46 & 0.48 & 0.69 & 0.93 & 1.16 & 0.32 & 1.23 & 1.27 & 0.92 & 1.15 & 1.47 \\ 
  Lasso & 0.34 & 0.33 & 0.48 & 1.20 & 0.37 & 1.26 & 0.45 & 0.96 & 1.06 & 1.50 & 0.69 & 1.65 \\ 
  Enet & 0.38 & 0.40 & 0.55 & 1.40 & 0.35 & 1.44 & 0.44 & 1.03 & 1.12 & 1.78 & 0.49 & 1.85 \\ 
  AL & 0.20 & 0.33 & 0.39 & 0.85 & 0.32 & 0.91 & 0.37 & 0.93 & 1.00 & 1.17 & 0.86 & 1.45 \\ 
  Aenet & 0.25 & 0.33 & 0.42 & 0.96 & 0.34 & 1.02 & 0.40 & 0.94 & 1.02 & 1.29 & 0.78 & 1.51 \\ 
  SCAD & 0.38 & 0.30 & 0.48 & 0.95 & 0.48 & 1.06 & 0.44 & 0.80 & 0.91 & 1.53 & 0.70 & 1.68 \\ 
  MC+ & 0.39 & 0.30 & 0.49 & 1.01 & 0.49 & 1.13 & 0.44 & 0.80 & 0.92 & 1.52 & 0.71 & 1.68 \\ 
   \hline
 & \multicolumn{12}{c}{Scenario~\ref{sce3}}\\
\hline
Oracle & 0.12 & 0.46 & 0.48 & 0.70 & 0.26 & 0.75 & 0.65 & 1.14 & 1.31 & 0.87 & 0.88 & 1.24 \\ 
  Ridge & 0.13 & 1.03 & 1.04 & 0.71 & 1.45 & 1.62 & 0.52 & 3.10 & 3.14 & 0.90 & 2.86 & 3.00 \\ 
  Lasso & 0.20 & 0.69 & 0.71 & 1.26 & 0.40 & 1.32 & 0.51 & 2.91 & 2.95 & 1.93 & 0.68 & 2.04 \\ 
  Enet & 0.21 & 0.83 & 0.86 & 1.53 & 0.40 & 1.58 & 0.52 & 2.94 & 2.99 & 2.03 & 0.60 & 2.12 \\ 
  AL & 0.18 & 0.57 & 0.60 & 0.91 & 0.33 & 0.97 & 0.52 & 2.80 & 2.85 & 1.77 & 0.84 & 1.96 \\ 
  Aenet & 0.22 & 0.61 & 0.65 & 1.04 & 0.36 & 1.10 & 0.52 & 2.80 & 2.85 & 1.86 & 0.73 & 2.00 \\ 
  SCAD & 0.27 & 0.61 & 0.67 & 1.18 & 0.59 & 1.32 & 0.48 & 2.49 & 2.54 & 1.91 & 0.64 & 2.02 \\ 
  MC+ & 0.27 & 0.62 & 0.68 & 1.20 & 0.58 & 1.33 & 0.48 & 2.49 & 2.54 & 1.89 & 0.67 & 2.00 \\ 
 
   \hline

\end{tabular}
}
\end{center}
\end{table}

\setlength{\tabcolsep}{3pt}
\renewcommand{\arraystretch}{1.5}
\begin{table}[!ht]
\caption{Empirical prediction properties (Bias, SD, and RMSE) based on 2000 replications of Thomas processes on the domain $D$ ($\mu=1600$) for different values of $\kappa$ and for the three different scenarios. Different penalty functions are considered as well as two estimating equations, the regularized Poisson likelihood (PL) and the regularized weighted Poisson likelihood (WPL).}
\label{table:mean3}
\begin{center}
\scalebox{0.79}{
\begin{tabular}{l ccc | ccc | ccc | ccc}
  \hline
\hline
\multirow{3}{*}{Method} & \multicolumn{6}{c}{$\kappa=5 \times 10^{-4}$}  & \multicolumn{6}{c}{$\kappa=5 \times 10^{-5}$}\\
\cline{2-13}
 & \multicolumn{3}{c}{Regularized PL} & \multicolumn{3}{c}{Regularized WPL} & \multicolumn{3}{c}{Regularized PL} & \multicolumn{3}{c}{Regularized WPL} \\
\cline{2-13}
 & Bias & SD & RMSE & Bias & SD & RMSE & Bias & SD & RMSE & Bias & SD & RMSE \\ 
  \hline
\hline
& \multicolumn{12}{c}{Scenario~\ref{sce1}}\\
\hline
Oracle & 0.05 & 0.11 & 0.12 & 0.33 & 0.15 & 0.37 & 0.16 & 0.45 & 0.48 & 0.41 & 0.22 & 0.46 \\ 
  Ridge & 0.04 & 0.21 & 0.21 & 0.70 & 0.55 & 0.90 & 0.13 & 0.72 & 0.73 & 0.74 & 0.58 & 0.94 \\ 
  Lasso & 0.14 & 0.19 & 0.24 & 1.03 & 0.20 & 1.05 & 0.23 & 0.60 & 0.64 & 0.99 & 0.43 & 1.08 \\ 
  Enet & 0.11 & 0.22 & 0.24 & 1.14 & 0.29 & 1.17 & 0.20 & 0.62 & 0.65 & 1.12 & 0.43 & 1.20 \\ 
  AL & 0.04 & 0.18 & 0.18 & 0.87 & 0.18 & 0.89 & 0.16 & 0.58 & 0.60 & 0.87 & 0.42 & 0.96 \\ 
  Aenet & 0.05 & 0.18 & 0.18 & 0.96 & 0.22 & 0.99 & 0.17 & 0.58 & 0.60 & 0.90 & 0.48 & 1.02 \\ 
  SCAD & 0.19 & 0.18 & 0.26 & 1.30 & 0.34 & 1.34 & 0.14 & 0.53 & 0.55 & 1.37 & 0.51 & 1.46 \\ 
  MC+ & 0.20 & 0.18 & 0.27 & 1.33 & 0.28 & 1.36 & 0.15 & 0.53 & 0.55 & 1.38 & 0.52 & 1.48 \\ 
\hline
 & \multicolumn{12}{c}{Scenario~\ref{sce2}}\\
\hline
Oracle & 0.05 & 0.15 & 0.16 & 0.36 & 0.17 & 0.40 & 0.18 & 0.46 & 0.49 & 0.39 & 0.26 & 0.47 \\ 
  Ridge & 0.05 & 0.27 & 0.27 & 0.69 & 0.62 & 0.94 & 0.17 & 0.74 & 0.80 & 0.78 & 0.64 & 1.01 \\ 
  Lasso & 0.16 & 0.20 & 0.25 & 1.16 & 0.24 & 1.18 & 0.23 & 0.60 & 0.64 & 1.14 & 0.43 & 1.22 \\ 
  Enet & 0.17 & 0.23 & 0.29 & 1.24 & 0.24 & 1.26 & 0.23 & 0.63 & 0.67 & 1.33 & 0.42 & 1.40 \\ 
  AL & 0.07 & 0.18 & 0.20 & 0.85 & 0.18 & 0.87 & 0.18 & 0.58 & 0.61 & 0.83 & 0.41 & 0.93 \\ 
  Aenet & 0.09 & 0.19 & 0.21 & 0.94 & 0.20 & 0.96 & 0.20 & 0.59 & 0.62 & 0.92 & 0.41 & 1.01 \\ 
  SCAD & 0.26 & 0.20 & 0.33 & 1.26 & 0.51 & 1.36 & 0.19 & 0.51 & 0.55 & 1.31 & 0.60 & 1.44 \\ 
  MC+ & 0.26 & 0.20 & 0.33 & 1.31 & 0.55 & 1.42 & 0.19 & 0.51 & 0.55 & 1.32 & 0.61 & 1.46 \\ 
\hline
& \multicolumn{12}{c}{Scenario~\ref{sce3}}\\
\hline
Oracle & 0.13 & 0.31 & 0.34 & 0.43 & 0.18 & 0.47 & 0.31 & 0.96 & 1.01 & 0.75 & 0.35 & 0.83 \\ 
  Ridge & 0.11 & 0.84 & 0.86 & 0.70 & 0.96 & 1.19 & 0.23 & 2.50 & 2.51 & 1.02 & 1.43 & 1.76 \\ 
  Lasso & 0.12 & 0.64 & 0.65 & 1.14 & 0.29 & 1.17 & 0.22 & 2.41 & 2.42 & 1.40 & 0.61 & 1.52 \\ 
  Enet & 0.13 & 0.71 & 0.73 & 1.35 & 0.30 & 1.39 & 0.23 & 2.42 & 2.43 & 1.63 & 0.56 & 1.73 \\ 
  AL & 0.14 & 0.55 & 0.57 & 0.89 & 0.18 & 0.91 & 0.22 & 2.37 & 2.38 & 1.12 & 0.67 & 1.31 \\ 
  Aenet & 0.15 & 0.56 & 0.58 & 1.00 & 0.22 & 1.03 & 0.22 & 2.36 & 2.37 & 1.26 & 0.64 & 1.41 \\ 
  SCAD & 0.24 & 0.58 & 0.62 & 1.41 & 0.40 & 1.47 & 0.24 & 2.09 & 2.10 & 1.50 & 0.68 & 1.65 \\ 
  MC+ & 0.24 & 0.58 & 0.63 & 1.44 & 0.42 & 1.50 & 0.24 & 2.09 & 2.10 & 1.49 & 0.71 & 1.65 \\ 
\hline

\end{tabular}
}
\end{center}
\end{table}

Tables \ref{table:mean} and \ref{table:mean3} give the prediction properties of the estimates in terms of biases, standard deviations (SD), and square root of mean squared errors (RMSE), some criterions we define by
\begin{align*}
\mathrm{Bias}&=\left [ {\sum_{j=1}^p { \{\hat{\mathbb{E}}(\hat \beta_j)-\beta_j\}^2}}  \right ]^\frac{1}{2},
\mathrm{SD}=\left [ {\sum_{j=1}^p { \hat \sigma_j^2}}  \right ]^\frac{1}{2},
\mathrm{RMSE}=\left [ {\sum_{j=1}^p { \hat{\mathbb{E}}(\hat \beta_j-\beta_j)^2}}  \right ]^\frac{1}{2},
\end{align*}
where $\hat{\mathbb{E}}(\hat \beta_j)$ and $ \hat \sigma_j^2$ are respectively the empirical mean and variance of the estimates $\hat \beta_j$, for $j=1,\ldots,p$, where $p=20$ for scenarios~\ref{sce1} and \ref{sce2}, and $p=15$ for scenario~\ref{sce3}.

In general, the properties improve with larger value of $\kappa$ and $\mu$ due to weaker spatial dependence and larger sample size. For the oracle model where the model contains only $z_1$ and $z_2$, the WPL estimates are more efficient than the PL estimates, particularly in the more clustered case, agreeing with the findings by \cite{guan2010weighted}.

When the regularization methods are applied, the bias increases in general, especially when we consider the penalized WPL method.  The regularized WPL has a larger bias since this method does not select $z_2$ much more frequently. Furthermore, weighted method seems to introduce extra bias, even though the regularization is not considered as in the oracle model. For a low clustered process, the SD using the penalized WPL is similar to that of the penalized PL which may be because of the weaker dependence represented by larger $\kappa$, making weight surface $w(\cdot)$ closer to 1. However, a larger RMSE is obtained from the penalized WPL. When we observe the more clustered process, we obtain smaller SD using the penalized WPL which explains why in some cases (mainly scenario~\ref{sce3}) the RMSE gets smaller.

For the ridge method, the bias is closest to that of the oracle model, but it has the largest SD. Among the regularization methods, the adaptive lasso method has the best performance in terms of prediction.

Considering scenarios~\ref{sce1} and \ref{sce2}, we obtain best properties when we apply the penalized PL with adaptive lasso penalty. As the design is getting much more complex for scenario~\ref{sce3}, when we use the penalized PL with adaptive lasso, the SD is doubled and even quadrupled due to the overselection of many unimportant covariates. In particular, for the more clustered process, the better properties are even obtained by applying the regularized WPL combined with adaptive lasso. From Tables \ref{table:mean} and \ref{table:mean3}, when the focus is on prediction properties, we would recommend to apply the penalized WPL combined with adaptive lasso penalty when the observed point pattern is very clustered and when covariates have a complex stucture of covariance matrix. Otherwise, the use of the penalized PL combined with adaptive lasso penalty is more favorable. Our recommendations in terms of prediction support as what we recommend in terms of selection.

\subsection{Logistic regression}
\label{sec:logi}

Our concern here is to compare the estimates of the penalized (un)weighted logistic likelihood to that of the penalized (un)weighted Poisson likelihood with different number of dummy points. We remind that the number of dummy points comes up when we discretize the integral terms in (\ref{eq:wee}) and in (\ref{eq:logilike2}). In the following, to ease the presentation, we use the term Poisson estimates (resp. logistic estimates) for parameter estimates obtained using the regularized Poisson likelihood (resp. the regularized logistic regression likelihood).

\setlength{\tabcolsep}{3pt}
\renewcommand{\arraystretch}{1.5}
\begin{table}[!ht]
\caption{Empirical selection properties (TPR, FPR, and PPV in $\%$) based on 2000 replications of Thomas processes on the domain $D$ ($\mu=1600$) for $\kappa=5 \times 10^{-5}$, for two different scenarios, and for three different numbers of dummy points. Different estimating equations are considered, the regularized (un)weighted Poisson and (un)weighted logistic regression likelihoods, employing adaptive lasso regularization method.}
\label{table:logselect} 
\centering
\begin{threeparttable}
\scalebox{0.79}{
\begin{tabular}{rr ccc | ccc | ccc | ccc}
\hline
\hline
 \multirow{3}{*}{Method} & \multirow{3}{*}{nd} & \multicolumn{6}{c}{Scenario~\ref{sce2}} & \multicolumn{6}{c}{Scenario~\ref{sce3}}\\ 
\cline{3-14}
 &  & \multicolumn{3}{c}{Unweighted} & \multicolumn{3}{c}{Weighted} & \multicolumn{3}{c}{Unweighted} & \multicolumn{3}{c}{Weighted}\\ 
\hline
\hline
 &  & TPR & FPR & PPV & TPR & FPR & PPV & TPR & FPR & PPV & TPR & FPR & PPV \\ 
  \hline
  \multirow{3}{*}{Poisson} & 20 & 96 & 35 & 32 & 53 & 0\tnote{*} & 96 & 98 & 82 & 16 & 47 & 2 & 79 \\ 
   & 40 & 95 & 6 & 77 & 52 & 0\tnote{*} & 95 & 98 & 83 & 16 & 46 & 2 & 77  \\ 
   & 80  & 95 & 4 & 83 & 50 & 0\tnote{*} & 94 & 98 & 83 & 16 & 43 & 2 & 74  \\ 
\hline
  \multirow{3}{*}{Logistic} & 20 & 94 & 11 & 60 & 49 & 0\tnote{*} & 91 & 98 & 72 & 20 & 41 & 2 & 73 \\ 
   & 40 & 94 & 8 & 67 & 50 & 0\tnote{*} & 93 & 99 & 81 & 16 & 43 & 2 & 74  \\ 
  & 80  & 94 & 5 & 77 & 50 & 0\tnote{*} & 93 &  99 & 83 & 16 & 42 & 2 & 73 \\ 
   \hline
 \end{tabular}
}
\begin{tablenotes}
\item[*] Approximate value
\end{tablenotes}
\end{threeparttable}
\end{table}

\setlength{\tabcolsep}{3pt}
\renewcommand{\arraystretch}{1.5}
\begin{table}[ht]
\caption{Empirical prediction properties (Bias, SD, and RMSE) based on 2000 replications of Thomas processes on the domain $D$ ($\mu=1600$) for $\kappa=5 \times 10^{-5}$, for two different scenarios, and for three different numbers of dummy points. Different estimating equations are considered, the regularized (un)weighted Poisson and (un)weighted logistic regression likelihoods, employing adaptive lasso regularization method.}
\label{table:logpredict} 
\centering
\scalebox{0.79}{
\begin{tabular}{r  r  ccc | ccc | ccc | ccc}
\hline
\hline
 \multirow{3}{*}{Method} & \multirow{3}{*}{nd} & \multicolumn{6}{c}{Scenario~\ref{sce2}} & \multicolumn{6}{c}{Scenario~\ref{sce3}}\\ 
\cline{3-14}
 &  & \multicolumn{3}{c}{Unweighted} & \multicolumn{3}{c}{Weighted} & \multicolumn{3}{c}{Unweighted} & \multicolumn{3}{c}{Weighted}\\ 
\cline{3-14}
 & & Bias & SD & RMSE & Bias & SD & RMSE & Bias & SD & RMSE & Bias & SD & RMSE \\ 
  \hline
 \hline
& & \multicolumn{12}{c}{No regularization}\\
\hline
\multirow{3}{*}{Poisson}  & 20   & 0.37 & 0.64 & 0.74 & 0.29 & 0.74 & 0.79 & 0.28 & 2.15 & 2.16 & 0.42 & 2.06 & 2.11 \\ 
   & 40 & 0.14 & 0.63 & 0.65 & 0.16 & 0.73 & 0.75 & 0.33 & 2.47 & 2.50 & 0.42 & 2.32 & 2.35 \\ 
  & 80 & 0.17 & 0.64 & 0.66 & 0.11 & 0.75 & 0.76 & 0.26 & 2.57 & 2.58 & 0.43 & 2.40 & 2.43 \\ 
\hline
    \multirow{3}{*}{Logistic} & 20 & 0.03 & 0.69 & 0.69 & 0.32 & 1.34 & 1.37 & 0.20 & 2.31 & 2.32 & 0.36 & 2.95 & 2.97 \\ 
   & 40 & 0.07 & 0.60 & 0.61 & 0.12 & 0.96 & 0.97 & 0.23 & 2.31 & 2.32 & 0.37 & 2.56 & 2.58 \\ 
  & 80 & 0.10 & 0.60 & 0.61 & 0.14 & 0.81 & 0.82 & 0.25 & 2.36 & 2.38 & 0.42 & 2.38 & 2.42 \\ 
\hline
& & \multicolumn{12}{c}{Adaptive lasso}\\
\hline
 \multirow{3}{*}{Poisson}  & 20   & 0.30 & 0.59 & 0.67 & 0.86 & 0.47 & 0.98 & 0.30 & 2.00 & 2.03 & 1.14 & 0.68 & 1.33 \\ 
   & 40 & 0.20 & 0.58 & 0.61 & 0.86 & 0.49 & 0.99 & 0.33 & 2.33 & 2.35 & 1.18 & 0.70 & 1.37 \\ 
  & 80 & 0.18 & 0.59 & 0.62 & 0.88 & 0.51 & 1.02 & 0.28 & 2.41 & 2.43 & 1.22 & 0.71 & 1.41 \\ 
\hline
    \multirow{3}{*}{Logistic} & 20 & 0.19 & 0.50 & 0.53 & 0.95 & 0.55 & 1.09 & 0.23 & 2.06 & 2.07 & 1.26 & 0.73 & 1.45 \\ 
   & 40 & 0.18 & 0.52 & 0.55 & 0.89 & 0.52 & 1.03 & 0.23 & 2.15 & 2.16 & 1.22 & 0.72 & 1.42 \\ 
  & 80 & 0.18 & 0.55 & 0.58 & 0.89 & 0.52 & 1.03 & 0.25 & 2.21 & 2.22 & 1.24 & 0.71 & 1.43 \\ 
  \hline
\end{tabular}
}
\end{table}

We consider three different numbers of dummy points denoted by $\texttt{nd}^2$. By these different numbers of dummy points, we want to observe the properties with three different situations: (a) $\texttt{nd}^2 < m$, (b) $\texttt{nd}^2 \approx m$, and (c) $\texttt{nd}^2 > m$, where $m$ is the number of points. In the following, $m \approx 1600$ and $\texttt{nd}^2=$ 400, 1600, and 6400. Note that the choice by default from the Poisson likelihood in $ \texttt{spatstat}$ corresponds to case (c). \cite{baddeley2014logistic} showed that for datasets with very large number of points and for very structured point processes, the logistic likelihood method is clearly preferable as it requires a smaller number of dummy points to perform quickly and efficiently. We want to investigate a similar comparison when these methods are regularized.

We only repeat the results for $\kappa=5 \times 10^{-5}$ and $\mu=1600$, and for scenarios~\ref{sce2} and \ref{sce3}. We use the same selection and prediction indices examined in Section \ref{result} and consider only the adaptive lasso method. 

Table \ref{table:logselect} presents selection properties for the Poisson and logistic likelihoods with adaptive lasso regularization. For unweighted versions of the procedure, the regularized logistic method outperforms the regularized Poisson method when $\texttt{nd}=20$, i.e. when the number of dummy points is much smaller than the number of points. When  $\texttt{nd}^2 \approx m$ or $\texttt{nd}^2 > m$, the methods tend to have similar performances. When we consider weighted versions of the regularized logistic and Poisson likelihoods, the results do not change that much with $\texttt{nd}$ and the regularized Poisson likelihood method slightly outperforms the regularized logistic likelihood method. In addition, for scenario~\ref{sce3} which considers a more complex situation, the methods tend to select the noisy covariates much more frequently.

Empirical biases, standard deviation and square root of mean squared errors are presented in Table \ref{table:logpredict}. We include all empirical results for the standard Poisson and logistic estimates (i.e. no regularization is considered). Let us first consider the unweighted methods with no regularization. The logistic method clearly has smaller bias, especially when $\texttt{nd}=20$, which explains why in most situations the RMSE is smaller. However, for the weighted methods, although the logistic method has smaller bias in general, it produces much larger SD, leading to larger RMSE for all cases. When we compare the weighted and the unweighted methods for logistic estimates, in general, not only do we fail to reduce the SD, but we also have larger bias. When the adaptive lasso regularization is considered, combined with the unweighted methods, we can preserve the bias in general and simultaneously improve the SD, and hence improve the RMSE. The logistic likelihood method slightly outperforms the Poisson likelihood method. When the weighted methods are considered, we obtain smaller SD, but we have larger bias. For weighted versions of the Poisson and logistic likelihoods, the results do not change that much with $\texttt{nd}$ and the weighted Poisson method slightly outperforms the weighted logistic method. From Tables~\ref{table:logselect} and \ref{table:logpredict}, when the number of dummy points can be chosen as $\texttt{nd}^2 \approx m$ or $\texttt{nd}^2 > m$, we would recommend to apply the Poisson likelihood method. When the number of dummy points should be chosen as $\texttt{nd}^2 < m$, the logistic likelihood method is more favorable. Our recommendations regarding whether weighted or unweighted methods follow the ones as in Section \ref{result}.

\section{Application to forestry datasets}
\label{sec8}
In a 50-hectare region ($D=1,000\mathrm{m} \times  500\mathrm{m}$) of the tropical moist forest of Barro Colorado Island (BCI) in central Panama, censuses have been carried out where all free-standing woody stems at least 10 mm diameter at breast height were identified, tagged, and mapped, resulting in maps of over 350,000 individual trees with more than 300 species \citep[see][]{condit1998tropical, hubbell1999light, hubbell2005barro}. It is of interest to know how the very high number of different tree species continues to coexist, profiting from different habitats determined by e.g. topography or soil properties \citep[see e.g.][]{waagepetersen2007estimating, waagepetersen2009two}. In particular, the selection of covariates among topological attributes and soil minerals as well as the estimation of their coefficients are becoming our most concern.

\begin{figure}[!ht]
\begin{center}
\graphicspath{{d:/Figure/}}
\renewcommand{\arraystretch}{0}
\begin{tabular}{l l}
\includegraphics[width=0.45\textwidth]{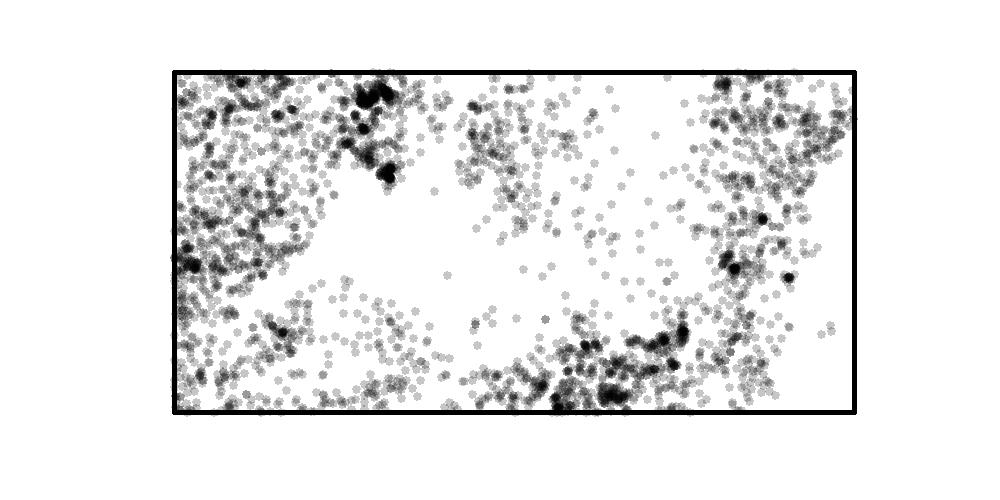} & \includegraphics[width=0.45\textwidth]{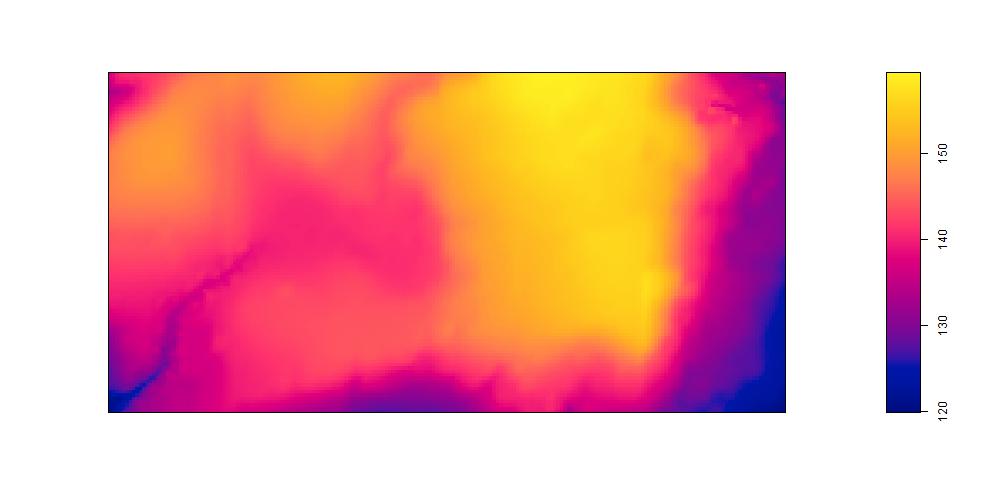} \\
\includegraphics[width=0.45\textwidth]{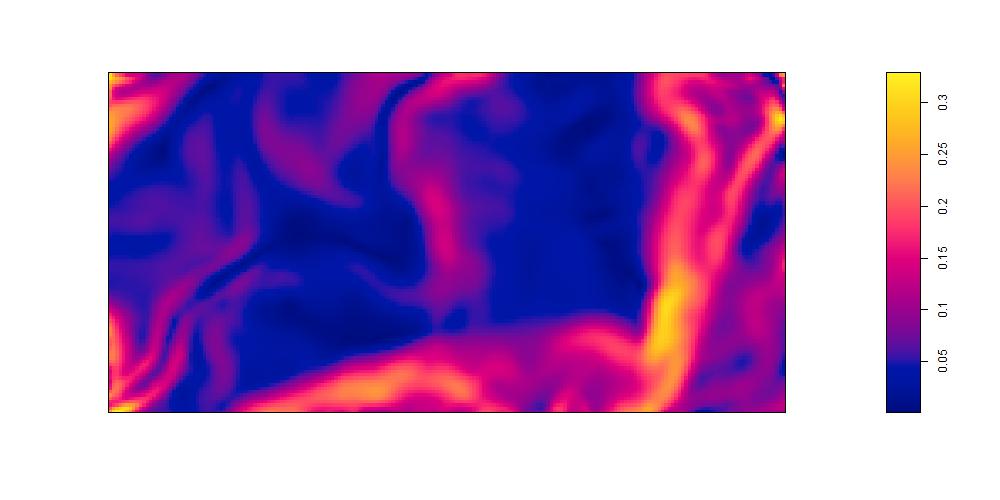} & \includegraphics[width=0.45\textwidth]{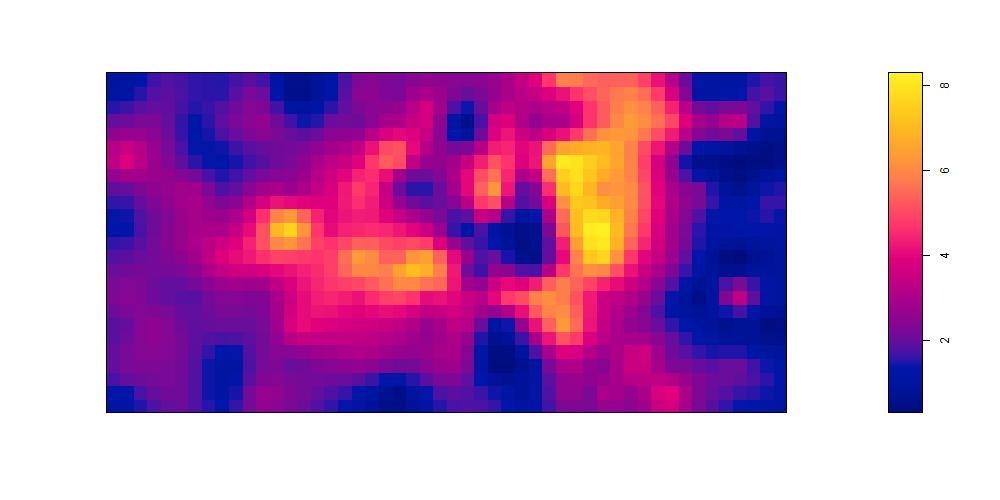}
\end{tabular}
\caption{Maps of locations of BPL trees (top left), elevation (top right), slope (bottom left), and concentration of phosporus (bottom right).}
\label{fig:app}
\end{center}
\end{figure}

We are particularly interested in analyzing the locations of 3,604 {\em Beilschmiedia pendula Lauraceae} (BPL) tree stems. We model the intensity of BPL trees as a log-linear function of two topological attributes and 13 soil properties as the covariates. Figure \ref{fig:app} contains maps of the locations of BPL trees, elevation, slope, and concentration of Phosporus. BPL trees seem to appear in greater abundance in the areas of high elevation, steep slope, and low concentration of Phosporus. The covariates maps are depicted in Figure \ref{sim3}.

\setlength{\tabcolsep}{3pt}
\renewcommand{\arraystretch}{1.5}
\begin{table}[!ht]
\caption{Barro Colorado Island data analysis: Parameter estimates of the regression coefficients for Beilschmiedia pendula Lauraceae trees applying regularized (un)weighted Poisson and logistic regression likelihoods with adaptive lasso regularization.}
\label{table:app}
\begin{center}
\scalebox{0.79}{
\begin{tabular}{  l  cc  | cc }
\hline
\hline
& \multicolumn{2}{c}{Unweighted method} & \multicolumn{2}{c}{Weighted method}\\
 \cline{2-5}
& Poisson estimates & Logistic estimates & Poisson estimates & Logistic estimates \\
  \hline
\hline
Elev  & 0.39  & 0.40  & 0.41  & 0.45 \\ 
  Slope  & 0.26  & 0.32  & 0.51  & 0.60 \\ 
  Al  & 0  & 0 & 0 & 0   \\ 
  B   & 0.30  & 0.30 & 0 & 0    \\ 
  Ca  & 0.10  & 0.15  & 0 & 0    \\ 
  Cu  & 0.10  & 0.12 & 0 & 0    \\ 
  Fe  & 0.05  & 0 & 0 & 0   \\ 
  K   & 0 & 0 & 0 & 0  \\ 
  Mg   & -0.17  & -0.18 & 0 & 0   \\ 
  Mn  & 0.12  & 0.13  & 0.23  & 0.24 \\ 
  P  & -0.60 & -0.60  & -0.50   & -0.52 \\ 
  Zn   & -0.43  & -0.46 & -0.35  & -0.37 \\ 
  N  & 0 & 0 & 0 & 0     \\ 
  N.min   & -0.12  & -0.10 & 0 & 0    \\ 
  pH  & -0.14  & -0.14 & 0 & 0    \\
\hline
Nb of cov.  & 12 & 11 & 5 & 5\\
\hline
\end{tabular}
}
\end{center}
\end{table}

We apply the regularized (un)weighted Poisson and the logistic likelihoods, combined with adaptive lasso regularization to select and estimate parameters. Since we do not deal with datasets which have very large number of points, we can set the default number of dummy points for Poisson likelihood as in the $\texttt{spatstat}$ package, i.e. the number of dummy points can be chosen to be larger than the number of points, to perform quickly and efficiently. It is worth emphasizing that we center and scale the 15 covariates to observe which one has the largest effect on the intensity. The results are presented in Table \ref{table:app}: 12 covariates for the Poisson likelihood and 11 for the logistic method are selected out of the 15 covariates using the unweighted methods while only 5 covariates (both for the Poisson and logistic methods) are selected using the weighted versions. The unweighted methods tend to overfit the model by overselecting unimportant covariates.

The weighted methods tend to keep out the uninformative covariates. Both Poisson and logistic estimates own similar selection and estimation results. First, we find some differences on estimation between the unweighted and the weighted methods, especially for slope and Manganese (Mn), for which the weighted methods have approximately two times larger estimators. Second, we may loose some nonzero covariates when we apply the weighted methods, even though it is only for the covariates which have relatively small coefficient. Boron (B) has high correlation with many of the other covariates, particularly with those which are not selected. This is possibly why Boron, which is selected and may have nonnegligible coefficient in the unweighted methods, is not chosen in the model. This may explain why the weighted methods introduce extra biases. However, since the situation appears to be quite close to the scenario~\ref{sce3} from the simulation study, the weighted methods are more favorable in terms of both selection and prediction.

In this application, we do not face any computational problem. Nevertheless, if we have to model a species of trees with much more points, the default value for $\texttt{nd}$ will lead to numerical problems. In such a case, the logistic likelihood would be a good alternative. 

These results suggest that BPL trees favor to live in areas of higher elevation and slope. This result is different from the findings by \cite{waagepetersen2007estimating} and \cite{guan2007thinned} which concluded based on standard error estimation that BPL trees do not really prefer either high or low altitudes. However, we have the same conclusion with the analysis by \cite{guan2010weighted} and \cite{thurman2015regularized} that BPL trees prefer to live on higher altitudes. Further, higher levels of Manganese (Mn) and lower levels of both Phosporus (P) and Zinc (Zn) concentrations in soil are associated with higher appearance of BPL trees.

\section{Conclusion and discussion}
\label{sec9}
We develop regularized versions of estimating equations based on Campbell theorem derived from the Poisson and the logistic likelihoods. Our procedure is able to estimate intensity function of spatial point processes, when the intensity is a function of many covariates and has a log-linear form. Furthermore, our procedure is also generally easy to implement in $\texttt{R}$ since we need  to combine $\texttt{spatstat}$ package with $\texttt{glmnet}$ and $\texttt{ncvreg}$ packages. We study the asymptotic properties of both regularized weighted Poisson and logistic estimates in terms of consistency, sparsity, and normality distribution. We find that, among the regularization methods considered in this paper, adaptive lasso, adaptive elastic net, SCAD, and MC+ are the methods that can satisfy our theorems.

We carry out some scenarios in the simulation study to observe selection and prediction properties of the estimates. We compare the penalized Poisson likelihood (PL) and the penalized weighted Poisson likelihood (WPL) with different penalty functions. From the results, when we deal with covariates having a complex covariance matrix and when the point pattern looks quite clustered, we recommend to apply the penalized WPL combined with adaptive lasso regularization. Otherwise, the regularized PL with adaptive lasso is more preferable. The further and more careful investigation to choose the tuning parameters may be needed to improve the selection properties. We note the bias increases quite significantly when the regularized WPL is applied. When the penalized WPL is considered, a two-step procedure may be needed to improve the prediction properties: (1) use the penalized WPL combined with adaptive lasso to chose the covariates, then (2) use the selected covariates to obtain the estimates. This post-selection inference procedure has not been investigated in this paper.

We also compare the estimates obtained from the Poisson and the logistic likelihoods. When the number of dummy points can be chosen to be either similar to or larger than the number of points, we recommend the use of the Poisson likelihood method. Nevertheless, when the number of dummy points should be chosen to be smaller than the number of points, the logistic method is more favorable.

A further work would consist in studying the situation when the number of the covariates is much larger than the sample size. In such a situation, the coordinate descent algorithm used in this paper may cause some numerical troubles. The Dantzig selector procedure introduced by \cite{candes2007dantzig} might be a good alternative as the implementaion for linear models (and for generalized linear models) results in a linear programming. It would be interesting to bring this approach to spatial point process setting.

\section*{Acknowledgements}
We thank A. L. Thurman who kindly shared the $\texttt{R}$ code used for the simulation study in \cite{thurman2015regularized} and P. Breheny who kindly provided his code used in $\texttt{ncvreg}$ $\texttt{R}$ package. We also thank R. Drouilhet for technical help. The research of J.-F. Coeurjolly is funded by ANR-11-LABX-0025 Persyval-lab (2011, project Oculo-Nimbus and Persyvact). The research of F. Letu\'e is funded by ANR-11-LABX-0025 Persyval-lab (project Persyvact2). The BCI soils data sets were collected and analyzed by J. Dalling, R. John, K. Harms, R. Stallard and J. Yavitt with support from NSF DEB021104,021115, 0212284,0212818 and OISE 0314581, and STRI Soils Initiative and CTFS and assistance from P. Segre and J. Trani. Datasets are available at the Center for Tropical Forest Science website \linebreak $\texttt{http://ctfs.si.edu/webatlas/datasets/bci/soilmaps/BCIsoil.html}$.

\bibliographystyle{plainnat}
\bibliography{refejs}

\appendix

%%%%%%%%%%%%%%%%%%%%%%%%%%%%%%%%%%%%%%%%%%%%%%%%%%%%%%%%%%%%%%%%%%%%%%
%%%%%%%%%%%%%%%%%%% Lemma CLT score
%%%%%%%%%%%%%%%%%%%%%%%%%%%%%%%%%%%%%%%%%%%%%%%%%%%%%%%%%%%%%%%%%%%%%%

\section{Auxiliary Lemma}  \label{sec:auxLemma}

The following result is used in the proof of Theorems~\ref{the1}-\ref{the2}. Throughout the proofs, the notation $\mathbf X_x = O_{\mathrm P} (x_n)$ or $\mathbf X_b = o_{\mathrm P} (x_n)$ for a random vector $\mathbf X_n$ and a sequence of real numbers $x_n$ means that $\|\mathbf X_n\|=O_{\mathrm P}(x_n)$ and $\|\mathbf X_n\|=o_{\mathrm P}(x_n)$. In the same way for a vector $\mathbf V_n$ or a squared matrix $\mathbf M_n$, the notation $\mathbf V_n=O(x_n)$ and  $\mathbf M_n=O(x_n)$ mean that $\|\mathbf V_n\|=O(x_n)$ and  $\|\mathbf M_n\|=O(x_n)$.

\begin{lemma} \label{lem:clt}
Under the conditions ($\mathcal C$.\ref{C:Dn})-($\mathcal C$.\ref{C:BnCn}), the following convergence holds in distribution as $n\to \infty$
	\begin{align}
\{ \mathbf{B}_n(w; \boldsymbol \beta_{0})+\mathbf{C}_n(w; \boldsymbol \beta_{0})\}^{-1/2}\ell^{(1)}_n(w; \boldsymbol \beta_0) \xrightarrow{d} \mathcal{N}(\mathbf{0},\mathbf{I}_{p}) \label{eq:normal}.
\end{align}
Moreover as $n\to \infty$,
\begin{equation}
|D_n|^{- \frac {1}{2}}\ell^{(1)}_n(w;\boldsymbol \beta_0)   =O_\mathrm{P}(1) \label{eq:ln}.	
\end{equation}
\end{lemma}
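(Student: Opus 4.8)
The plan is to prove \eqref{eq:normal} by combining the Cram\'er--Wold device with a central limit theorem for triangular arrays of dependent random fields, and then to obtain \eqref{eq:ln} as a corollary. First, since $\rho(u;\boldsymbol\beta)=\exp(\boldsymbol\beta^\top\mathbf z(u))$, the score at $\boldsymbol\beta_0$ is the centred linear statistic
\begin{equation*}
\ell^{(1)}_n(w;\boldsymbol\beta_0)=\sum_{u\in\mathbf X\cap D_n}w(u)\mathbf z(u)-\int_{D_n}w(u)\mathbf z(u)\rho(u;\boldsymbol\beta_0)\,\mathrm d u ,
\end{equation*}
which has mean $\mathbf 0$ by Campbell's formula \eqref{eq:campbell}. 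Applying \eqref{eq:campbell2} together with $\rho^{(2)}(u,v)=g(u,v)\rho(u)\rho(v)$, a short computation identifies its covariance matrix as $\mathbf B_n(w;\boldsymbol\beta_0)+\mathbf C_n(w;\boldsymbol\beta_0)$, the term $\mathbf B_n$ coming from the diagonal part of the product density and $\mathbf C_n$ from $\rho^{(2)}(u,v)-\rho(u)\rho(v)=(g(u,v)-1)\rho(u)\rho(v)$. By condition ($\mathcal C$.\ref{C:BnCn}) this matrix is positive definite for all $n$ large enough, with $|D_n|^{-1}\{\mathbf B_n+\mathbf C_n\}\geq\mathbf I_0$, so that $\mathbf S_n:=\{\mathbf B_n(w;\boldsymbol\beta_0)+\mathbf C_n(w;\boldsymbol\beta_0)\}^{-1/2}\ell^{(1)}_n(w;\boldsymbol\beta_0)$ is well defined and has covariance $\mathbf I_p$.

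By the Cram\'er--Wold device it then suffices to show $\boldsymbol\xi^\top\mathbf S_n\xrightarrow{d}\mathcal N(0,1)$ for every fixed unit vector $\boldsymbol\xi\in\mathbb R^p$. Writing $\mathbf c_n:=\{\mathbf B_n+\mathbf C_n\}^{-1/2}\boldsymbol\xi$, so that $\|\mathbf c_n\|=O(|D_n|^{-1/2})$ by ($\mathcal C$.\ref{C:BnCn}), this becomes a scalar CLT for $\mathbf c_n^\top\ell^{(1)}_n(w;\boldsymbol\beta_0)$. I would partition $D_n=nE$ (up to a boundary layer that is asymptotically negligible thanks to the regular geometry in ($\mathcal C$.\ref{C:Dn})) into unit cubes $\Delta_{n,i}$, $i\in I_n\subset\mathbb Z^d$, and set
\begin{equation*}
Z_{n,i}=\mathbf c_n^\top\Bigl\{\sum_{u\in\mathbf X\cap\Delta_{n,i}}w(u)\mathbf z(u)-\int_{\Delta_{n,i}}w(u)\mathbf z(u)\rho(u;\boldsymbol\beta_0)\,\mathrm d u\Bigr\},
\end{equation*}
so that $\mathbf c_n^\top\ell^{(1)}_n(w;\boldsymbol\beta_0)=\sum_{i\in I_n}Z_{n,i}$ is a centred triangular array indexed by a growing subset of $\mathbb Z^d$, of total variance $1$. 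I would then invoke the central limit theorem of \cite{karaczony2006central} for triangular arrays of non-stationary mixing random fields (which generalises \cite{bolthausen1982central} and \cite{guyon1995random}), verifying its hypotheses: (i) a mixing decay, inherited from ($\mathcal C$.\ref{C:mixing}) because $Z_{n,i}$ is measurable with respect to $\mathscr F(\Delta_{n,i})$, so the coefficients $\alpha_{k,l}(q)$ of \eqref{eq:5} control the mixing of the array; (ii) a uniform $(2+\delta)$-moment bound on the $Z_{n,i}$ with $\delta=t$, obtained from the existence and boundedness of $\rho^{(k)}$ up to order $2+t$ in ($\mathcal C$.\ref{C:rhok}), together with $\sup_u\|\mathbf z(u)\|<\infty$, $\sup_u|w(u)|<\infty$ from ($\mathcal C$.\ref{C:cov}) and $\|\mathbf c_n\|=O(|D_n|^{-1/2})$; (iii) the compatibility of the mixing rate with the moment order, which is precisely the inequality $\tilde t>d(2+t)/t$ imposed in ($\mathcal C$.\ref{C:mixing}); and (iv) asymptotic negligibility of the individual summands together with non-degeneracy of the limiting variance, which hold since the total variance equals $1$ while $\max_{i\in I_n}\mathrm{Var}(Z_{n,i})=O(|D_n|^{-1})\to0$ by ($\mathcal C$.\ref{C:BnCn}). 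This yields $\mathbf c_n^\top\ell^{(1)}_n(w;\boldsymbol\beta_0)\xrightarrow{d}\mathcal N(0,1)$, hence $\mathbf S_n\xrightarrow{d}\mathcal N(\mathbf 0,\mathbf I_p)$, which is \eqref{eq:normal}.

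For \eqref{eq:ln}, from conditions ($\mathcal C$.\ref{C:Theta})--($\mathcal C$.\ref{C:mixing}) the matrix $\mathbf B_n(w;\boldsymbol\beta_0)+\mathbf C_n(w;\boldsymbol\beta_0)$ has operator norm $O(|D_n|)$ (the defining integrals are $O(|D_n|)$ by boundedness of $\mathbf z$, $w$ and $\rho(\cdot;\boldsymbol\beta_0)$ and integrability of $g-1$, cf.\ \cite{coeurjolly2014variational}); writing $|D_n|^{-1/2}\ell^{(1)}_n(w;\boldsymbol\beta_0)=\bigl\{|D_n|^{-1}(\mathbf B_n+\mathbf C_n)\bigr\}^{1/2}\mathbf S_n$, the matrix factor is bounded in operator norm while $\mathbf S_n=O_\mathrm{P}(1)$ by \eqref{eq:normal}, so the product is $O_\mathrm{P}(1)$, giving \eqref{eq:ln}. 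I expect the main obstacle to be steps (i)--(iii) above, namely transferring the point-process mixing condition ($\mathcal C$.\ref{C:mixing}) to the block variables $Z_{n,i}$ and establishing the uniform $(2+t)$-moment control through the higher-order product densities, the two requirements being reconciled precisely by the inequality $\tilde t>d(2+t)/t$; the remaining ingredients — the Campbell variance identity, the Cram\'er--Wold reduction, and the derivation of \eqref{eq:ln} — are routine.
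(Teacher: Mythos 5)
Your proposal is correct and follows essentially the same route as the paper: identify the variance via the Campbell formulae, block $D_n$ into unit cubes to form a triangular array, invoke the CLT of \cite{karaczony2006central} under the moment bound from ($\mathcal C$.\ref{C:rhok}) and the mixing condition ($\mathcal C$.\ref{C:mixing}), and deduce \eqref{eq:ln} from the $O(|D_n|)$ bound on $\mathbf B_n+\mathbf C_n$. The only cosmetic differences are your explicit Cram\'er--Wold reduction to a scalar array (the paper applies the multivariate statement directly) and your justification of \eqref{eq:ln} via the upper bound $\|\mathbf B_n+\mathbf C_n\|=O(|D_n|)$, which is in fact the cleaner and more accurate way to state that step.
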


\begin{proof}
Let us first note that using Campbell Theorems~\eqref{eq:campbell}-\eqref{eq:campbell2}
\begin{align*}
\mathrm{Var}[ \ell^{(1)}_n(w; \boldsymbol \beta_0)]= \mathbf{B}_n(w; \boldsymbol \beta_{0})+\mathbf{C}_n(w; \boldsymbol \beta_{0}).
\end{align*}
The proof of~\eqref{eq:normal} follows~\cite{coeurjolly2014variational}. Let $C_i=i+(-1/2,1/2]^d$ be the unit box centered at $i \in \mathbb{Z}^d$ and define $\mathscr{I}_n=\{i \in \mathbb{Z}^d, C_i \cap D_n \neq \emptyset \}$. Set $D_n={\displaystyle \bigcup_{i \in \mathscr{I}_n} C_{i,n}}$, where $C_{i,n}=C_i \cap D_n$. We have
\[
\ell^{(1)}_n(w;\boldsymbol \beta_0)={\sum_{i \in \mathscr{I}_n} Y_{i,n}}	
\]
where
\[
Y_{i,n}=\!\!\!\!\!\sum_{u \in \mathbf{X} \cap C_{i,n}} \!\!\! w(u)\mathbf{z}(u) - \int_{ C_{i,n}} w(u)\mathbf{z}(u)\exp(\boldsymbol\beta_0^\top \mathbf z(u) )\mathrm{d}u.	
\]
For any $n \geq 1$ and any $i \in \mathscr{I}_n$, $Y_{i,n}$ has zero mean, and by condition ($\mathcal C$.\ref{C:rhok}),
\begin{align}
{\displaystyle \sup_{n \geq 1} \sup_{i \in \mathscr{I}_n} \mathbb{E}(\|Y_{i,n}\|^{2+\delta})} < \infty. \label{eq:12}
\end{align}

If we combine (\ref{eq:12}) with conditions  ($\mathcal C$.\ref{C:Dn})-($\mathcal C$.\ref{C:BnCn}), we can apply \citet[][Theorem 4]{karaczony2006central}, a central limit theorem for triangular arrays of random fields, to obtain~\eqref{eq:normal} which also implies that
\[
	\{ \mathbf{B}_n(w; \boldsymbol \beta_{0})+\mathbf{C}_n(w; \boldsymbol \beta_{0})\}^{-1/2}\ell^{(1)}_n(w; \boldsymbol \beta_0) =  O_{\mathrm P}(1)
\]
as $n\to \infty$. The second result~\eqref{eq:ln} is deduced from condition~($\mathcal C$.\ref{C:BnCn}) which in particular implies that $|D_n|^{1/2} \{ \mathbf{B}_n(w; \boldsymbol \beta_{0})+\mathbf{C}_n(w; \boldsymbol \beta_{0})\}^{-1/2} =  O(1)$.

% From~\eqref{eq:normal}, we deduce that $\left \| \{ \mathbf{B}_n(w; \boldsymbol \beta_{0})+\mathbf{C}_n(w; \boldsymbol \beta_{0})\}^{-1/2}\ell^{(1)}_n(w; \boldsymbol \beta_0) \right \|=O_\mathrm{P}(1)$. By conditions ($\mathcal C$.\ref{C:cov}) and ($\mathcal C$.\ref{C:rhok}),  $\| \mathbf{B}_n(w; \boldsymbol \beta_{0}) \|=O(|D_n|)$ and $\| \mathbf{C}_n(w; \boldsymbol \beta_{0}) \|=O(|D_n|)$. Furthermore, as a consequence of condition ($\mathcal C$.\ref{C:BnCn}), $\| \mathbf{G}_n \|=\left \| \left[|D_n|^{-1}\{ \mathbf{B}_n(w; \boldsymbol \beta_{0})+\mathbf{C}_n(w; \boldsymbol \beta_{0})\}\right]^{1/2}\right \|=O(1)$, whereby we deduce that
% \begin{align}
% \left \| |D_n|^{- \frac {1}{2}}\ell^{(1)}_n(w;\boldsymbol \beta_0) \right \|  &= \left \| \mathbf{G}_n \{\mathbf{B}_n(w; \boldsymbol \beta_{0})+\mathbf{C}_n(w; \boldsymbol \beta_{0})\}^{{- \frac {1}{2}}}\ell^{(1)}_n(w; \boldsymbol \beta_0) \right \| \nonumber \\
% & =O_\mathrm{P}(1) \label{eq:ln}.
% \end{align}

\end{proof}

%%%%%%%%%%%%%%%%%%%%%%%%%%%%%%%%%%%%%%%%%%%%%%%%%%%%%%%%%%%%%%%%%%%%%%
%%%%%%%%%%%%%%%%%%% Thm 1
%%%%%%%%%%%%%%%%%%%%%%%%%%%%%%%%%%%%%%%%%%%%%%%%%%%%%%%%%%%%%%%%%%%%%%

\section{Proof of Theorem~\ref{the1}} \label{proof1}

In the proof of this result and the following ones, the notation $\kappa$ stands for a generic constant which may vary from line to line. In particular this constant is independent of $n$, $\boldsymbol \beta_0$ and $\mathbf k$.

\begin{proof}
Let $ d_n = |D_n|^{-1/2}+a_n$, and $\mathbf{k}=\{k_1, k_2, \ldots, k_p\}^\top \in \mathbb{R}^p $. We remind the reader that the estimate of $\boldsymbol\beta_0$ is defined as the maximum of the function $Q$ (given by~\eqref{eq:qwee}) over $\Theta$, an open convex bounded set of $\mathbb R^p$. For any $\mathbf k$ such that $\|\mathbf k\|\leq K<\infty$, $\boldsymbol \beta_0 + d_n \mathbf k \in \Theta$ for $n$  sufficiently large. Assume this is valid in the following.
To prove Theorem~\ref{the1}, we follow the main argument by~\cite{fan2001variable} and aim at proving that for any given $\epsilon>0$, there exists $K>0$ such that for $n$ sufficiently large
\begin{equation}
\label{eq:15}
\mathrm{P}\bigg(\sup_{\|\mathbf{k}\| = K} \Delta_n(\mathbf k)>0\bigg)\leq \epsilon,
\quad \mbox{ where } \Delta_n(\mathbf k) = Q(w;\boldsymbol \beta_0+d_n\mathbf{k})-Q(w;\boldsymbol \beta_0).
\end{equation}
Equation~\eqref{eq:15} will imply that with probability at least $1-\epsilon$, there exists a local maximum in the ball $\{\boldsymbol \beta_0+d_n\mathbf{k}:\|\mathbf{k}\| \leq K\}$, and therefore  a local maximizer $\boldsymbol{\hat{\beta}}$ such that $\|{ \boldsymbol {\hat \beta}-\boldsymbol \beta_0}\|=O_\mathrm{P}(d_n)$.  We decompose $\Delta_n(\mathbf k)$ as $\Delta_n(\mathbf k)= T_1+T_2$ where
\begin{align*}
	T_1 = & \ell_n(w;\boldsymbol \beta_0+d_n\mathbf{k})-\ell_n(w; \boldsymbol \beta_0) \\
	T_2 = & |D_n|{\sum_{j=1}^p \big( p_{\lambda_{n,j}}(|\beta_{0j}|)}- p_{\lambda_{n,j}}(|\beta_{0j}+d_nk_j|)\big).
\end{align*}
Since $\rho(u;\cdot)$ is infinitely continuously differentiable and $\ell_n^{(2)}(w;\boldsymbol\beta) =-\mathbf A_n(w;\boldsymbol\beta)$, then using a second-order Taylor expansion there exists $t\in (0,1)$ such that
\begin{align*}
	T_1 =& \, d_n \mathbf k^\top \ell_n^{(1)}(w;\boldsymbol \beta_0) - \frac12d_n^2\mathbf k^\top \mathbf{A}_n(w;\boldsymbol \beta_0) \mathbf k  \\
	&+ \frac12d_n^2\mathbf k^\top \left( \mathbf{A}_n(w;\boldsymbol \beta_0) -\mathbf{A}_n(w;\boldsymbol \beta_0 + td_n \mathbf k) \right) \mathbf k .	
\end{align*}
Since $\Theta$ is convex and bounded and since $w(\cdot)$ and $\mathbf z(\cdot)$ are uniformly bounded by conditions~($\mathcal C$.\ref{C:Theta})-($\mathcal C$.\ref{C:cov}), there exists a nonnegative constant $\kappa$ such that
\[
	\frac12\|\mathbf{A}_n(w;\boldsymbol \beta_0) -\mathbf{A}_n(w;\boldsymbol \beta_0 + td_n \mathbf k) \| \leq  \kappa d_n |D_n|.
\]
Let $\nu_{\min}(\mathbf M)$ be the smallest eigenvalue of a squared matrix $\mathbf M$. By condition ($\mathcal C$.\ref{C:An}), 
\[
	\check \nu := \liminf_{n\to \infty} \nu_{\min}(|D_n|^{-1}\mathbf{A}_n(w;\boldsymbol \beta_0)) = \liminf_{n\to \infty} \frac{\mathbf k^\top \left( |D_n|^{-1}\mathbf A_n(w;\boldsymbol\beta_0)\right) \mathbf k}{\|\mathbf k\|^2} >0.
\]
Hence
\[
	T_1 \leq d_n \|\ell_n^{(1)}(w;\boldsymbol \beta_0)\| \, \| \mathbf k \|  - \frac{\check \nu}2 d_n^2 |D_n| \|\mathbf k\|^2 +\kappa d_n^3 |D_n|.
\]

Regarding the term $T_2$,
\[
T_2\leq T_2^\prime := |D_n|{\sum_{j=1}^s \big( p_{\lambda_{n,j}}(|\beta_{0j}|)}- p_{\lambda_{n,j}}(|\beta_{0j}+d_nk_j|)\big) 	
\]
since for any $j$ the penalty function $p_{\lambda_{n,j}}$ is nonnegative and $p_{\lambda_{n,j}}(|\beta_{0j}|)=0$ for $j=s+1,\dots,p$.

Since $d_n|D_n|^{1/2}= O(1)$, then by ($\mathcal C$.\ref{C:plambda}), for $n$ sufficiently large, $p_{\lambda_{n,j}}$ is twice continuously differentiable for every $\beta_j = \beta_{0j}+t d_n k_j$ with $t\in (0,1)$. Therefore using a third-order Taylor expansion, there exist $t_j \in (0,1)$, $j=1,\dots,s$ such that
\begin{align*}
-T_2^\prime &=d_n|D_n|\sum_{j=1}^s k_j p_{\lambda_{n,j}}^\prime(|\beta_{0j}|) \sign(\beta_{0,j})  + \frac12 d_n^2 |D_n|\sum_{j=1}^s k_j^2 p^{\prime\prime}_{\lambda_{n,j}}  (|\beta_{0j}|) \\
&\; +\frac16 d_n^3|D_n| \sum_{j=1}^s k_j^3 p^{\prime\prime\prime}_{\lambda_{n,j}}  (|\beta_{0j}+t_j d_n k_j|).
\end{align*}
Now by definition of $a_n$ and $c_n$ and from condition~($\mathcal C$.\ref{C:plambda}), we deduce that there exists $\kappa$ such that
\begin{align*}
	T_2^\prime &\leq a_n d_n |D_n| \, |\mathbf k^\top \mathbf 1| + \frac{1}{2} c_n d_n^2|D_n| \|\mathbf k\|^2	+ \kappa d_n^3 |D_n|\\
	&\leq \sqrt s a_n d_n |D_n| \|\mathbf k \| + \frac{1}{2} c_n d_n^2|D_n| \|\mathbf k\|^2	+ \kappa d_n^3 |D_n|
\end{align*}
from Cauchy-Schwarz inequality. Since $c_n=o(1)$, $d_n=o(1)$ and $a_n d_n |D_n|=O(d_n^2|D_n|)$, then for $n$ sufficiently large
\[
	\Delta_n(\mathbf k) \leq d_n \|\ell_n^{(1)}(w;\boldsymbol \beta_0)\| \, \| \mathbf k \| -\frac{\check \nu}4 d_n^2 |D_n| \|\mathbf k\|^2 + 2 \sqrt s d_n^2 |D_n| \|\mathbf k\|
\]
We now return to (\ref{eq:15}): for $n$ sufficiently large
\[
\mathrm{P}\bigg({\sup_{\|\mathbf{k}\|= K}  \Delta_n(\mathbf{k})>0}\bigg) \leq 	\mathrm P \bigg(
\| \ell_n^{(1)}(w;\boldsymbol\beta_0)\| > \frac{\check \nu}4 d_n|D_n| K - 2\sqrt s d_n |D_n|
 \bigg)
\]

Since $d_n |D_n|=O(|D_n|^{1/2})$, by choosing $K$ large enough, there exists $\kappa$ such that for $n$  sufficiently large
\[
	\mathrm P \bigg( \sup_{\|\mathbf k\|=K}  \Delta_n(\mathbf k) >0\bigg) \leq \mathrm P \bigg( \|\ell_n^{(1)}(w;\boldsymbol \beta_0)\| >\kappa|D_n|^{1/2}\bigg) \leq \epsilon
\]
for any given $\epsilon>0$ from \eqref{eq:ln}.

% Since $d_na_n|D_n|=O(d_n|D_n|^{1/2})=O(d_n^2|D_n|)$, by choosing $K$ large enough, there exists $\bar \kappa>0$ such that for $n$ large enough
% \begin{align*}
% \mathrm{P}\bigg({\sup_{\|\mathbf{k}\| \leq K} \boldsymbol \Delta_n(\mathbf{k})>0}\bigg)
% \leq \mathrm{P}\bigg(d_n \|\ell_n^{(1)}(w;\boldsymbol \beta_0)\| K- \bar \kappa K^2 d_n^2 |D_n| >0 \bigg).
% \end{align*}
% Hence, since $d_n=O(|D_n|^{-1/2}+a_n)=O(|D_n|^{-1/2})$ from assumption on $a_n$ and $\|\ell_n^{(1)}(w,\boldsymbol \beta_0)\|=O_\mathrm{P}(|D_n|^{1/2})$ from (\ref{eq:ln}), there exists $\tilde \kappa>0$ such that for $n$ large enough
% \begin{align*}
% \mathrm{P} \bigg({\sup_{\|\mathbf{k}\| \leq K} \boldsymbol \Delta_n(\mathbf{k})>0}\bigg) \leq & \mathrm{P}\bigg(\|\ell_n^{(1)}(w;\boldsymbol \beta_0)\| > \bar \kappa K d_n |D_n| \bigg)\\
% \leq &\mathrm{P}\bigg(\|\ell_n^{(1)}(w;\boldsymbol \beta_0)\| > \tilde \kappa |D_n|^{1/2} \bigg)\\
% \leq & \epsilon,
% \end{align*}
% for any given $\epsilon>0$. This completes the proof.
\end{proof}

%%%%%%%%%%%%%%%%%%%%%%%%%%%%%%%%%%%%%%%%%%%%%%%%%%%%%%%%%%%%%%%%%%%%%%
%%%%%%%%%%%%%%%%%%% Thm 1
%%%%%%%%%%%%%%%%%%%%%%%%%%%%%%%%%%%%%%%%%%%%%%%%%%%%%%%%%%%%%%%%%%%%%%

\section{Proof of Theorem~\ref{the2}} \label{proof2}
To prove Theorem~\ref{the2}(i), we provide Lemma~\ref{lemma1} as follows.
\begin{lemma}
\label{lemma1}
Assume the conditions ($\mathcal C$.\ref{C:Dn})-($\mathcal C$.\ref{C:BnCn}) and condition ($\mathcal C$.\ref{C:plambda}) hold. If $a_n=O(|D_n|^{-1/2})$ and $b_n|D_n|^{1/2}\to \infty$ as $n\to\infty$, then with probability tending to $1$, for any {$\boldsymbol \beta_1$} satisfying $\|{\boldsymbol \beta_1 - \boldsymbol \beta_{01}}\|=O_\mathrm{P}(|D_n|^{-1/2})$, and for any constant $K_1 > 0$,
\begin{align*}
Q\Big(w;({\boldsymbol \beta_1}^\top,\mathbf{0}^\top)^\top \Big)
= \max_{\| \boldsymbol \beta_2\| \leq K_1 |D_n|^{-1/2}}
Q\Big(w;({\boldsymbol \beta_1}^\top,{\boldsymbol \beta_2}^\top)^\top \Big).
\end{align*}
\end{lemma}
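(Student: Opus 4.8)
The plan is to adapt the classical coordinate-sign argument of \cite{fan2001variable}: I would show that, on an event of probability tending to one, $Q(w;\cdot)$ viewed as a function of each noise coordinate $\beta_j$, $j=s+1,\dots,p$, is strictly decreasing in $|\beta_j|$ on the punctured interval $0<|\beta_j|\le\epsilon_n$ with $\epsilon_n=K_1|D_n|^{-1/2}$, uniformly over the relevant values of the remaining coordinates, so the maximum of $Q$ over the ball $\{\|\boldsymbol\beta_2\|\le K_1|D_n|^{-1/2}\}$ must be attained at $\boldsymbol\beta_2=\mathbf 0$.

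First I would write, for $j>s$ and $\beta_j\ne 0$,
\[
\frac{\partial Q(w;\boldsymbol\beta)}{\partial\beta_j}
=\frac{\partial\ell_n(w;\boldsymbol\beta)}{\partial\beta_j}
-|D_n|\,p'_{\lambda_{n,j}}(|\beta_j|)\,\sign(\beta_j),
\]
and control the score term by a mean-value expansion of $\ell_n^{(1)}(w;\cdot)$ about $\boldsymbol\beta_0$. Since $\ell_n^{(2)}(w;\cdot)=-\mathbf A_n(w;\cdot)$ and, by conditions ($\mathcal C$.\ref{C:Theta})--($\mathcal C$.\ref{C:cov}), $\|\mathbf A_n(w;\boldsymbol\beta)\|=O(|D_n|)$ uniformly over $\boldsymbol\beta\in\Theta$, and since every $\boldsymbol\beta=(\boldsymbol\beta_1^\top,\boldsymbol\beta_2^\top)^\top$ with $\|\boldsymbol\beta_1-\boldsymbol\beta_{01}\|=O_\mathrm{P}(|D_n|^{-1/2})$ and $\|\boldsymbol\beta_2\|\le K_1|D_n|^{-1/2}$ satisfies $\|\boldsymbol\beta-\boldsymbol\beta_0\|=O_\mathrm{P}(|D_n|^{-1/2})$, combining this with $\ell_n^{(1)}(w;\boldsymbol\beta_0)=O_\mathrm{P}(|D_n|^{1/2})$ from \eqref{eq:ln} of Lemma~\ref{lem:clt} yields $\partial\ell_n(w;\boldsymbol\beta)/\partial\beta_j=O_\mathrm{P}(|D_n|^{1/2})$, a bound that is uniform over the whole region. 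Hence
\[
\frac{\partial Q(w;\boldsymbol\beta)}{\partial\beta_j}
=|D_n|\Bigl(O_\mathrm{P}(|D_n|^{-1/2})-p'_{\lambda_{n,j}}(|\beta_j|)\,\sign(\beta_j)\Bigr).
\]

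Next, for $0<|\beta_j|\le\epsilon_n$, the definition \eqref{eq:bn} of $b_n$ gives $p'_{\lambda_{n,j}}(|\beta_j|)\ge b_n$, and the hypothesis $b_n|D_n|^{1/2}\to\infty$ forces the penalty term to dominate the $O_\mathrm{P}(|D_n|^{-1/2})$ term; therefore, with probability tending to one, $\sign\bigl(\partial Q(w;\boldsymbol\beta)/\partial\beta_j\bigr)=-\sign(\beta_j)$ simultaneously for all $j=s+1,\dots,p$ and all $\boldsymbol\beta$ in the region. I would then conclude by a coordinate-wise monotonicity argument: starting from any $\boldsymbol\beta_2$ with $\|\boldsymbol\beta_2\|\le K_1|D_n|^{-1/2}$, set its coordinates $\beta_{s+1},\dots,\beta_p$ to $0$ one at a time; each such step can only decrease $\|\boldsymbol\beta_2\|$, hence keeps the vector inside the region, and by the established sign of the partial derivative it does not decrease $Q(w;\cdot)$; after $p-s$ steps one reaches $(\boldsymbol\beta_1^\top,\mathbf 0^\top)^\top$, which proves the identity.

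I expect the main obstacle to be the uniformity of the sign statement --- simultaneously over $\boldsymbol\beta$ ranging in the shrinking ball and over all noise coordinates $j>s$ --- since the whole coordinate-descent-to-zero argument must run on a single event of probability $1-o(1)$. This requires checking that the Taylor remainder bound $O(|D_n|)\cdot O_\mathrm{P}(|D_n|^{-1/2})$ and the bound \eqref{eq:ln} hold uniformly (they do: the former rests only on the uniform-on-$\Theta$ control of $\mathbf A_n=-\ell_n^{(2)}$ from ($\mathcal C$.\ref{C:Theta})--($\mathcal C$.\ref{C:cov}) and on the radius of the ball, and the latter does not depend on $\boldsymbol\beta$), and noting that one never needs differentiability of $Q$ at $\beta_j=0$, only one-sided monotonicity on each side of $0$.
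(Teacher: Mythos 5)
Your proposal is correct and follows essentially the same route as the paper's proof: the Fan--Li sign argument showing $\partial Q/\partial\beta_j$ has sign opposite to $\beta_j$ for $0<|\beta_j|\le K_1|D_n|^{-1/2}$, obtained from the mean-value bound $\partial\ell_n(w;\boldsymbol\beta)/\partial\beta_j=O_\mathrm{P}(|D_n|^{1/2})+O_\mathrm{P}(|D_n|\cdot|D_n|^{-1/2})=O_\mathrm{P}(|D_n|^{1/2})$ and the domination of this term by $|D_n|\,b_n$ since $b_n|D_n|^{1/2}\to\infty$. Your explicit coordinate-by-coordinate descent to $\boldsymbol\beta_2=\mathbf 0$ and the remarks on uniformity merely spell out steps the paper leaves implicit.
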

\begin{proof}
It is sufficient to show that with probability tending to $1$ as ${n\to \infty}$, for any ${\boldsymbol \beta_1}$ satisfying $\|{\boldsymbol \beta_1 -\boldsymbol \beta_{01}}\|=O_\mathrm{P}(|D_n|^{-1/2})$, for some small $\varepsilon_n=K_1|D_n|^{-1/2}$, and for $j=s+1, \ldots, p$,

\begin{equation}
\label{eq:lem1}
\frac {\partial Q(w;\boldsymbol \beta)}{\partial\beta_j}<0 \quad
\mbox { for } 0<\beta_j<\varepsilon_n, \mbox{ and}
\end{equation}

\begin{equation}
\label{eq:lem1b}
\frac {\partial Q(w;\bf \boldsymbol \beta)}{\partial\beta_j}>0 \quad
\mbox { for } -\varepsilon_n<\beta_j<0.
\end{equation}

First note that by (\ref{eq:ln}), we obtain $\| \ell^{(1)}_n(w; \boldsymbol \beta_0)\|=O_\mathrm{P}(|D_n|^{1/2})$. Second, by conditions  ($\mathcal C$.\ref{C:Theta})-($\mathcal C$.\ref{C:cov}), there exists $t\in (0,1)$ such that 
\begin{align*}
\frac {\partial \ell_n(w;\boldsymbol \beta)}{\partial\beta_j}&=\frac {\partial \ell_n{(w;\boldsymbol \beta_0)}}{\partial\beta_j}+ t {\sum_{l=1}^p \frac {\partial^2 \ell_n{(w;\boldsymbol \beta_0 + t (\boldsymbol \beta -\boldsymbol\beta_0))}}{\partial\beta_j \partial\beta_l}}(\beta_l-\beta_{0l}) \\
&=O_\mathrm{P}(|D_n|^{1/2})+O_\mathrm{P}(|D_n||D_n|^{-1/2})=O_\mathrm{P}(|D_n|^{1/2}).
\end{align*}
Third, let $0<\beta_j<\varepsilon_n$ and $b_n$ the sequence given by~(\ref{eq:bn}). By condition ($\mathcal C$.\ref{C:plambda}), $b_n$ is well-defined and since by assumption $b_n|D_n|^{1/2}\to \infty$, in particular,  $b_n>0$ for $n$ sufficiently large. Therefore, for $n$ sufficiently large,
\begin{align*}
\mathrm{P} \left ( \frac {\partial Q(w;\boldsymbol \beta)}{\partial\beta_j}<0 \right)&=\mathrm{P} \left ( \frac {\partial \ell_n(w;\boldsymbol \beta)}{\partial\beta_j} - |D_n|p'_{\lambda_{n,j}}(|\beta_j|)\sign(\beta_j)<0 \right)\\
&=\mathrm{P} \left ( \frac {\partial \ell_n(w;\boldsymbol \beta)}{\partial\beta_j}< |D_n|p'_{\lambda_{n,j}}(|\beta_j|) \right)\\
& \geq \mathrm{P} \left ( \frac {\partial \ell_n(w;\boldsymbol \beta)}{\partial\beta_j}< |D_n|b_n \right)\\
&= \mathrm{P} \left ( \frac {\partial \ell_n(w;\boldsymbol \beta)}{\partial\beta_j}< |D_n|^{1/2}|D_n|^{1/2}b_n \right).
\end{align*}
$\mathrm{P} \left ( {\partial Q(w;\boldsymbol \beta)}/{\partial\beta_j}<0 \right) \xrightarrow{}1 \mbox{ as } n \to \infty$
since $ {\partial \ell_n(w;\boldsymbol \beta)}/{\partial\beta_j}=O_\mathrm{P}(|D_n|^{1/2})$ and $b_n|D_n|^{1/2} \xrightarrow{} \infty$. This proves (\ref{eq:lem1}). We proceed similarly to prove (\ref{eq:lem1b}).
\end{proof}

%%%%%%%%%%%%%%%%%%%%%%%%%%%%%%%%%%%%%
%%%%%%%%%%%%%%%%%%%%%%%%%%%%%%%%%%%%%

\bigskip

\begin{proof} We now focus on the proof of Theorem~\ref{the2}. Since Theorem~\ref{the2}(i) is proved by Lemma~\ref{lemma1}, we only need to prove Theorem~\ref{the2}(ii), which is the asymptotic normality of $\boldsymbol {\hat{\beta}}_1$. As shown in Theorem~\ref{the1}, there is a root-$|D_n|$ consistent local maximizer $\boldsymbol{\hat{\beta}}$ of $Q(w;\boldsymbol \beta)$, and it can be shown that there exists an estimator $\boldsymbol {\hat{\beta}}_1$ in Theorem~\ref{the1} that is a root-$(|D_n|)$ consistent local maximizer of $ Q\Big(w;({\boldsymbol \beta_1}^\top,\mathbf{0}^\top)^\top \Big)$, which is regarded as a function of  $\boldsymbol {\beta}_1$, and that satisfies
\begin{align*}
\frac {\partial Q(w;\boldsymbol {\hat \beta})}{\partial\beta_j}=0 \quad
\mbox { for } j=1,\ldots,s
\mbox {, and } \boldsymbol{\hat \beta}=( \boldsymbol {\hat{\beta}}_1^\top,\mathbf{0}^ \top)^\top.
\end{align*}
There exists $t\in (0,1)$ and $\boldsymbol{\breve{\beta}}= \boldsymbol{\hat \beta} + t(\boldsymbol\beta_0-\boldsymbol{\hat \beta})$  such that
\begin{align}
0
=&\frac {\partial \ell_n{(w;\boldsymbol{\hat \beta})}}{\partial\beta_j}-|D_n|p'_{\lambda_{n,j}}(|\hat \beta_{j}|)\sign(\hat \beta_j) \nonumber\\
=&\frac {\partial \ell_n{(w;\boldsymbol \beta_0)}}{\partial\beta_j}+{\sum_{l=1}^s \frac {\partial^2 \ell_n{(w; \boldsymbol{\breve{\beta}})}}{\partial\beta_j \partial\beta_l}}({\hat \beta_l}-\beta_{0l})-|D_n|p'_{\lambda_{n,j}}(|\hat \beta_{j}|)\sign(\hat \beta_j) \nonumber\\
=&\frac {\partial \ell_n{(w;\boldsymbol \beta_0)}}{\partial\beta_j}+{\sum_{l=1}^s \frac {\partial^2 \ell_n{(w; \boldsymbol \beta_0)}}{\partial\beta_j \partial\beta_l}}({\hat \beta_l}-\beta_{0l})+{\sum_{l=1}^s \Psi_{n,jl}({\hat \beta_l}-\beta_{0l})} \nonumber \\
&-|D_n|p'_{\lambda_{n,j}}(|\beta_{0j}|)\sign(\beta_{0j})-|D_n|\phi_{n,j}, \label{eq:0equal}
\end{align}
where 
\begin{align*}
\Psi_{n,jl}=\frac {\partial^2 \ell_n{(w;\boldsymbol{\breve{\beta}})}}{\partial\beta_j \partial\beta_l}-\frac {\partial^2 \ell_n{(w;\boldsymbol \beta_0)}}{\partial\beta_j \partial\beta_l}
\end{align*}
and $\phi_{n,j}=p'_{\lambda_{n,j}}(|\hat \beta_{j}|)\sign(\hat \beta_j)-p'_{\lambda_{n,j}}(|\beta_{0j}|)\sign(\beta_{0j})$. We decompose $\phi_{n,j}$ as $\phi_{n,j}=T_1+T_2$ where
\[
	T_1=\phi_{n,j} \mathbb{I}(|\hat \beta_j- \beta_{0j}| \leq \tilde r_{n,j}) 
	\quad \mbox{ and }\quad 
	T_2=\phi_{n,j} \mathbb{I}(|\hat \beta_j-\beta_{0j}| > \tilde r_{n,j})
\]
and where $\tilde r_{n,j}$ is the sequence defined in the condition ($\mathcal C$.\ref{C:plambda}). Under this condition, the following Taylor expansion can be derived for the term $T_1$: there exists $t\in (0,1)$ and $\check\beta_j= \hat\beta_j+t(\beta_{0j}-\hat\beta_j)$ such that 
\begin{align*}
 T_1&  =p''_{\lambda_{n,j}}(|\beta_{0j}|)(\hat \beta_j- \beta_{0j}) \mathbb{I}(|\hat \beta_j- \beta_{0j}| \leq \tilde r_{n,j}) \\
 &\quad + \frac12(\hat\beta_{j} - \beta_{0j})^2 p'''_{\lambda_{n,j}}(|\breve{\beta}_{j}|) \mathrm{sign}(\check \beta_j) \mathbb{I}(|\hat \beta_j- \beta_{0j}| \leq \tilde r_{n,j}) \\
 &=p''_{\lambda_{n,j}}(|\beta_{0j}|)(\hat \beta_j- \beta_{0j}) \mathbb{I}(|\hat \beta_j- \beta_{0j}| \leq \tilde r_{n,j}) + O_{\mathrm P}(|D_n|^{-1})
\end{align*}
where the latter equation ensues from Theorem~\ref{the1} and condition~($\mathcal C$.\ref{C:plambda}). Again,
from Theorem~\ref{the1},  $\mathbb{I}(|\hat \beta_j- \beta_{0j}| \leq \tilde r_{n,j}) \xrightarrow{L^1} 1$  which implies that $\mathbb{I}(|\hat \beta_j- \beta_{0j}| \leq \tilde r_{n,j}) \xrightarrow{\mathrm{P}} 1$,  so $T_1=p''_{\lambda_{n,j}}(|{\beta}_{0j}|)(\hat \beta_j- \beta_{0j}) \big(1+o_{\mathrm{P}}(1)\big)+ O_{\mathrm P}(|D_n|^{-1})$.

Regarding the term $T_2$, since $p'_{\lambda}$ is a Lipschitz function, there exists $\kappa \geq 0$ such that
\[
T_2 \leq \kappa |\hat \beta_j-\beta_{0j}| \; \mathbb{I}(|\hat \beta_j-\beta_{0j}| > \tilde r_{n,j}).	
\]
By Theorem~\ref{the1}, $|\hat \beta_j- \beta_{0j}|=O_{\mathrm{P}}(|D_n|^{-1/2})$ and $\mathbb{I}(|\hat \beta_j-\beta_{0j}| > \tilde r_{n,j}) =o_{\mathrm{P}}(1)$, so $T_2 = o_{\mathrm{P}}(|D_n|^{-1/2})$ and we deduce that 
\begin{equation} \label{eq:phinj}
	\phi_{n,j} = p''_{\lambda_{n,j}}(|{\beta}_{0j}|)(\hat \beta_j- \beta_{0j}) \big(1+o_{\mathrm{P}}(1)\big)+ o_{\mathrm P}(|D_n|^{-1/2}).
\end{equation}

Let $\ell^{(1)}_{n,1}(w;\boldsymbol \beta_{0})$ (resp. $\ell^{(2)}_{n,1}(w;\boldsymbol \beta_{0})$) be the first $s$ components (resp. $s \times s$ top-left corner) of $\ell^{(1)}_{n}(w;\boldsymbol \beta_{0})$ (resp. $\ell^{(2)}_{n}(w;\boldsymbol \beta_{0})$). Let also $\boldsymbol \Psi_n$ be the $s \times s$ matrix containing $\Psi_{n,jl}, j,l=1,\ldots,s$. Finally, let the vector $\mathbf{p}'_n$, the vector $\boldsymbol \phi_n$ and the $s \times s$ matrix $\mathbf{M}_n$ be
\begin{align*}
\mathbf{p}'_n&=\{p'_{\lambda_{n,1}}(|\beta_{01}|)\sign(\beta_{01}),\ldots,p'_{\lambda_{n,s}}(|\beta_{0s}|)\sign(\beta_{0s})\}^\top, \\
\boldsymbol \phi_n&=\{\phi_{n,1},\ldots,\phi_{n,s}\}^\top, \mbox{ and}\\
\mathbf{M}_n&=\{ \mathbf{B}_{n,11}(w; \boldsymbol \beta_{0})+\mathbf{C}_{n,11}(w; \boldsymbol \beta_{0})\}^{-1/2}.
\end{align*}
We rewrite both sides of~\eqref{eq:0equal} as
\begin{equation}
\ell^{(1)}_{n,1}(w;\boldsymbol \beta_{0})+\ell^{(2)}_{n,1}(w;\boldsymbol \beta_{0})(\boldsymbol{\hat \beta}_1-\boldsymbol \beta_{01})+ \boldsymbol \Psi_n (\boldsymbol{\hat \beta}_1-\boldsymbol \beta_{01}) -|D_n| \mathbf{p}'_n-|D_n| \boldsymbol \phi_n   =0.\label{eq:0vec}
\end{equation}
By definition of $\boldsymbol \Pi_n$ given by (\ref{eq:pi}) and from~\eqref{eq:phinj}, we obtain $\boldsymbol \phi_n=\boldsymbol \Pi_n (\boldsymbol{\hat \beta}_1-\boldsymbol \beta_{01})\big(1+o_{\mathrm{P}}(1)\big)+o_{\mathrm P}(|D_n|^{-1/2})$. Using this, we deduce, by premultiplying both sides of~\eqref{eq:0vec} by $\mathbf M_n$, that
\begin{align*}
\mathbf{M}_n \ell^{(1)}_{n,1}(w;\boldsymbol \beta_{0})-&\mathbf{M}_n \big(\mathbf{A}_{n,11}(w;\boldsymbol \beta_{0})+ |D_n| \boldsymbol \Pi_n\big)(\boldsymbol{\hat \beta}_1-\boldsymbol \beta_{01}) \\
& =O(|D_n| \, \|\mathbf{M}_n \mathbf{p}'_n \|) + o_{\mathrm{P}}(|D_n| \, \|\mathbf{M}_n\boldsymbol \Pi_n (\boldsymbol{\hat \beta}_1-\boldsymbol \beta_{01}) \|)  \\
& \quad+ o_\mathrm{P} (\|\mathbf M_n \| \; |D_n|^{1/2})
+ O_{\mathrm{P}} (\|\mathbf M_n \boldsymbol \Psi_n (\boldsymbol{\hat \beta}_1-\boldsymbol \beta_{01}) \|) .
\end{align*}
 The condition ($\mathcal C$.\ref{C:BnCn}) implies that there exists an $s \times s$ positive definite matrix $\mathbf{I}_0''$ such that for all sufficiently large $n$, we have $|D_n|^{-1} (\mathbf{B}_{n,11}(w;\boldsymbol \beta_0)+\mathbf C_{n,11}(w;\boldsymbol\beta_0) )\geq \mathbf{I}_0''$, hence $\|\mathbf M_n\|=O(|D_n|^{-1/2})$.

Now, $\| \boldsymbol \Psi_n\|=O_{\mathrm{P}}(|D_n|^{1/2})$  by conditions ($\mathcal C$.\ref{C:Theta})-($\mathcal C$.\ref{C:cov}) and by Theorem~\ref{the1}, and $\|\boldsymbol{\hat \beta}_1-\boldsymbol \beta_{01}\|=O_{\mathrm{P}}(|D_n|^{-1/2})$  by Theorem~\ref{the1} and by Theorem~\ref{the2}(i). Finally, since by assumption $a_n=o(|D_n|^{-1/2})$, we deduce that
\begin{align*}
\|\mathbf{M}_n \boldsymbol \Psi_n (\boldsymbol{\hat \beta}_1-\boldsymbol \beta_{01})\|&=O_{\mathrm{P}}(|D_n|^{-1/2})=o_{\mathrm{P}}(1), \\
|D_n| \, \|\mathbf{M}_n\boldsymbol \Pi_n (\boldsymbol{\hat \beta}_1-\boldsymbol \beta_{01}) \| &= o_\mathrm{P}(1),\\
\|\mathbf M_n \| \; |D_n|^{1/2} &= O(1),\\
|D_n| \,\|\mathbf{M}_n \mathbf{p}'_n\|&=O(a_n|D_n|^{1/2})=o(1). \\
\end{align*}
Therefore, we have that
\begin{align*}
\mathbf{M}_n \ell^{(1)}_{n,1}(w;\boldsymbol \beta_{0})-\mathbf{M}_n \big(\mathbf{A}_{n,11}(w;\boldsymbol \beta_{0})+ |D_n| \Pi_n\big)(\boldsymbol{\hat \beta}_1-\boldsymbol \beta_{01}) =o_{\mathrm{P}}(1).
\end{align*}
From (\ref{eq:normal}), Theorem~\ref{the2}(i) and by Slutsky's Theorem, we deduce that
\begin{align*}
\{ \mathbf{B}_{n,11}(w; \boldsymbol \beta_{0})+\mathbf{C}_{n,11}(w; \boldsymbol \beta_{0})\}^{-1/2}
\{\mathbf{A}_{n,11}(w;\boldsymbol \beta_{0})+|D_n| \boldsymbol \Pi_n\}(\boldsymbol{\hat \beta}_1-\boldsymbol \beta_{01})&\xrightarrow{d} \mathcal{N}(0,\mathbf{I}_{s})
\end{align*}
as $n \to \infty$, which can be rewritten, in particular under ($\mathcal C$.\ref{C:An}), as 
\[
|D_n|^{1/2}\boldsymbol \Sigma_n(w;\boldsymbol \beta_{0})^{-1/2}(\boldsymbol{\hat \beta}_1-\boldsymbol \beta_{01})\xrightarrow{d}\mathcal{N}(0,\mathbf{I}_{s})	
\]
where $\mathbf \Sigma_n(w,\boldsymbol \beta_{0})$ is given by~\eqref{eq:Sigman}.
\end{proof}

\section{Map of covariates} \label{mapcov}
\begin{figure}[!ht]
\begin{center}
\graphicspath{{d:/Figure/}}
\renewcommand{\arraystretch}{0}
\begin{tabular}{l l l l l}
\includegraphics[width=0.18\textwidth]{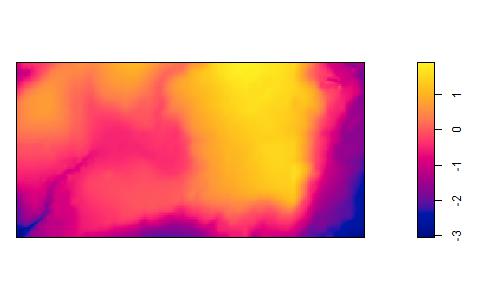} & \includegraphics[width=0.18\textwidth]{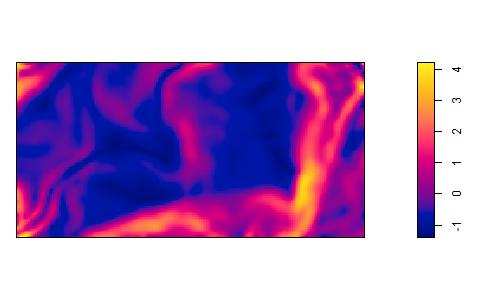} &  \includegraphics[width=0.18\textwidth]{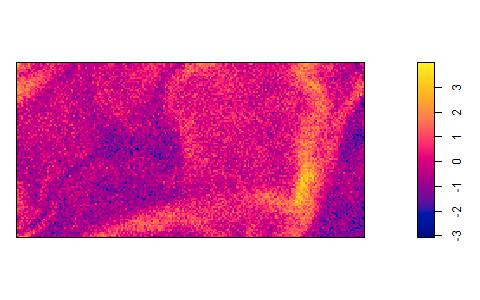} &  \includegraphics[width=0.18\textwidth]{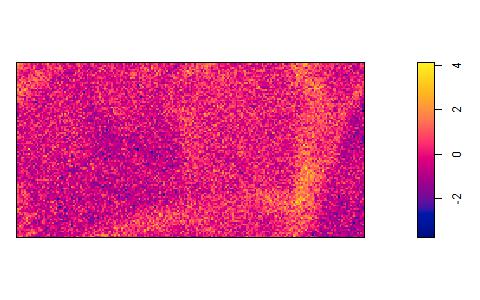} &  \includegraphics[width=0.18\textwidth]{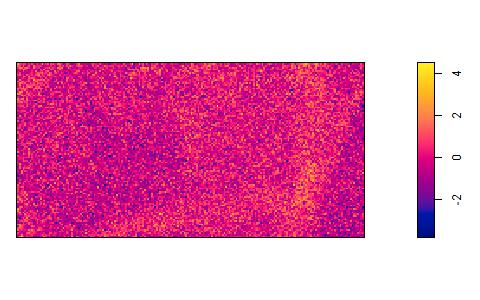}\\
\includegraphics[width=0.18\textwidth]{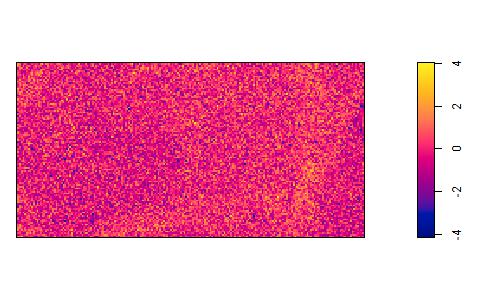} & \includegraphics[width=0.18\textwidth]{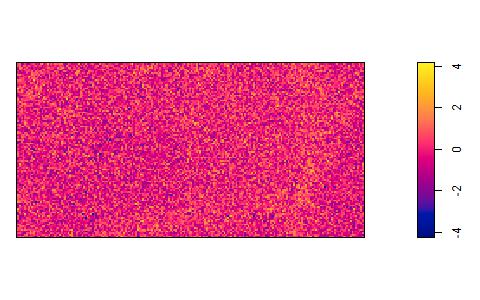} &  \includegraphics[width=0.18\textwidth]{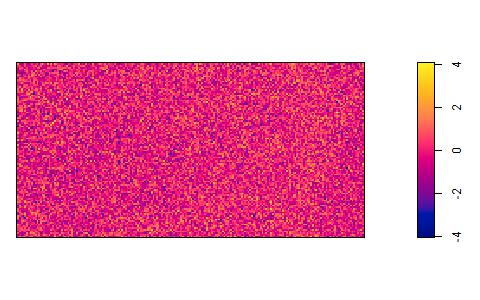} &  \includegraphics[width=0.18\textwidth]{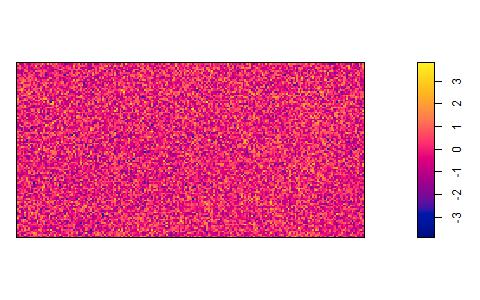} &  \includegraphics[width=0.18\textwidth]{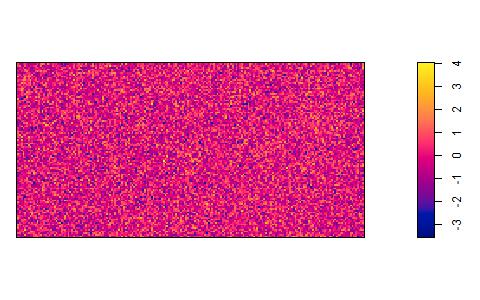}\\
\includegraphics[width=0.18\textwidth]{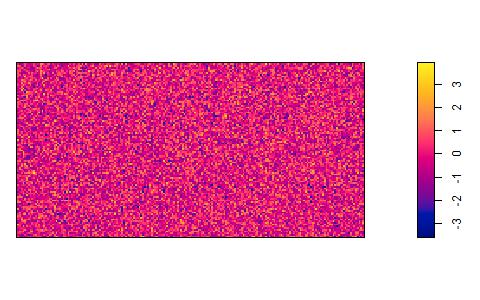} & \includegraphics[width=0.18\textwidth]{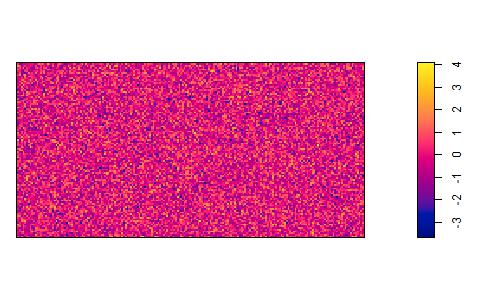} &  \includegraphics[width=0.18\textwidth]{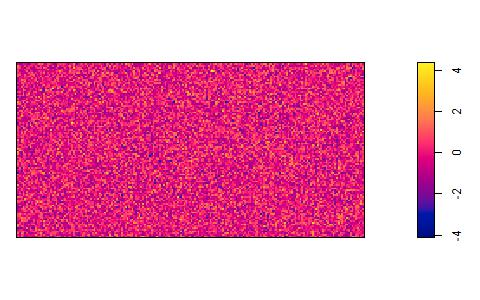} &  \includegraphics[width=0.18\textwidth]{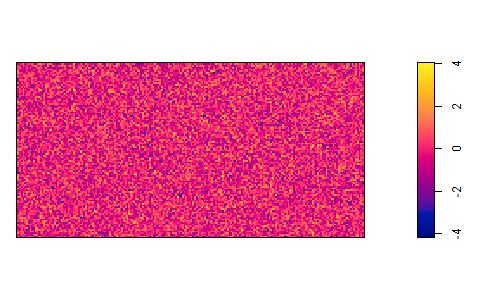} &  \includegraphics[width=0.18\textwidth]{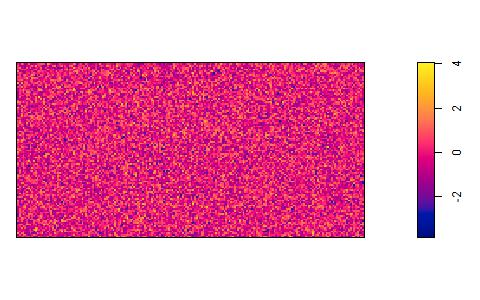}\\
\includegraphics[width=0.18\textwidth]{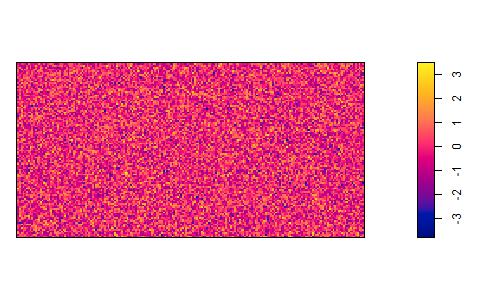} & \includegraphics[width=0.18\textwidth]{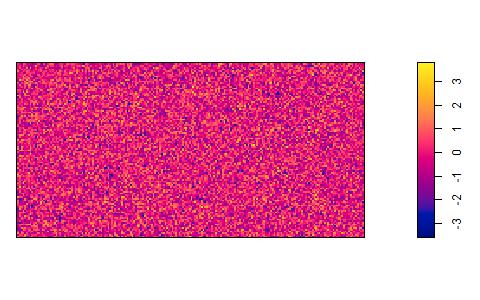} &  \includegraphics[width=0.18\textwidth]{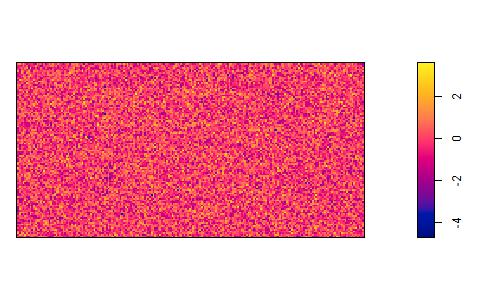} &  \includegraphics[width=0.18\textwidth]{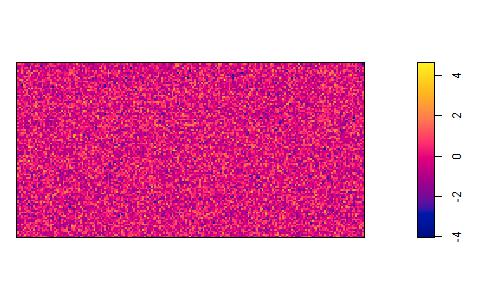} &  \includegraphics[width=0.18\textwidth]{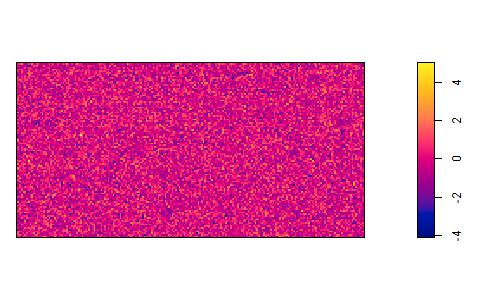}\\
\end{tabular}
\caption{Maps of covariates designed in scenario~\ref{sce2}. The first two  top left images are the elevation and the slope. The other 18 covariates are generated as standard Gaussian white noise but transformed to get multicollinearity.}
\label{sim2}
\end{center}
\end{figure}

\begin{figure}[ht]
\begin{center}
\graphicspath{{d:/Figure/}}
\renewcommand{\arraystretch}{0}
\begin{tabular}{l l l l l}
\includegraphics[width=0.18\textwidth]{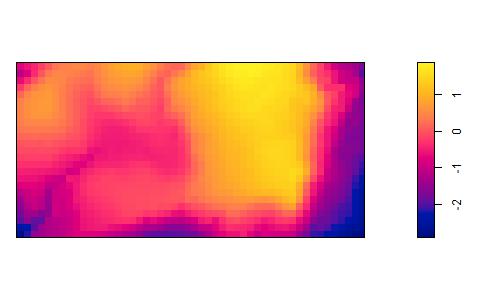} & \includegraphics[width=0.18\textwidth]{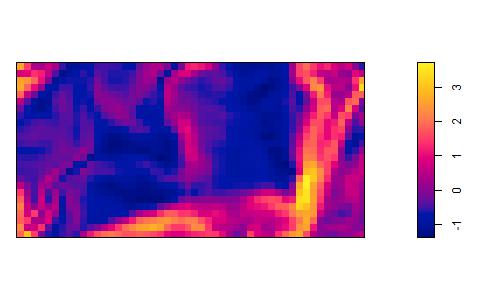} &  \includegraphics[width=0.18\textwidth]{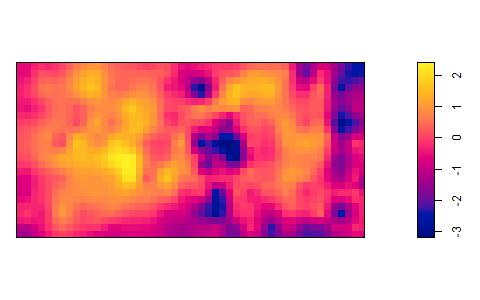} &  \includegraphics[width=0.18\textwidth]{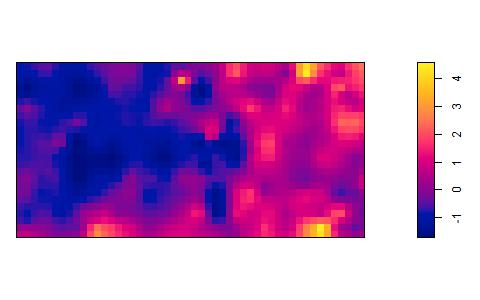} &  \includegraphics[width=0.18\textwidth]{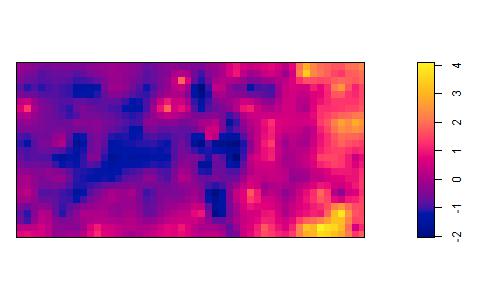}\\
\includegraphics[width=0.18\textwidth]{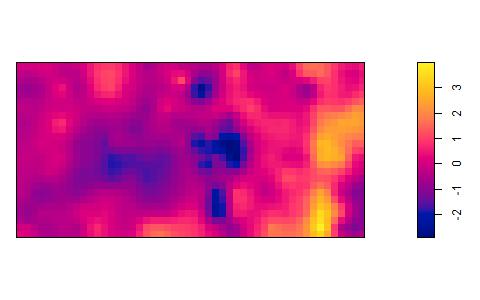} & \includegraphics[width=0.18\textwidth]{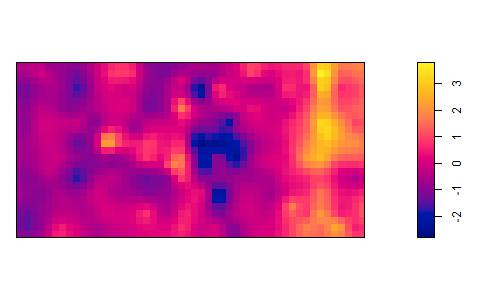} &  \includegraphics[width=0.18\textwidth]{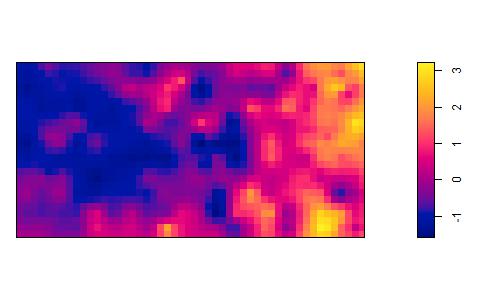} &  \includegraphics[width=0.18\textwidth]{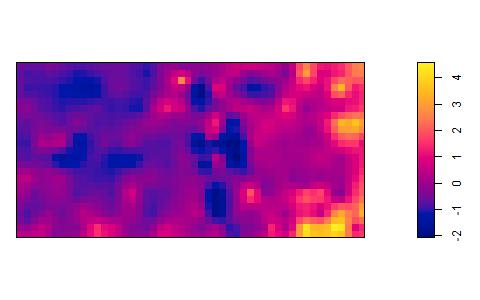} &  \includegraphics[width=0.18\textwidth]{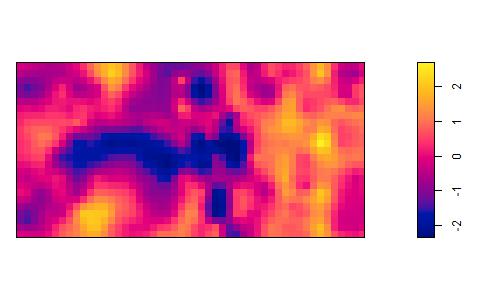}\\
\includegraphics[width=0.18\textwidth]{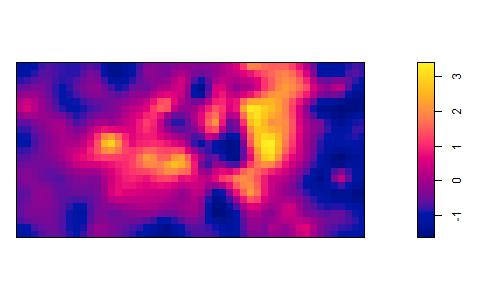} & \includegraphics[width=0.18\textwidth]{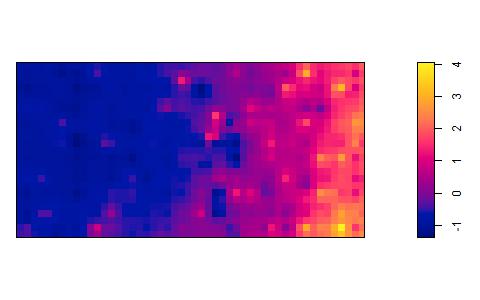} &  \includegraphics[width=0.18\textwidth]{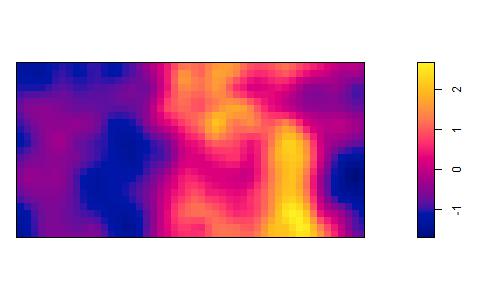} &  \includegraphics[width=0.18\textwidth]{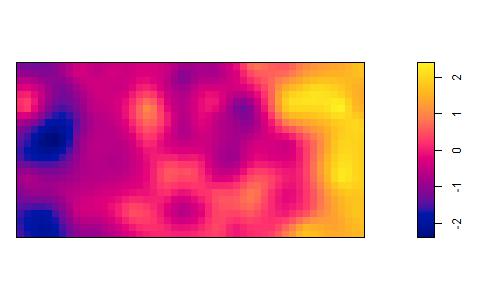} &  \includegraphics[width=0.18\textwidth]{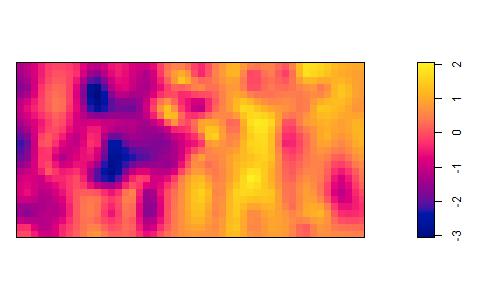}\\
\end{tabular}
\caption{Maps of covariates used in scenario~\ref{sce3} and in application. From left to right: Elevation, slope, Aluminium, Boron, and Calcium (1st row), Copper, Iron, Potassium, Magnesium, and Manganese (2nd row), Phosporus, Zinc, Nitrogen, Nitrigen mineralisation, and pH (3rd row).}
\label{sim3}
\end{center}
\end{figure}

\end{document}